	\definecolor{myblue}{rgb}{0.3, 0.0, 0.85}
	\definecolor{myviolet}{rgb}{0.5, 0.0, 0.5}
	\theoremstyle{plain}
	\newtheorem{defn}{Definition}[section]
	\newtheorem{prop}{Proposition}[section]
	\newtheorem{lem}{Lemma}[section]
	\newtheorem{cor}{Corollary}[section]
	\newtheorem{rem}{Remark}[section]
	\title{Polar Coordinates and Noncommutative Phase Space}
	\author{August J. Krueger}
	\address{Amado Building, Technion Math.\ Dept., Haifa, 32000, Israel}
	\email{ajkrueger@tx.technion.ac.il}
	\date{Oct. 22nd, 2016}
\begin{document}

	\maketitle

	\tableofcontents
	
	\begin{abstract}
	The so-called Weyl transform is a linear map from a commutative algebra of functions to a noncommutative algebra of linear operators, characterized by an action on Cartesian coordinate functions of the form $(x, y) \mapsto (X, Y)$ such that $XY -YX = i\epsilon I$, i.e. the defining relation for the Heisenberg Lie algebra. Study of this transform has been expansive. We summarize many important results from the literature. The primary goal of this work is to prove the final result: the realization of the polar transformation $(\rho, e^{i\theta}) \mapsto (R, e^{i\Theta})$ in terms of explicit orthogonal function expansions, while starting from elementary principles and utilizing minimal machinery. Our results are not strictly original but their presentation here is intended to simplify introduction to these subjects in a novel manner.\\
	\ \\
	\noindent\textsc{MSC(2010)}: 00A69, 81R60, 46L65, 35P05, 33C45.\\
	\ \\
	\noindent\textsc{Keywords}: Noncommutative geometry, harmonic analysis, Weyl transform, deformation quantization, mathematical physics.
	\end{abstract}

	%\section{Personal notes}
	
	%Eli Stein - main. Thangavelu - expositor, focus on Lp estimates. Folland, Geller, Fulvio Ricci - students. Strichartz: Lp estimates on Heisenberg groups. Beals and Greiner:  Calculus on Heisenberg Manifolds. Andre Weil: Acta Mathematica paper on the metaplectic group. Attiyah and Singer: other books. Taylor: Noncommutative Harmonic Analysis.
	
	%Roger Howe - expositor with eye for physicists. Look for his work on reductive dual pairs. Howe and Zhu: representations of the symplectic group. Howe: Dual Pairs in Physics. Howe and Tan: book. Howe: Reductive Pairs and Mathematical Physics. Howe - Casimir-centric representation theory.
	
	%Connections between full spherical harmonics and orthogonal polynomial expansions: \cite{Gr80}. First approach to NC cylindrical harmonics in the literature: Theorem 4.2 of \cite{Ge84} and calculations contained in its proof. Closest and most thorough approach to NC cylindrical harmonics in the literature: various calculations throughout \cite{Th98}.

	\section{Introduction}
	
	The formation of quantum mechanics, in addition to its generous contributions to the natural sciences, brought forth a new wave of previously unforeseen mathematical correspondences and new robust frameworks for the understanding thereof. Standard quantum theory introduces an algebra of operators on an infinite dimensional vector space. The simplest case thereof occurs when considering the quantization of the classical [Hamiltonian] phase space of a point particle moving in one dimension. In this case the algebra is at least formally generated by a pair of operators $X, Y$ (usually written respectively as $Q, P$) which satisfy $[X, Y] = i\epsilon I$, where $[\cdot, \cdot]$ is the commutator, $i \in \mathbb C$ the imaginary unit, $\epsilon \in\mathbb R_+\setminus\{0\}$ (usually written as $\hbar$) a fixed parameter, and $I$ the identity operator. Intuitions from the classical theory and explicit representations of the operator algebra can be brought most to light by essentially asking the following. How one can assign a mapping that replaces $f:(x, y) \mapsto f(x, y)$, a function on $\mathbb R^2$, with an operator $F = f(X, Y)$? Use of the Weyl transform, addressed extensively in this work, is a distinctly ubiquitous answer to this question.
	
	The mathematically rigorous study of the Weyl transform has been extensive. The breadth and depth thereof has been so great that in the literature one can see frequent attempts to condense, simplify, and more clearly communicate these advances to more general audiences. We follow with this tradition in this paper by attempting to elucidate a short and straightforward guide from elementary principles of the Weyl transform through notions of polar representations of transformed functions. We claim no originality for the results presented here however feel that the production of this paper is merited by a need for a clarification and simplification of the ideas that surround the notion of cylindrical harmonics on noncommutative phase space. By this we mean an attempt at answering the following questions. If one represents $(x, y) \in \mathbb R^2$ as $\rho e^{i\theta} \in \mathbb C$ and one takes $f: (x, y) \mapsto f(x, y) = f'(\rho, \theta)$, then what is an appropriate representation of $F = f(X, Y)$ that preserves properties of the polar representation $f'(\rho, \theta)$?
	
	Generally one is concerned with the algebra generated by the elements $\{X_n, Y_n\}_{n = 1}^d$, taken with the identity $I$, that satisfy $[X_n, Y_n] = i\epsilon I$ with all other commutators vanishing. This is precisely the $2d +1$-dimensional Heisenberg Lie algebra $\mathfrak h_d$. For simplicity we restrict attention to the $d = 1$ case and we note that this fails to capture none of the larger theory since the nontrivial commutators occur pairwise. In this sense the theory strictly pertains to noncommutative planes. Much of the mathematical literature focuses instead on the Heisenberg Lie group $\mathfrak H_d$ and harmonic analysis thereupon, see e.g. the excellent texts by Thangavelu \cite{Th98}\cite{Th04}. Often in this perspective one considers $\mathfrak H_d$ as isomorphic to $\mathbb R^{2d +1}$ as a set that carries an algebraic product, with $\mathfrak h_n$ a set of vector fields with a Lie bracket. This approach has been very fruitful with many mathematical applications, c.f. for example the comprehensive survey by Howe \cite{Ho80}. We believe, however, that this ``top down'' approach has been less successful at taking root in more distant branches of mathematics and especially in the natural sciences. We then take the ``bottom up'' approach whereby one considers the algebra generated directly by $\mathfrak h_d$, usually taken with a unitary representation on an infinite dimensional Hilbert space. We consider Folland's excellent text \cite{Fo89} to be the de facto standard text on this approach.
	
	The replacing of a commutative algebra with a noncommutative algebra for the modeling of quantum physics is a substantial pillar of the set of ideas known broadly as quantization. The use of the Weyl transform for this end is known as deformation quantization, which has seen much use for semi-classical approximations and quantum optics. These methods from physics have also been carried over into pure math with micro-local analysis, symbol calculus, and a general study of pseudo-differential operators. There has recently been an interest, in the physics community, in the use of the replacement $(x_n, y_n) \mapsto (X_n, Y_n)$ to modify general PDE on manifolds. This was first strongly encouraged in the context stabilizing the solutions of effective field theories for string theories by Gopakumar, Minwalla, and Strominger in \cite{GoMiSt00}, the ideas of which were brought closer to those of the mathematics community by Chen, Fr\"ohlich, and Walcher in \cite{ChFrWa03}. These works tend to rely on the Moyal star product approach, which introduces a particular highly noncommutative product on otherwise commutative functions. This method requires fewer new mathematical objects but can be rather opaque for the sake of analysis. For a more extensive discussion of these, among others, approaches and applications see, for example, \cite{KrSo16}. We hope our presentation in this paper will encourage members of the high energy physics community to more utilize the Weyl transform.
	
	In this paper we will focus on exact, explicit representations of polar forms (especially cylindrical harmonics) in terms of well-known families of functions. This emphasis is also not original. The first explicit approach to noncommutative cylindrical harmonics in the literature can be seen in Theorem 4.2 of \cite{Ge84} and calculations contained in its proof. It is worth noting that \cite{Ge84} dedicates its time up until the theorem in question reviewing the general theory of the Weyl transform in the context of the Heisenberg group and therefore interested readers may seek a very different presentation there. Connections between spherical (not cylindrical) harmonics and orthogonal polynomial expansions can be found in \cite{Gr80}. We consider the clear and direct texts of Thangavelu \cite{Th93} and \cite{Th98} as being very close to the treatments of orthogonal polynomial expansions and polar forms. We therefore consider these texts to be generous extensions of the subject matter of this paper from a different mathematical perspective, e.g. a focus on $\mathfrak H_n$ instead of $\mathfrak h_n$, but rather similar in spirit. We note that many of the Cartesian and polar function expansions we consider here can be seen in the literature of the quantum optics community, particularly through the so-called Laguerre-Gauss polynomials. One can see, for example, various properties discovered in \cite{KaEA14}, seemingly independently from the more recent mathematical literature.
	
	We are motivated to produce this work due to a perceived need for a new summary of results geared toward particular applications. Recently Acatrinei in \cite{Ac13} explored the notion of polar functions in the sense mentioned above from the perspective of the physics literature. The observations made there about noncommutative cylindrical harmonics were insightful and applications thereby to new results on local decay estimates for Schr\"odinger operators was considered by Kostenko and Teschl in \cite{KoTe1602}. We feel that the perspectives outlined in this paper would be of great benefit to further research along these lines. For simplicity we only consider $\mathfrak h_1$, e.g. the 3-dimensional Heisenberg Lie algebra. In the penultimate section we address the main results in focus, explicit representations of the map $(\rho^{|j|}, e^{ij\theta}) \mapsto (R^{|j|}, e^{ij\Theta})$, $j \in \mathbb Z$, given by the Weyl transform. In the final section we comment on connections between local decay estimates for operators in the pre=image and image of the Weyl transform.

	\section{Notation and conventions}

	We take $\mathbb Z_+ = \{0, 1, \ldots\}$, $\mathbb R_+ = [0, \infty)$. Let \emph{commutative phase space}, $\mathscr C$, be the commutative Hilbert algebra of $L^2(\mathbb R^2, \mathbb C, \mathrm dx\mathrm dy)$ functions, where $(x, y)$ are distinguished coordinates on $\mathbb R^2$ and $\mathrm dx\mathrm dy$ is the Lebesgue measure on $\mathrm R^2$, with pointwise multiplication $(f, g) \mapsto fg$ such that $(fg)(x, y) = f(x, y)g(x, y)$. We denote the inner product on $\mathscr C$ by
	\begin{align}
	\langle f, g\rangle &= \int_{\mathbb R^2}\mathrm dx\mathrm dy\ \overline f(x, y)g(x, y),\quad \forall f, g \in \mathscr C,
	\end{align}
	where $\zeta \mapsto \overline \zeta$ is complex conjugation for all $\zeta \in \mathbb C$ and extends to an action on functions pointwise.
	
	We will always view $\mathscr C$ not just as a Hilbert space but as an algebra so we will refer to the appropriate operators thereupon as automorphisms and derivations. We denote, for each $f \in \mathscr C$, $m_f$ to be the multiplication automorphism which formally acts as $g \mapsto fg$ for $g \in \mathscr C$. We denote the coordinates of the domain of functions of $\mathscr C$ by $(x, y) \in \mathbb R^2$, and distinguished 1D subspaces by the coordinates $x, y \in \mathbb R$. By abuse of notation we take the identity functions on $x, y \in \mathbb R$ to be denoted by $x, y$ and so that $m_xf(x, y) = xf (x, y)$ and $m_yf(x, y) = yf(x, y)$ for all suitable $f \in \mathscr C$. We take by abuse of notation $p_x, p_y$ to be the momentum derivations on $\mathscr C$ that act as
	\begin{align}
	p_xf(x) = -i\epsilon\partial_xf(x, y),\quad p_yf(x, y) = -i\epsilon\partial_yf(x, y)
	\end{align}
	on all suitable $f \in \mathscr C$ for fixed $\epsilon \in \mathbb R_+\setminus\{0\}$. We take the coordinates of the domain of $\mathscr F\mathscr C$ to be denoted by $(\xi_x, \xi_y) \in \mathbb R^2$, where $\mathscr F$ is the Fourier transform
	\begin{align}
	\mathscr Ff(\xi_x, \xi_y) = (2\pi)^{-1}\int_{\mathbb R^2}\mathrm dx\mathrm dy\ e^{-i(\xi_xx +\xi_yy)}f(x, y).
	\end{align}

	Let \emph{noncommutative phase space}, $\mathscr N$, be the noncommutative Hilbert algebra of Hilbert-Schmidt operators on $\mathscr H = L^2(\mathbb R, \mathbb C, \mathrm dx)$, where $x$ is a distinguished coordinate on $\mathbb R$ and $\mathrm dx$ is the Lebesgue measure on $\mathrm R$. Let $\{\chi_n\}_{n = 0}^\infty$ be a basis of $\mathscr H$, such that $\phi = \sum_{n = 0}^\infty\phi_n\chi_n$. One may then write the formal action of $A \in \mathscr N$ on a $\phi \in \mathscr H$ as
	\begin{align}
	A\phi(x) &= \int_\mathbb R\mathrm dx'\ A(x, x')\phi(x'),\quad A\phi_n = \sum_{n' = 0}^\infty A_{n, n'}\phi_{n'}.
	\end{align}
	We denote the inner product on $\mathscr N$ by
	\begin{align}
	\langle A, B\rangle &= \mathrm{tr}(A^\dag B) = \int_{\mathbb R^2}\mathrm dx\mathrm dx'\ \overline A(x, x')B(x, x') = \sum_{n, n' = 0}^\infty\overline A_{n, n'}B_{n, n'}
	\end{align}
	for all $A, B \in \mathscr N$.
	
	We will always view $\mathscr N$ not just as a Hilbert space but as an algebra so we will refer to the appropriate operators thereupon as automorphisms and derivations. We denote by $M^\mathrm L_A, M^\mathrm R_A$, for $A \in \mathscr N$ of suitable regularity, to be respectively the left and right multiplication automorphisms which act as
	\begin{align}
	M^\mathrm L_AB = AB,\quad M^\mathrm R_AB = BA.
	\end{align}
	We denote by $M_A$ the multiplication automorphism which acts as
	\begin{align}
	M_AB = (AB +BA)/2.
	\end{align}
	We write $\mathrm{ad}_A$ for the adjoint derivation which acts as $B \mapsto [A, B]$, where $[A, B] = AB -BA$ is the commutator. We take $\mathrm{Ad}_A$ to be the adjoint group automorphism that acts as $B \mapsto ABA^{-1}$. Let $X, Y \in \mathscr C$ be the coordinate operators which formally act as
	\begin{align}
	X\phi(x) = x\psi(x),\quad Y\phi(x) = -i\epsilon\partial_x\psi(x),
	\end{align}
	for all sufficiently regular $\psi \in \mathscr H$ for fixed $\epsilon \in \mathbb R_+\setminus\{0\}$, and $P_X, P_Y$ the momentum derivations which formally act as
	\begin{align}
	P_XA = \mathrm{ad}_YA,\quad P_YA = -\mathrm{ad}_XA
	\end{align}
	for all $A \in \mathscr C$. We also write $(X, Y) \in \mathscr C^2$.

Let $L^1(\mathbb R^2) \equiv L^1(\mathbb R^2, \mathbb C, \mathrm dx\mathrm dy)$. We denote by $\mathscr S(\mathbb R^2)$ and $\mathscr S(\mathbb R)$ respectively the Schwartz class functions of $\mathscr C$ and $\mathscr H$. We take $\mathscr S^* (\mathbb R^2)$ and $\mathscr S^*(\mathbb R)$ to be the spaces of tempered distributions dual to the associated Schwartz spaces. All extra structure associated with $\mathscr C$ and $\mathscr H$ will be extended to $\mathscr S(\mathbb R^2)$, $\mathscr S(\mathbb R)$, and their duals where applicable. If $f(\cdot, \cdot) \equiv f: \mathbb R^2 \to \mathbb C$ is any complex function on $\mathbb R^2$ with cartesian representation $f: (x, y) \mapsto f(x, y)$ then we take $f'(\cdot, \cdot) \equiv f': \mathbb R^2 \to \mathbb C$ to be the complex function with polar representation $f': (\rho, \theta) \mapsto f'(\rho, \theta)$, where $(\rho, \theta)$ are polar coordinates that satisfy $(x, y) = (\rho\cos\theta, \rho\sin\theta)$. We likewise use the same notational distinction for the Fourier transform of suitably regular functions via $\mathscr Ff: (\xi_x, \xi_y) \mapsto \mathscr Ff(\xi_x, \xi_y)$ and $\mathscr Ff': (\varrho, \vartheta) \mapsto \mathscr Ff'(\varrho, \vartheta)$, where $(\xi_x, \xi_y) = (\varrho\cos\vartheta, \varrho\sin\vartheta)$.

	\section{Overview of the Weyl transform}
	
	\begin{defn}
	We define $\mathscr W$ to be \emph{Weyl transform} formally specified by
	\begin{align}
	\mathscr Wf = (2\pi)^{-1}\int_{\mathbb R^2}\mathrm d\xi_x\mathrm d\xi_y\ e^{i(\xi_xX +\xi_yY)} \mathscr Ff(\xi_x, \xi_y),
	\end{align}
	where $\mathscr F$ is the \emph{Fourier transform}:
	\begin{align}
	\mathscr Ff(\xi_x, \xi_y) = (2\pi)^{-1}\int_{\mathbb R^2}\mathrm dx\mathrm dy\ e^{-i(\xi_xx +\xi_yy)}f(x, y).
	\end{align}
	\end{defn}
	
	The Weyl transform is manifestly linear so one is free to write $\mathscr W(f) \equiv \mathscr Wf$ without ambiguity. The domain and image of the operator remains to be specified. The elements of $\mathscr C$, $f(r) \equiv f(x, y)$, and the integral kernels of linear operators on $\mathscr H$, $A (x, x')$, are similar in form. Therefore, it should be expected that there exist constraints on both spaces which render them manifestly isomorphic. This points to a number of common rigorous definitions of $\mathscr W$:
	\begin{enumerate}
	\item Bijection from $\mathscr V = \{f: \mathscr Ff \in L^1(\mathbb R^2)\}$ to the space of bounded linear operators on $\mathscr H$, reviewed in \cite{Fo89}.
	\item Bijection from $\mathscr S^*(\mathbb R^2)$ to the space of continuous linear maps from $\mathscr S(\mathbb R)$ to $\mathscr S^*(\mathbb R)$, reviewed in \cite{Fo89}.
	\item Unitary bijection from $\mathscr C$ to $\mathscr N$, shown in \cite{Po66}\cite{Se63}.
	\item Injection of $\mathscr S(\mathbb R^2)$ into the space of trace-class operators on $\mathscr H$, shown in \cite{Lo66}\cite{LoMi67}.
	\item Injection of $L^1(\mathbb R^2)$ into the space of compact operators on $\mathscr H$, shown in \cite{Lo66}\cite{LoMi67}.
	\end{enumerate}
	
	Much of the fundamental theory of the Weyl transform, as well as numerous enlightening observations, is thoroughly reviewed by Folland in \cite{Fo89}. Most of our review of the properties of $\mathscr W$ is a direct summary thereof. We direct the reader to this excellent resource for further details. A host of additional results, with an eye for counterintuitive aspects, are explored by Daubechies in \cite{Da83}. We consider this a helpful complement to Folland's monograph \cite{Fo89}.
	
	The motivation behind the definition of $\mathscr W$ is as follows. Consider the coordinate functions $x, y$ as tempered distributions. The Weyl transform gives $\mathscr Wx = X, \mathscr Wy = Y$. There are many similar transforms which provide this correspondence. This family of transforms can be parametrized through the replacement $W_{\xi_x, \xi_y} \mapsto \alpha(\xi_x, \xi_y)W_{\xi_x, \xi_y}$ for some function $(\xi_x, \xi_y) \mapsto \alpha(\xi_x, \xi_y) \in \mathbb C$. The Weyl transform is the transform, unique up to a multiplicative scalar factor, that maps commutative functions to noncommutative functions with symmetric ordering in the following sense. If $f(x, y) = x^my^n$, for some $m, n \in \mathbb Z_+$, then $\mathscr Wf$ is the unique polynomial in $X, Y$ that is invariant under $(X, Y) \mapsto (Y, X)$ and reduces to $f(x, y)$ under the substitution $(X, Y) \mapsto (x, y)$.
	
	In the above sense one may consider $\mathscr C$ as the classical phase space of a point particle moving in 1D and $\mathscr H$ as the associated quantum state space induced by a suitable choice of quantization. The Weyl transform gives a distinguished quantization by essentially sending polynomial functions of $(x, y)$ to symmetrically ordered polynomial functions of $(X, Y)$. Indeed, it is worth noting that the Weyl transform is typically used to present a means of ``quantizing'' a classical system, e.g. transforming Hamilton's equations of motion into Heisenberg's equations of motion. This method is often called \emph{deformation quantization}, as the Weyl transform effectively deforms the commutative algebra $\mathscr C$ into the noncommutative algebra $\mathscr N$. The fundamental relation that determines the structure of $\mathscr N$ is that of $[X, Y] = i\epsilon I$ and therefore one can ``undo'' the deformation by taking $\epsilon \to 0$ instead of applying $\mathscr W^{-1}$. This singular limit is typically the method considered in physics.
	
	 The Weyl transformation furnishes a noncommutative product on $\mathscr C$ specified by $f\star g = \mathscr W^{-1}(\mathscr Wf\mathscr Wg)$. This product is typically called the \emph{Moyal product}. By introducing this product one deforms, i.e. fundamentally changes, the algebraic structure on $\mathscr C$. In this sense the Weyl transform can also be used to replace automorphisms with new ones. The most common way to do this is to first take an automorphism which multiplies by a function and replace it by the function itself. One then deforms the pointwise multiplication of a potential automorphism with a Moyal product: $vf \mapsto v\star f$ or $vf \mapsto f\star v$, where $v, f \in \mathcal C$. By applying a forward Weyl transform one can see that this is equivalent to transforming the two functions separately and then multiplying them: $\mathscr W(v\star f) = VF$ or $\mathscr W(f\star v) = FV$, where $v, f \in \mathscr C$, $V = \mathscr Wv , F = \mathscr Wf$.

	The deformation picture should be carefully distinguished from that of the case where $v$ is an automorphism on $\mathscr C$ and should be transformed as $v \mapsto \mathscr Wv\mathscr W^{-1}$. This situation with $(\mathscr W, \star)$ is analogous to that with $(\mathscr F, *)$. This is to say, that one can use the properties of the Fourier transform to study an operator created by deforming a pointwise multiplication operator into a convolution operator.

	We will not explore deformations in this work and instead will focus on constructively elucidating the action of the Weyl transform as a unitary bijection. We recall that the Weyl transform as a map from $L^2(\mathbb R^2)$ to Hilbert-Schmidt operators on $L^2(\mathbb R)$ is a unitary bijection. Therefore $\mathscr W$ will leave the spectra of operators invariant and provide possibly simpler representations with which to study these spectra. These properties can be used to furnish a number of related results of interest.

	\section{Elements of analysis on noncommutative space}
	
	First one may formally observe that since
	\begin{align}
	&X\psi = x\psi,\quad Y\psi = -i\epsilon\partial_x\psi,
	\end{align}
	it is the case that
	\begin{align}
	&[X, Y]\psi = (XY -YX)\psi = x(-i\epsilon\partial_x)\psi -(-i\epsilon\partial_x)(x\psi)\\
	&\qquad = x(-i\epsilon\partial_x)\psi -[(-i\epsilon\partial_x)(x)\psi +x(-i\epsilon\partial_x)\psi]\\
	&\qquad = i\epsilon\psi,\\
	&[X, Y] = i\epsilon I.
	\end{align}
	Historically this is backward as one would first take the relation $[X, Y] = i\epsilon I$, then observe that this is satisfied by the representation $X\psi = x\psi,\quad Y\psi = -i\epsilon\partial_x\psi$, and then by Stone-von Neumann theorem determine that this representation is unique up to unitary isomorphism. It is the case that, however, each value of $\epsilon \in \mathbb R_+\setminus\{0\}$ (actually $\epsilon \in \mathbb R\setminus\{0\}$ one has a unitarily inequivalent representation. At the level of the generators alone, one may observe how the commutator relation is transformed under translations, rotations, and dilations. If one takes $(X, Y) \mapsto (X', Y')$ with
	\begin{align}
	X' = c\cos\theta X -c\sin\theta Y +aI,\quad Y' = c\sin\theta X +c\cos\theta Y +Ib,\quad a, b, c \in \mathbb R.
	\end{align}
	then one has
	\begin{align}
	[X', Y'] &= [c\cos\theta X -c\sin\theta Y +aI, c\sin\theta X +c\cos\theta Y +bI]\\
	&= c^2\cos(\theta)^2[X, Y] -c^2\sin(\theta)^2[Y, X] = c^2[\cos(\theta)^2 +\sin(\theta)^2][X, Y]\\
	&= c^2[X, Y] = ic^2\epsilon I.
	\end{align}
	One can see then that, up to a global choice of scale, the commutator relation is preserved. Moreover the parameter $\epsilon$ can be viewed as encoding a kind of global scale. The commutative limit, $\epsilon \to 0$, can then be viewed as equivalent to a ``zooming out'' in the sense of scale. One can then expect that there may be correspondences between formulae for $\mathscr C$ and $\mathscr N$ in the spatially asymptotic sense. This can often be the case.

	Since $[X, Y] = i\epsilon I$, the Campbell-Baker-Hausdorf formula for the operators $X, Y$ gives
	\begin{align}
	e^{i\xi_xX}e^{i\xi_yY} &= e^{i\xi_xX +i\xi_yY +[i\xi_xX, i\xi_yY]/2} = e^{i\xi_xX +i\xi_yY -\xi_x\xi_y[X, Y]/2}\\
	&= e^{i\xi_xX +i\xi_yY -i\epsilon\xi_x\xi_yI/2} = e^{-i\epsilon\xi_x\xi_y/2}e^{i(\xi_xX +\xi_yY)}\\
	e^{i(\xi_xX +\xi_yY)} &= e^{i\epsilon\xi_x\xi_y/2}e^{i\xi_xX}e^{i\xi_yY} = e^{-i\epsilon\xi_x\xi_y/2}e^{i\xi_yY}e^{i\xi_xX}
	\end{align}
	From which follows
	\begin{align}
	\mathrm{Ad}_{\exp(i\xi_xX)}e^{i\xi_yY} &= e^{i\xi_xX}e^{i\xi_yY}e^{-i\xi_xX} = e^{i\xi_y(Y -\epsilon\xi_x)},\\
	\mathrm{Ad}_{\exp(i\xi_yY)}e^{i\xi_xX} &= e^{i\xi_yY}e^{i\xi_xX}e^{-i\xi_yY} = e^{i\xi_x(X +\epsilon\xi_y)},
	\end{align}
	and formally
	\begin{align}
	i\epsilon\xi_xe^{i\xi_xX} &= \partial_ae^{i\xi_x(X +\epsilon a)}\downharpoonright_{a = 0} = i\mathrm{ad}_Ye^{i\xi_xX},\\
	i\epsilon\xi_ye^{i\xi_yY} &= \partial_ae^{i\xi_y(Y +\epsilon a)}\downharpoonright_{a = 0} = -i\mathrm{ad}_Xe^{i\xi_yY}.
	\end{align}
	We will also use the formal identities
	\begin{align}
	-i\partial_{\xi_x}e^{i\xi_xX} &= Xe^{i\xi_xX},\\
	-i\partial_{\xi_y}e^{i\xi_yY} &= Ye^{i\xi_yY}.
	\end{align}

	\begin{defn}
	Let $W_{\xi_x, \xi_y}$ be the \emph{Weyl operator} formally specified by
	\begin{align}
	W_{\xi_x, \xi_y} := e^{i(\xi_xX +\xi_y Y)},\quad \xi_x, \xi_y \in \mathbb R.
	\end{align}
	\end{defn}
	
	The above definition allows one to write the transform as
	\begin{align}
	\mathscr Wf = (2\pi)^{-1}\int_{\mathbb R^2}\mathrm d\xi_x\mathrm d\xi_y\ W_{\xi_x, \xi_y}\mathscr Ff(\xi_x, \xi_y),
	\end{align}
	and the inverse transform as
	\begin{align}
	\mathscr W^{-1}A(x, y) = (2\pi)^{-1}\int_{\mathbb R^2}\mathrm d\xi_x\mathrm d\xi_y\ e^{i(\xi_xx +\xi_yy)}\mathrm{tr}(W_{\xi_x, \xi_y}^\dag A).
	\end{align}
	In general it can be difficult to determine the image of the inverse transform explicitly. Fortunately there is a straightforward correspondence property for rank 1 operators.
	
	\begin{lem}
	It is the case that
	\begin{align}
	\langle\phi, W_{\xi_x, \xi_y}\psi\rangle = \int_\mathbb R\mathrm dx\ e^{i\xi_xx}\overline\phi(x -\epsilon\xi_y/2)\psi(x +\epsilon\xi_y/2).
	\end{align}
	\end{lem}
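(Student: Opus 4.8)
The plan is to reduce the exponentiated operator $W_{\xi_x,\xi_y}$ to an explicit composition of a multiplication operator and a translation operator, apply this composition to $\psi$, and then symmetrize the resulting integral by a shift of the integration variable.

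First I would invoke the factorization derived above from the Campbell-Baker-Hausdorff relation,
\[
W_{\xi_x,\xi_y} = e^{i(\xi_x X + \xi_y Y)} = e^{i\epsilon\xi_x\xi_y/2}\,e^{i\xi_x X}\,e^{i\xi_y Y},
\]
which isolates all of the noncommutativity into the scalar phase $e^{i\epsilon\xi_x\xi_y/2}$. The two remaining factors act concretely on $\mathscr H$: since $X$ is multiplication by $x$, the operator $e^{i\xi_x X}$ acts as multiplication by $e^{i\xi_x x}$; and since $Y = -i\epsilon\partial_x$ we have $i\xi_y Y = \epsilon\xi_y\partial_x$, so $e^{i\xi_y Y}$ is the translation $(e^{i\xi_y Y}\psi)(x) = \psi(x+\epsilon\xi_y)$.

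The next step is to compose these factors to obtain the explicit action
\[
(W_{\xi_x,\xi_y}\psi)(x) = e^{i\epsilon\xi_x\xi_y/2}\,e^{i\xi_x x}\,\psi(x+\epsilon\xi_y),
\]
substitute it into the defining integral of the $\mathscr H$ inner product, and then perform the change of variables $x\mapsto x - \epsilon\xi_y/2$. This shift turns the argument of $\overline\phi$ into $x-\epsilon\xi_y/2$ and the argument $x+\epsilon\xi_y$ of $\psi$ into $x+\epsilon\xi_y/2$, while the accumulated phase collapses as $e^{i\epsilon\xi_x\xi_y/2}e^{i\xi_x(x-\epsilon\xi_y/2)}=e^{i\xi_x x}$, producing exactly the claimed symmetric kernel.

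The main obstacle is rigor rather than computation: the translation identity for $e^{i\xi_y Y}$ requires interpreting the exponential of the unbounded generator $Y$, and the factorization above is stated only formally. I would make this precise by restricting first to $\psi,\phi\in\mathscr S(\mathbb R)$, where $Y$ is essentially self-adjoint and $e^{i\xi_y Y}$ is the strongly continuous one-parameter group of translations furnished by Stone's theorem, so that the Taylor series of $e^{\epsilon\xi_y\partial_x}$ genuinely reproduces the shift. On this dense domain both the factorization and the change of variables are fully justified, and the identity then extends to all $\psi\in\mathscr H$ and $\phi\in\mathscr H$ by density, the boundedness of the unitary $W_{\xi_x,\xi_y}$, and continuity of the inner product.
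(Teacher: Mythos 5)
Your proposal is correct and follows essentially the same route as the paper's proof: the Campbell--Baker--Hausdorff factorization $W_{\xi_x,\xi_y} = e^{i\epsilon\xi_x\xi_y/2}e^{i\xi_x X}e^{i\xi_y Y}$, the identification of the two factors as multiplication by $e^{i\xi_x x}$ and translation by $\epsilon\xi_y$, and the final shift $x \mapsto x - \epsilon\xi_y/2$ in the inner-product integral. Your additional remarks on Stone's theorem and extension by density only make explicit the ``sufficiently regular'' caveat the paper leaves informal.
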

	
	\begin{proof}
	One has, for sufficiently regular $\phi, \psi \in \mathscr H$, that
	\begin{align}
	W_{\xi_x, \xi_y}\phi(x) &= e^{i\epsilon\xi_x\xi_y/2}e^{i\xi_xX}e^{i\xi_yY}\phi(x) = e^{i\epsilon\xi_x\xi_y/2}e^{i\xi_xX}e^{\epsilon\xi_y\partial_x}\phi(x)\\
	&= e^{i\epsilon\xi_x\xi_y/2}e^{i\xi_xX}\phi(x +\epsilon\xi_y) = e^{i\epsilon\xi_x\xi_y/2}e^{i\xi_xx}\phi(x +\epsilon\xi_y)\\
	&= e^{i\epsilon\xi_x\xi_y/2 +i\xi_xx}\phi(x +\epsilon\xi_y) = e^{i\xi_x(x +\epsilon\xi_y/2)}\phi(x +\epsilon\xi_y),
	\end{align}
	and therefore
	\begin{align}
	\langle\phi, W_{\xi_x, \xi_y}\psi\rangle &= \int_\mathbb R\mathrm dx\ e^{i\xi_x(x +\epsilon\xi_y/2)}\overline\phi(x)\psi(x +\epsilon\xi_y)\\
	&= \int_\mathbb R\mathrm dx\ e^{i\xi_xx}\overline\phi(x -\epsilon\xi_y/2)\psi(x +\epsilon\xi_y/2).
	\end{align}
	\end{proof}
	
	\begin{prop}
	For all Schwartz class $\phi, \psi \in \mathscr H$ it is the case that
	\begin{align}
	\mathscr W^{-1}(\psi\otimes\phi^*)(x, y) &= 2\epsilon^{-1}\langle\phi, W_{2\epsilon^{-1}y, -2\epsilon^{-1}x}\widehat\psi\rangle,
	\end{align}
	where $\widehat\psi(x) = \psi(-x)$.
	\end{prop}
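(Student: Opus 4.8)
The plan is to evaluate the inverse transform directly from its definition and reduce everything to the rank-1 kernel computation afforded by the preceding Lemma. First I would observe that the rank-1 operator $\psi\otimes\phi^*$ has integral kernel $(\psi\otimes\phi^*)(x,x')=\psi(x)\overline\phi(x')$, so that for any sufficiently regular $B\in\mathscr N$ one has $\mathrm{tr}(B(\psi\otimes\phi^*))=\langle\phi,B\psi\rangle$. Applying this with $B=W_{\xi_x,\xi_y}^\dag$ and using unitarity of the Weyl operator, $W_{\xi_x,\xi_y}^\dag=W_{-\xi_x,-\xi_y}$, the inverse transform formula becomes an iterated integral whose integrand is $e^{i(\xi_xx+\xi_yy)}\langle\phi,W_{-\xi_x,-\xi_y}\psi\rangle$.

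The second step is to insert the explicit expression from the Lemma. Writing the inner spatial integration variable as $u$ to avoid collision with the arguments $(x,y)$ of $\mathscr W^{-1}$, the Lemma gives $\langle\phi,W_{-\xi_x,-\xi_y}\psi\rangle=\int_{\mathbb R}\mathrm du\,e^{-i\xi_xu}\overline\phi(u+\epsilon\xi_y/2)\psi(u-\epsilon\xi_y/2)$. I would then perform the $\xi_x$ integration first: since $\int_{\mathbb R}\mathrm d\xi_x\,e^{i\xi_x(x-u)}=2\pi\delta(x-u)$, the delta collapses the $u$ integral and sets $u=x$, leaving the single integral $\int_{\mathbb R}\mathrm d\xi_y\,e^{i\xi_yy}\overline\phi(x+\epsilon\xi_y/2)\psi(x-\epsilon\xi_y/2)$.

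The final step is a change of variables and a second appeal to the Lemma. Substituting $s=\epsilon\xi_y/2$ turns the above into $2\epsilon^{-1}\int_{\mathbb R}\mathrm ds\,e^{i2\epsilon^{-1}ys}\overline\phi(x+s)\psi(x-s)$. On the other hand, I would compute the right-hand side of the claim by applying the Lemma with $(\xi_x,\xi_y)=(2\epsilon^{-1}y,-2\epsilon^{-1}x)$ and the reflected function $\widehat\psi$; the halved shifts evaluate to $\epsilon\xi_y/2=-x$, giving $\overline\phi(u+x)$ and $\widehat\psi(u-x)=\psi(x-u)$, so that $\langle\phi,W_{2\epsilon^{-1}y,-2\epsilon^{-1}x}\widehat\psi\rangle=\int_{\mathbb R}\mathrm du\,e^{i2\epsilon^{-1}yu}\overline\phi(u+x)\psi(x-u)$, which matches the previous line verbatim after renaming $s\leftrightarrow u$. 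This establishes the identity.

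I expect the main obstacle to be bookkeeping rather than conceptual: keeping the two roles of the symbols $x,y$ separate (as transform arguments versus integration variables), getting the signs in the reflection $\widehat\psi(u-x)=\psi(x-u)$ and in the halved shifts correct, and confirming that the Fourier-inversion/delta step is legitimate. The latter is unproblematic for Schwartz $\phi,\psi$, where all integrals converge absolutely so that Fubini together with the Fourier inversion theorem apply, and the delta manipulation can be replaced by a rigorous iterated-integral argument if desired.
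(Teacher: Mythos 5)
Your proposal is correct and follows essentially the same route as the paper's proof: reduce the trace to $\langle\phi, W_{-\xi_x,-\xi_y}\psi\rangle$, insert the Lemma's kernel formula, collapse the $\xi_x$ integration by Fourier inversion, rescale $\xi_y$, and recognize the result as the Lemma applied at $(2\epsilon^{-1}y,-2\epsilon^{-1}x)$ to $\widehat\psi$. The only cosmetic differences are your explicit delta-function bookkeeping and distinct variable names, which the paper handles implicitly.
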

	
	\begin{proof}
	Consider the operator $A = \psi\otimes\phi^*$. One has that
	\begin{align}
	\mathscr W^{-1}A(x, y) &= (2\pi)^{-1}\int_{\mathbb R^2}\mathrm d\xi_x\mathrm d\xi_y\ e^{i(\xi_xx +\xi_yy)}\mathrm{tr}(W_{\xi_x, \xi_y}^\dag A)\\
	&= (2\pi)^{-1}\int_{\mathbb R^2}\mathrm d\xi_x\mathrm d\xi_y\ e^{i(\xi_xx +\xi_yy)}\mathrm{tr}(W_{-\xi_x, -\xi_y}A)\\
	&= (2\pi)^{-1}\int_{\mathbb R^2}\mathrm d\xi_x\mathrm d\xi_y\ e^{i(\xi_xx +\xi_yy)}\langle\phi, W_{-\xi_x, -\xi_y}\psi\rangle\\
	&= (2\pi)^{-1}\int_{\mathbb R^2}\mathrm d\xi_x\mathrm d\xi_y\ e^{i(\xi_xx +\xi_yy)}\\
	&\qquad \times\int_\mathbb R\mathrm dx'\ e^{-i\xi_xx'}\overline\phi(x' +\epsilon\xi_y/2)\psi(x' -\epsilon\xi_y/2).
	\end{align}
	
	Then by Fourier inversion one has
	\begin{align}
	\mathscr W^{-1}A(x, y) &= \int_\mathbb R\mathrm d\xi_y\ e^{i\xi_yy}\overline\phi(x +\epsilon\xi_y/2)\psi(x -\epsilon\xi_y/2)\\
	&= (2/\epsilon)\int_\mathbb R\mathrm d\xi_y\ e^{i\xi_y2\epsilon^{-1}y}\overline\phi(\xi_y +x)\psi(-\xi_y +x)\\
	&= (2/\epsilon)\int_\mathbb R\mathrm d\xi_y\ e^{i\xi_y2\epsilon^{-1}y}\overline\phi(\xi_y +x)\widehat\psi(\xi_y -x)\\
	&= (2/\epsilon)\int_\mathbb R\mathrm d\xi_y\ e^{i2\epsilon^{-1}y\xi_y}\overline\phi(\xi_y -\epsilon\{-2\epsilon^{-1}x\}/2)\widehat\psi(\xi_y +\epsilon\{-2\epsilon^{-1}x\}/2),
	\end{align}
	where $\widehat\psi(x) = \psi(-x)$. Thus
	\begin{align}
	\mathscr W^{-1}(\psi\otimes\phi^*)(x, y) &= 2\epsilon^{-1}\langle\phi, W_{2\epsilon^{-1}y, -2\epsilon^{-1}x}\widehat\psi\rangle.
	\end{align}
	\end{proof}
	
	\begin{rem}
	By taking the closure of Schwartz class functions in $\mathscr H$ the above correspondence property extends to all $\phi, \psi \in \mathscr H$.
	\end{rem}
	
	\begin{prop}
	It is the case that
	\begin{align}
	\mathscr W(p_xf) &= P_X\mathscr Wf,\quad \mathscr W(p_yf) = P_Y\mathscr Wf.
	\end{align}
	\end{prop}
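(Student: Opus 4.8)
The plan is to reduce both identities to a single computation on the Weyl operators $W_{\xi_x, \xi_y}$, exploiting the fact that $\mathscr Wf$ is assembled as a superposition of these operators weighted by $\mathscr Ff$. First I would determine how the momentum derivations of $\mathscr C$ behave under the Fourier transform. Using the stated convention for $\mathscr F$ together with integration by parts, one gets $\mathscr F(\partial_x f) = i\xi_x \mathscr Ff$, and hence, since $p_x = -i\epsilon\partial_x$, the relation $\mathscr F(p_x f) = \epsilon\xi_x \mathscr Ff$; symmetrically $\mathscr F(p_y f) = \epsilon\xi_y \mathscr Ff$. Substituting this into the definition of $\mathscr W$ yields $\mathscr W(p_x f) = (2\pi)^{-1}\int_{\mathbb R^2} W_{\xi_x, \xi_y}\,\epsilon\xi_x\,\mathscr Ff\,\mathrm d\xi_x\mathrm d\xi_y$, so the scalar factor $\epsilon\xi_x$ is precisely what the action of $P_X = \mathrm{ad}_Y$ must reproduce inside the integral.

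Second I would establish the two intertwining relations at the level of Weyl operators, namely $\mathrm{ad}_Y W_{\xi_x, \xi_y} = \epsilon\xi_x W_{\xi_x, \xi_y}$ and $-\mathrm{ad}_X W_{\xi_x, \xi_y} = \epsilon\xi_y W_{\xi_x, \xi_y}$. Both follow from reordering identities already derived. Writing $W_{\xi_x, \xi_y} = e^{i\epsilon\xi_x\xi_y/2} e^{i\xi_x X} e^{i\xi_y Y}$, using that $Y$ commutes with $e^{i\xi_y Y}$, and invoking the formal identity $\mathrm{ad}_Y e^{i\xi_x X} = \epsilon\xi_x e^{i\xi_x X}$ gives the first relation. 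The second is obtained symmetrically from the opposite ordering $W_{\xi_x, \xi_y} = e^{-i\epsilon\xi_x\xi_y/2} e^{i\xi_y Y} e^{i\xi_x X}$, the commutation of $X$ with $e^{i\xi_x X}$, and $\mathrm{ad}_X e^{i\xi_y Y} = -\epsilon\xi_y e^{i\xi_y Y}$.

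Third I would assemble the result. Applying $P_X = \mathrm{ad}_Y$ to $\mathscr Wf$ and moving the derivation inside the integral gives $P_X\mathscr Wf = (2\pi)^{-1}\int_{\mathbb R^2}(\mathrm{ad}_Y W_{\xi_x, \xi_y})\mathscr Ff\,\mathrm d\xi_x\mathrm d\xi_y = (2\pi)^{-1}\int_{\mathbb R^2}\epsilon\xi_x W_{\xi_x, \xi_y}\mathscr Ff\,\mathrm d\xi_x\mathrm d\xi_y$, which coincides with the expression for $\mathscr W(p_x f)$ from the first step; the $P_Y$ statement follows identically using $P_Y = -\mathrm{ad}_X$ and $\mathscr F(p_y f) = \epsilon\xi_y\mathscr Ff$.

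The main obstacle is justifying the interchange of the unbounded derivations $\mathrm{ad}_Y$ and $\mathrm{ad}_X$ with the operator-valued integral defining $\mathscr Wf$, since each $W_{\xi_x, \xi_y}$ is unitary and hence not individually Hilbert-Schmidt. I would make this rigorous by restricting to Schwartz class $f$ and passing to a weak formulation: pairing both candidate operators against an arbitrary rank-one $u\otimes v^*$ and invoking the explicit kernel formula for $\langle v, W_{\xi_x, \xi_y}u\rangle$ from the Lemma reduces every expression to an absolutely convergent scalar integral, after which the intertwining relation of the second step, combined with Fubini and the rapid decay of $\mathscr Ff$, legitimizes moving the derivation under the integral sign. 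This simultaneously certifies that $\mathscr Wf$ lies in the domain of $P_X$ and $P_Y$. The general case then follows from the density of $\mathscr S(\mathbb R^2)$ in $\mathscr C$ together with the closedness of the derivations.
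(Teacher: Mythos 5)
Your proposal is correct and follows essentially the same route as the paper: both convert $p_x$ into multiplication by $\epsilon\xi_x$ under $\mathscr F$, then use the ordered factorization of $W_{\xi_x,\xi_y}$ together with the identities $\mathrm{ad}_Y e^{i\xi_x X} = \epsilon\xi_x e^{i\xi_x X}$ and $\mathrm{ad}_X e^{i\xi_y Y} = -\epsilon\xi_y e^{i\xi_y Y}$ to pull the derivation outside the integral. Your closing paragraph on justifying the interchange via a weak pairing against rank-one operators and density of $\mathscr S(\mathbb R^2)$ is a welcome addition of rigor that the paper leaves implicit.
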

	
	\begin{proof}
	One has on Schwartz class functions that
	\begin{align}
	\mathscr W(-i\epsilon\partial_xf) &= (2\pi)^{-1}\int_{\mathbb R^2}\mathrm d\xi_x\mathrm d\xi_y\ e^{i(\xi_xX +\xi_yY)}\mathscr F(-i\epsilon\partial_xf)(\xi_x, \xi_y)\\
	&= (2\pi)^{-1}\int_{\mathbb R^2}\mathrm d\xi_x\mathrm d\xi_y\ e^{i(\xi_xX +\xi_yY)}\epsilon\xi_x\mathscr Ff(\xi_x, \xi_y)\\
	&= (2\pi)^{-1}\int_{\mathbb R^2}\mathrm d\xi_x\mathrm d\xi_y\ \epsilon\xi_xe^{i(\xi_xX +\xi_yY)}\mathscr Ff(\xi_x, \xi_y )\\
	&= (2\pi)^{-1}\int_{\mathbb R^2}\mathrm d\xi_x\mathrm d\xi_y\ \epsilon\xi_xe^{i\xi_x(X +\epsilon\xi_y/2)}e^{i\xi_yY}\mathscr Ff(\xi_x, \xi_y)\\
	&= (2\pi)^{-1}\int_{\mathbb R^2}\mathrm d\xi_x\mathrm d\xi_y\ \mathrm{ad}_Ye^{i\xi_x(X +\epsilon\xi_y/2)}e^{i\xi_yY}\mathscr Ff(\xi_x, \xi_y)\\
	&= \mathrm{ad}_Y(2\pi)^{-1}\int_{\mathbb R^2}\mathrm d\xi_x\mathrm d\xi_y\ e^{i\xi_x(X +\epsilon\xi_y/2)}e^{i\xi_yY}\mathscr Ff(\xi_x, \xi_y)\\
	&= \mathrm{ad}_Y\mathscr Wf.
	\end{align}
	A similar calculation holds for $\mathscr W(-i\epsilon\partial_yf)$ which yields a result that differs by an overall sign.
	\end{proof}

	The above correspondence holds for all transforms related to the Weyl one through the replacement $W_{\xi_x, \xi_y} \mapsto \alpha(\xi_x, \xi_y)W_{\xi_x, \xi_y}$. There is an important correspondence which holds for alone for transforms related to the Weyl transform through an overall multiplicative scalar factor.
	\begin{prop}
	It is the case that $\mathscr W(m_xf) = M_X\mathscr Wf, \mathscr W(m_yf) = M_Y\mathscr Wf$.
	\end{prop}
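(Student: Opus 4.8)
The plan is to follow the proof of the preceding proposition almost verbatim, replacing the one-sided identity $-i\partial_{\xi_x}e^{i\xi_xX} = Xe^{i\xi_xX}$ with its symmetrically ordered analogue. First I would note that for Schwartz class $f$ one has $\mathscr F(m_xf)(\xi_x, \xi_y) = i\partial_{\xi_x}\mathscr Ff(\xi_x, \xi_y)$, since multiplication by $x$ is conjugate under $\mathscr F$ to $i\partial_{\xi_x}$. Substituting into the definition of the transform gives
\begin{align}
\mathscr W(m_xf) &= (2\pi)^{-1}\int_{\mathbb R^2}\mathrm d\xi_x\mathrm d\xi_y\ e^{i(\xi_xX +\xi_yY)}\,i\partial_{\xi_x}\mathscr Ff(\xi_x, \xi_y).
\end{align}
Integrating by parts in $\xi_x$, with the boundary terms vanishing by the rapid decay of $\mathscr Ff$, transfers the derivative onto the Weyl operator, so the entire computation hinges on evaluating $\partial_{\xi_x}e^{i(\xi_xX +\xi_yY)}$.

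The central step is to prove the symmetrized ordering identity
\begin{align}
-i\partial_{\xi_x}e^{i(\xi_xX +\xi_yY)} &= M_Xe^{i(\xi_xX +\xi_yY)},
\end{align}
which I would obtain by differentiating each of the two Campbell-Baker-Hausdorff factorizations already recorded above. Because $X$ commutes with $e^{i\xi_xX}$ and with the scalar phase, differentiating $e^{i(\xi_xX +\xi_yY)} = e^{i\epsilon\xi_x\xi_y/2}e^{i\xi_xX}e^{i\xi_yY}$ in $\xi_x$ places $X$ on the left,
\begin{align}
\partial_{\xi_x}e^{i(\xi_xX +\xi_yY)} &= \left(iX +\tfrac{i\epsilon\xi_y}{2}\right)e^{i(\xi_xX +\xi_yY)},
\end{align}
while differentiating $e^{i(\xi_xX +\xi_yY)} = e^{-i\epsilon\xi_x\xi_y/2}e^{i\xi_yY}e^{i\xi_xX}$ places $X$ on the right,
\begin{align}
\partial_{\xi_x}e^{i(\xi_xX +\xi_yY)} &= e^{i(\xi_xX +\xi_yY)}\left(iX -\tfrac{i\epsilon\xi_y}{2}\right).
\end{align}
Averaging the two expressions, the $\epsilon\xi_y/2$ terms cancel exactly (being scalars they commute freely), and the remainder collapses to $\tfrac{i}{2}\bigl(Xe^{i(\xi_xX +\xi_yY)} +e^{i(\xi_xX +\xi_yY)}X\bigr) = iM_Xe^{i(\xi_xX +\xi_yY)}$; multiplying through by $-i$ gives the claimed identity.

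With this identity in hand, substituting $\partial_{\xi_x}e^{i(\xi_xX +\xi_yY)} = iM_Xe^{i(\xi_xX +\xi_yY)}$ into the integrated-by-parts integral, using $i\cdot i = -1$ to cancel the leading sign, and pulling the variable-independent operator $M_X$ outside yields
\begin{align}
\mathscr W(m_xf) &= M_X(2\pi)^{-1}\int_{\mathbb R^2}\mathrm d\xi_x\mathrm d\xi_y\ e^{i(\xi_xX +\xi_yY)}\mathscr Ff(\xi_x, \xi_y) = M_X\mathscr Wf.
\end{align}
The case of $m_y$ is identical after replacing $\partial_{\xi_x}$ by $\partial_{\xi_y}$ and $X$ by $Y$; here the phases contribute an $\epsilon\xi_x/2$ factor instead, but the same cancellation upon averaging produces $M_Y$ with no change of sign, in contrast to the antisymmetric momentum correspondence of the previous proposition. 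I expect the main obstacle to be precisely the ordering identity: one must verify that the two phase contributions cancel on averaging, which is the algebraic reflection of the fact (noted above) that this correspondence is special to the symmetric Weyl ordering and fails for transforms differing from $\mathscr W$ by more than an overall scalar.
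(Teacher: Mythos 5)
Your proposal is correct and follows essentially the same route as the paper: conjugate $m_x$ through $\mathscr F$ to $i\partial_{\xi_x}$, integrate by parts, and evaluate $\partial_{\xi_x}e^{i(\xi_xX+\xi_yY)}$ via the Campbell--Baker--Hausdorff factorization. The only difference is cosmetic: the paper uses a single factorization and converts the residual scalar $2^{-1}\epsilon\xi_y$ into $-2^{-1}\mathrm{ad}_X$ via the earlier formal identity, arriving at $M^\mathrm L_X - 2^{-1}\mathrm{ad}_X = M_X$, whereas you average the two orderings so that the scalar terms cancel directly; both yield the same symmetric-ordering identity $-i\partial_{\xi_x}W_{\xi_x,\xi_y} = M_XW_{\xi_x,\xi_y}$.
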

	
	\begin{proof}
	On Schwartz class functions one may find
	\begin{align}
	\mathscr W(m_xf) &= (2\pi)^{-1}\int_{\mathbb R^2}\mathrm d\xi_x\mathrm d\xi_y\ e^{i(\xi_xX +\xi_yY)}\mathscr F[xf(\xi_x, \xi_y)]\\
	&= (2\pi)^{-1}\int_{\mathbb R^2}\mathrm d\xi_x\mathrm d\xi_y\ e^{i(\xi_xX +\xi_yY)}i\partial_{\xi_x}\mathscr Ff(\xi_x, \xi_y)\\
	&= (2\pi)^{-1}\int_{\mathbb R^2}\mathrm d\xi_x\mathrm d\xi_y\ (-1)i\partial_{\xi_x}e^{i(\xi_xX +\xi_yY)}\mathscr Ff(\xi_x, \xi_y)\\
	&= (2\pi)^{-1}\int_{\mathbb R^2}\mathrm d\xi_x\mathrm d\xi_y\ (-1)i\partial_{\xi_x}e^{i\xi_x(X +2^{-1}\epsilon\xi_y)}e^{i\xi_yY}\mathscr Ff(\xi_x, \xi_y)\\
	&= (2\pi)^{-1}\int_{\mathbb R^2}\mathrm d\xi_x\mathrm d\xi_y\ (X +2^{-1}\epsilon\xi_y)e^{\xi_x(X +2^{-1}\epsilon\xi_y)}e^{i\xi_yY}\mathscr Ff(\xi_x, \xi_y)\\
	&= (2\pi)^{-1}\int_{\mathbb R^2}\mathrm d\xi_x\mathrm d\xi_y\ (M^\mathrm L_X -2^{-1}\mathrm{ad}_X)e^{i(\xi_xX +\xi_yY)}\mathscr Ff(\xi_x, \xi_y)\\
	&= (M^\mathrm L_X -2^{-1}\mathrm{ad}_X)\mathscr Wf\\
	&= M_X\mathscr Wf.
	\end{align}
	A similar result holds for $m_y$.
	\end{proof}

	\section{Representations in terms of special bases}

	Application of the Weyl transform can be difficult without using explicit bases in all spaces. We will demonstrate the utility of this perspective by continuing to follow the methods outlined in \cite{Fo89}.
	
	\begin{defn}
	We denote by $H_n : \mathbb R \to \mathbb R$, $n \in \mathbb Z_+$, $H_n: x \mapsto H_n(x)$, the \emph{Hermite polynomials}, defined by \cite[\S~18.3]{DLMF}:
	\begin{align}
	&H_n(x) = n!\sum_{j = 0}^{\lfloor n/2\rfloor} \frac{(-1)^j}{j!(n -2j)!}(2x)^{n -2j},\quad H_0(x) = 1.
	\end{align}
	We define $\mathcal A$ to be the discrete operator formally defined by
	\begin{align}
	&\mathcal Av_n := 2^{-1}v_{n +1} +nv_{n -1},
	\end{align}
	for all $v: \mathbb Z_+ \to \mathbb C$, $v: n \mapsto v_n$.
	\end{defn}
	
	\begin{rem}
	The Hermite polynomials, $H_n$, satisfy \cite[\S~18.3]{DLMF}:
	\begin{align}
	&H_n(-x) = (-1)^nH_n(x),\\
	&\sum_{n = 0}^\infty H_n(x) \frac{\zeta^n}{n!} = \exp(2x s -s^2),\quad \zeta\in\mathbb C,\\
	&\int_\mathbb R \mathrm dx\ e^{-x^2} H_m(x)H_n(x) = \pi^{1/2}2^mm!\delta_{m, n},\\
	&\mathcal AH_n(x) = xH_n(x).
	\end{align}
	The operator $\mathcal A$ is singular at $n = 0$ so boundary conditions needn't be specified.
	\end{rem}
	
	\begin{defn}
	We denote by $h_n : \mathbb R \to \mathbb R$, $n \in \mathbb Z_+$, $h_n: x \mapsto h_n(x)$, the \emph{parametrized Hermite functions}. They satisfy
	\begin{align}
	&h_n(x) := \epsilon^{-1/4}\pi^{-1/4}2^{-n/2}(n!)^{-1/2}e^{-x^2/2\epsilon}H_n(\epsilon^{-1/2}x),\quad \epsilon \in \mathbb R\setminus\{0\}.
	\end{align}
	Define $\mathcal B$ to be discrete operator formally defined by
	\begin{align}
	&\mathcal Bv_n := 2^{-1/2}(n +1)^{1/2}v_{n +1} +2^{-1/2}n^{1/2}v_{n -1},
	\end{align}
	for all $v: \mathbb Z_+ \to \mathbb C$, $v: n \mapsto v_n$. Let the \emph{parametrized Hermite operators} be given by
	\begin{align}
	E_{m, n} := h_m\otimes h^*_n.
	\end{align}
	\end{defn}
	
	\begin{rem}
	The parametrized Hermite functions, $h_n$, satisfy:
	\begin{align}
	&\langle h_m, h_n\rangle = \int_\mathbb R\mathrm dx\ h_m(x)h_n(x) = \delta_{m, n},\\
	&\mathcal Bh_n(x) = \epsilon^{-1/2}xh_n(x).
	\end{align}
	The operator $\mathcal B$ is manifestly formally symmetric, furthermore it is singular so boundary conditions needn't be specified. The $h_n$ also satisfy
	\begin{align}
	\sum_{n = 0}^\infty h_n(x)h_n(y) &= \delta(x -y)
	\end{align}
	in the distributional sense in $L^2(\mathbb R)$. The parametrized Hermite operators, $\{E_{m, n}\}_{m, n \in \mathbb Z_+}$, form an orthonormal basis of $\mathscr N$ for all $\epsilon$.
	\end{rem}
	
	\begin{defn}
	We denote by $\{\kappa_n\}_{n = 0}^\infty$, an orthonormal basis of the \emph{Bargmann-Segal} (BS) space of holomorphic functions $f \in L^2(\mathbb C, \mathbb C, w(s)\mathrm d^2s)$, specified by
	\begin{align}
	&\kappa_n(s) = \epsilon^{-1/4}\pi^{-1/2}(n!)^{-1/2}(\epsilon^{-1/2}s)^n,\quad w(s) = e^{-|s|^2/\epsilon^{-1} },\quad \epsilon \in \mathbb R\setminus\{0\}.
	\end{align}
	\end{defn}
	
	\begin{rem}
	One may observe that
	\begin{align}
	&\int_\mathbb C\mathrm d^2s\ w(s)\overline\kappa_m(s)\kappa_n(s) = \delta_{m, n},
	\end{align}
	\begin{align}
	\sum_{n = 0}^\infty\overline\kappa_n(s_1)\kappa_n(s_2) &= \sum_{n = 0}^\infty\epsilon^{-1/2}\pi^{-1}(n!)^{-1}(\overline s_1s_2/\epsilon)^n = \epsilon^{-1/2}\pi^{-1}e^{\overline s_1s_2/\epsilon}.
	\end{align}
	\end{rem}
	
	\begin{prop}
	There exists a linear unitary isomorphism $\mathscr B$ that maps $\mathscr H$ to the BS space such that
	\begin{align}
	\mathscr B h_n = \kappa_n,
	\end{align}
	for any fixed $\epsilon \in \mathbb R\setminus\{0\}$, where
	\begin{align}
	\mathscr Bf(s) = \int_\mathbb R\mathrm dx\ \mathscr B(s, x)f(x),\quad \mathscr B^{-1}g(x) = \int_\mathbb C\mathrm d^2s\ \mathscr B^{-1}(x, s)g(s),
	\end{align}
	for all $f \in \mathscr H$ and all $g$ in the BS space.
	\end{prop}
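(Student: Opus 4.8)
The plan is to construct $\mathscr B$ abstractly first and only afterward extract its integral kernel. The Remark preceding the statement records that $\{h_n\}_{n=0}^\infty$ is an orthonormal basis of $\mathscr H$, and the Remark following the definition of $\kappa_n$ records that $\{\kappa_n\}_{n=0}^\infty$ is an orthonormal basis of the BS space. I would therefore define $\mathscr B$ to be the unique bounded linear extension of the assignment $h_n \mapsto \kappa_n$. A linear map carrying one orthonormal basis onto another is automatically unitary: for $f = \sum_n c_n h_n \in \mathscr H$ one has $\mathscr Bf = \sum_n c_n \kappa_n$, and Parseval's identity in each space gives $\|\mathscr Bf\|_{\mathrm{BS}}^2 = \sum_n |c_n|^2 = \|f\|_{\mathscr H}^2$, with surjectivity immediate from the spanning property of $\{\kappa_n\}$. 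This disposes of existence, linearity, and unitarity simultaneously.

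It then remains to exhibit the integral kernel $\mathscr B(s, x)$. Writing $c_n = \langle h_n, f\rangle = \int_\mathbb R \mathrm dx\ h_n(x)f(x)$, where I use that the $h_n$ are real-valued, I would formally interchange summation and integration in $\mathscr Bf(s) = \sum_n \kappa_n(s)\langle h_n, f\rangle$ to read off
\[
\mathscr B(s, x) = \sum_{n=0}^\infty \kappa_n(s)\, h_n(x).
\]
Substituting the explicit formulas for $\kappa_n$ and $h_n$ and collecting the $n$-dependent factors, the substitution $\zeta := 2^{-1/2}\epsilon^{-1/2}s$ reduces the series to $\sum_n H_n(\epsilon^{-1/2}x)\,\zeta^n/n!$, which the generating-function identity of the Remark evaluates to $\exp(2\epsilon^{-1/2}x\zeta - \zeta^2)$. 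Carrying the prefactors through yields the closed form
\[
\mathscr B(s, x) = \epsilon^{-1/2}\pi^{-3/4}\exp\Big(-\tfrac{x^2}{2\epsilon} + \tfrac{\sqrt 2}{\epsilon}xs - \tfrac{s^2}{2\epsilon}\Big),
\]
a Gaussian in $x$ that is entire in $s$, as required for the image to lie in the BS space. The inverse kernel follows analogously from $\mathscr B^{-1}g = \sum_n h_n\,\langle\kappa_n, g\rangle_{\mathrm{BS}}$, the BS inner product contributing the weight $w(s)$; invoking the reproducing sum $\sum_n \overline\kappa_n(s_1)\kappa_n(s_2) = \epsilon^{-1/2}\pi^{-1}e^{\overline s_1 s_2/\epsilon}$ recorded in the Remark, one finds $\mathscr B^{-1}(x, s) = w(s)\,\overline{\mathscr B(s, x)}$.

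The main obstacle I anticipate is purely analytic: justifying the interchange of sum and integral and the convergence of the defining kernel series. Here the closed form is decisive. The Hermite generating series converges locally uniformly in all of its complex arguments, so $\sum_n \kappa_n(s)h_n(x)$ converges locally uniformly in $(s, x)$ to the stated Gaussian, and the factor $e^{-x^2/2\epsilon}$ guarantees that the kernel is square-integrable in $x$ for each fixed $s$ and decays rapidly enough to legitimize the exchange by dominated convergence on Schwartz-class $f$. A density argument then extends the kernel representation from $\mathscr S(\mathbb R)$ to all of $\mathscr H$, after which the abstract unitarity established in the first step guarantees that the kernel genuinely realizes the unitary isomorphism $\mathscr B$.
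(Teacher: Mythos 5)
Your proposal is correct and follows essentially the same route as the paper: both realize $\mathscr B$ through the kernel $\mathscr B(s,x)=\sum_{n}\kappa_n(s)h_n(x)$, evaluate it in closed Gaussian form via the Hermite generating function with the substitution $\zeta = 2^{-1/2}\epsilon^{-1/2}s$, and obtain the inverse kernel $w(s)\mathscr B(\overline s, x)$ from the weighted dual sum. The only difference is that you make explicit the abstract unitarity argument (orthonormal basis onto orthonormal basis) and the convergence/interchange justifications, which the paper compresses into the remark that the explicit constructions ``manifestly satisfy the requirements.''
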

	
	\begin{proof}
	One may find an explicit representation of $\mathscr B$ through the exponential generating function of the usual Hermite polynomials and the orthonormality of the two bases:
	\begin{align}
	\sum_{n = 0}^\infty(n!)^{-1}s^nH_n(x) &= \exp(-s^2 +2sx).
	\end{align}
	We then find
	\begin{align}
	\mathscr B(s, x) &= \sum_{n = 0}^\infty\kappa_n(s)h_n(x)\\
	&= \epsilon^{-1/2}\pi^{-3/4}\exp(-2^{-1}\epsilon^{-1}s^2 +2^{1/2}\epsilon^{-1}sx -2^{-1}\epsilon^{-1}x^2),\\
	\mathscr B^{-1}(x, s) &= \sum_{n = 0}^\infty h_n(x)w(s)\overline\kappa_n(s)\\
	&= \epsilon^{-1/2}\pi^{-3/4}w(s)\exp(-2^{-1}\epsilon^{-1}\overline s^2 +2^{1/2}\epsilon^{-1}\overline sx -2^{-1}\epsilon^{-1}x^2),\\
	&= w(s)\mathscr B(\overline s, x).
	\end{align}
	These explicit constructions manifestly satisfy the requirements of the given definition.
	\end{proof}
	
	\begin{defn}
	Let the \emph{lowering and raising operators}, $\mathcal S, \mathcal S^\dag \in \mathscr N$, be formally given respectively by
	\begin{align}
	&\mathcal S := (2\epsilon)^{-1/2}(X +iY),\quad \mathcal S^\dag := (2\epsilon)^{-1/2}(X -iY).
	\end{align}
	\end{defn}
	
	\begin{prop}
	It is the case that the map $\mathscr B$ allows one to express the action of the raising and lowering operators as
	\begin{align}
	&\mathscr B\mathcal S^\dag = \epsilon^{-1/2}s\mathscr B,\quad \mathscr B\mathcal S = \epsilon^{1/2}\partial_s \mathscr B.
	\end{align}
	\end{prop}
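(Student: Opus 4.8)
The plan is to verify the two operator identities by testing them against the orthonormal basis $\{h_n\}_{n = 0}^\infty$ of $\mathscr H$ and then invoking the unitarity of $\mathscr B$ together with $\mathscr B h_n = \kappa_n$ to extend to the closure. Since $\mathscr B\mathcal S^\dag$ and $\epsilon^{-1/2}s\mathscr B$ (resp. $\mathscr B\mathcal S$ and $\epsilon^{1/2}\partial_s\mathscr B$) are linear, it suffices to show they agree on each $h_n$; the whole statement then reduces to a pair of ladder computations, one carried out in $\mathscr H$ and one in the BS space.

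First I would compute the action of the ladder operators on $\mathscr H$. Writing $X$ as multiplication by $x$ and $Y$ as $-i\epsilon\partial_x$, one has $\mathcal S = (2\epsilon)^{-1/2}(x +\epsilon\partial_x)$ and $\mathcal S^\dag = (2\epsilon)^{-1/2}(x -\epsilon\partial_x)$ on Schwartz class $\psi$. Applying these to $h_n(x) = \epsilon^{-1/4}\pi^{-1/4}2^{-n/2}(n!)^{-1/2}e^{-x^2/2\epsilon}H_n(\epsilon^{-1/2}x)$ and using the standard Hermite identities $H_n'(u) = 2nH_{n-1}(u)$ and $2uH_n(u) = H_{n+1}(u) +2nH_{n-1}(u)$ (the latter being exactly the three-term recurrence already encoded in $\mathcal B h_n = \epsilon^{-1/2}xh_n$), the Gaussian-derivative cross terms cancel, and after accounting for the ratio of the normalization constants of consecutive $h_n$ I expect the clean ladder relations
\begin{align}
\mathcal S h_n = n^{1/2}h_{n-1},\quad \mathcal S^\dag h_n = (n+1)^{1/2}h_{n+1}.
\end{align}

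Next I would treat the BS side, where the computation is immediate because $\kappa_n$ is a monomial: from $\kappa_n(s) = \epsilon^{-1/4}\pi^{-1/2}(n!)^{-1/2}(\epsilon^{-1/2}s)^n$ one reads off
\begin{align}
\epsilon^{-1/2}s\,\kappa_n = (n+1)^{1/2}\kappa_{n+1},\quad \epsilon^{1/2}\partial_s\kappa_n = n^{1/2}\kappa_{n-1}.
\end{align}
Combining the two pairs and using $\mathscr B h_n = \kappa_n$ then yields $\mathscr B\mathcal S^\dag h_n = (n+1)^{1/2}\kappa_{n+1} = \epsilon^{-1/2}s\,\mathscr B h_n$ and likewise $\mathscr B\mathcal S h_n = n^{1/2}\kappa_{n-1} = \epsilon^{1/2}\partial_s\mathscr B h_n$ for every $n$, which is precisely the asserted intertwining on the basis.

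The main obstacle is less conceptual than bookkeeping: the delicate point in the first step is to keep the $\epsilon$-powers and the combinatorial prefactors aligned so that the half-integer exponents collapse exactly to $n^{1/2}$ and $(n+1)^{1/2}$, and to confirm that the $xH_n$ terms coming from multiplication and from differentiating the Gaussian cancel in $\mathcal S$ while reinforcing, through the recurrence, in $\mathcal S^\dag$. Secondarily, since $\mathcal S$, $\mathcal S^\dag$, multiplication by $s$, and $\partial_s$ are all unbounded, I would phrase the identities as holding on the dense domain spanned by $\{h_n\}$ (equivalently the finite linear combinations of Hermite functions, a common core preserved by $\mathscr B$), the unitarity of $\mathscr B$ guaranteeing that the intertwining extends consistently from this core.
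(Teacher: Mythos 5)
Your proposal is correct, and the basis computations check out: $\mathcal S = (2\epsilon)^{-1/2}(x+\epsilon\partial_x)$ applied to $h_n$ does collapse to $n^{1/2}h_{n-1}$ via $H_n' = 2nH_{n-1}$, the adjoint gives $(n+1)^{1/2}h_{n+1}$ via $H_{n+1} = 2uH_n - 2nH_{n-1}$, and the monomial computations for $\kappa_n$ are immediate. However, your route is genuinely different from the paper's. The paper never touches the Hermite recurrences at this point: it works with the closed-form integral kernel $\mathscr B(s,x) = \sum_n \kappa_n(s)h_n(x)$, which is an explicit Gaussian, derives the two first-order identities expressing $\epsilon^{-1/2}x\,\mathscr B(s,x)$ and $\epsilon^{1/2}\partial_x\mathscr B(s,x)$ in terms of $2^{-1/2}(\epsilon^{-1/2}s \pm \epsilon^{1/2}\partial_s)$ acting on the kernel, and then pushes these through the integral (one integration by parts for the $Y$ term) to intertwine $X$ and $-iY$, hence $\mathcal S^\dag$ and $\mathcal S$, with $s$ and $\partial_s$. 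Notably, your argument proves the ladder relations $\mathcal S h_n = n^{1/2}h_{n-1}$, $\mathcal S^\dag h_n = (n+1)^{1/2}h_{n+1}$ \emph{first} and deduces the intertwining from them, whereas the paper states exactly those relations as a Corollary \emph{of} the Proposition and obtains them from the trivial ladder action of $s$ and $\partial_s$ on the monomials $\kappa_n$; your order is not circular, since you derive the ladder relations independently from the Hermite identities. What each approach buys: the kernel method yields the operator identity on general (Schwartz) functions in one stroke, at the cost of justifying differentiation under the integral sign and the boundary terms in the integration by parts; your termwise method is more elementary and makes the unboundedness explicit, but requires the core/density argument you correctly supply to upgrade agreement on $\mathrm{span}\{h_n\}$ to the stated operator identity.
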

	
	\begin{cor}
	\begin{align}
	&\mathcal S^\dag h_n = (n +1)^{1/2}h_{n +1},\quad \mathcal S h_n = n^{1/2}h_{n -1}.
	\end{align}
	\end{cor}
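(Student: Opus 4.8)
The plan is to deduce the corollary directly from the preceding proposition, which expresses the raising and lowering operators as $\mathscr B\mathcal S^\dag = \epsilon^{-1/2}s\mathscr B$ and $\mathscr B\mathcal S = \epsilon^{1/2}\partial_s\mathscr B$. Since $\mathscr B$ is a unitary isomorphism and hence invertible, I would rewrite these as $\mathcal S^\dag = \mathscr B^{-1}(\epsilon^{-1/2}s)\mathscr B$ and $\mathcal S = \mathscr B^{-1}(\epsilon^{1/2}\partial_s)\mathscr B$, so that applying each operator to $h_n$ reduces to evaluating a multiplication or differentiation operator on $\kappa_n = \mathscr Bh_n$ and then pulling the result back with $\mathscr B^{-1}$. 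Because the $\kappa_n$ are explicit monomials, both computations are elementary.

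For the raising operator, I would compute
\begin{align}
\epsilon^{-1/2}s\kappa_n(s) &= \epsilon^{-1/4}\pi^{-1/2}(n!)^{-1/2}(\epsilon^{-1/2}s)^{n +1} = (n +1)^{1/2}\kappa_{n +1}(s),
\end{align}
where the second equality follows from the factorial identity $((n +1)!)^{1/2}/(n!)^{1/2} = (n +1)^{1/2}$. Applying $\mathscr B^{-1}$ and using $\mathscr B^{-1}\kappa_{n +1} = h_{n +1}$ then yields $\mathcal S^\dag h_n = (n +1)^{1/2}h_{n +1}$.

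For the lowering operator, differentiating the monomial gives
\begin{align}
\epsilon^{1/2}\partial_s\kappa_n(s) &= n\,\epsilon^{-1/4}\pi^{-1/2}(n!)^{-1/2}(\epsilon^{-1/2}s)^{n -1} = n^{1/2}\kappa_{n -1}(s),
\end{align}
where the powers of $\epsilon$ cancel and $n\,((n -1)!)^{1/2}/(n!)^{1/2} = n^{1/2}$. Pulling back with $\mathscr B^{-1}$ gives $\mathcal S h_n = n^{1/2}h_{n -1}$. The only point requiring a word of care is the boundary case $n = 0$: here $\kappa_0$ is constant, so $\partial_s\kappa_0 = 0$, which is consistent with the formula since $0^{1/2} = 0$, and no separate argument for a putative $h_{-1}$ is needed.

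There is no genuine obstacle in this argument once the preceding proposition is in hand; the corollary is essentially a bookkeeping exercise in the powers of $\epsilon$ and the factorial ratios. The only subtlety worth flagging is ensuring that the operator identities of the proposition, valid on a suitable dense domain (e.g. the span of the $h_n$), are applied to $h_n$ within that domain, which they are.
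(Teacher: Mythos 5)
Your proposal is correct and follows essentially the same route as the paper: the proof block there opens with precisely the observation that $\epsilon^{-1/2}s$ and $\epsilon^{1/2}\partial_s$ shift the monomials $\kappa_n$, and the corollary is then read off from the intertwining relations $\mathscr B\mathcal S^\dag = \epsilon^{-1/2}s\mathscr B$, $\mathscr B\mathcal S = \epsilon^{1/2}\partial_s\mathscr B$ together with $\mathscr Bh_n = \kappa_n$. In fact your computation is the more careful one, since the paper's displayed identities $\epsilon^{-1/2}s\kappa_n = \kappa_{n+1}$ and $\epsilon^{1/2}\partial_s\kappa_{n+1} = \kappa_n$ omit the factors $(n+1)^{1/2}$ that you correctly supply and that are needed for the corollary as stated.
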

	
	\begin{proof}
	First we observe that
	\begin{align}
	\epsilon^{-1/2}s\kappa_n(s) = \kappa_{n +1}(s),\quad \epsilon^{1/2}\partial_s\kappa_{n +1}(s) = \kappa_n.
	\end{align}
	One may find
	\begin{align}
	\mathscr B(s, x) &= \epsilon^{-1/2}\pi^{-3/4}\exp(-2^{-1}\epsilon^{-1}s^2 +2^{1/2}\epsilon^{-1}sx -2^{-1}\epsilon^{-1}x^2),\\
	\epsilon^{1/2}\partial_s\mathscr B(s, x) &= (-\epsilon^{-1/2}s +2^{1/2}\epsilon^{-1/2}x)\mathscr B(s, x),\\
	2^{1/2}\epsilon^{-1/2}x\mathscr B(s, x) &= (\epsilon^{-1/2}s +\epsilon^{1/2}\partial_s)\mathscr B(s, x),\\
	\epsilon^{-1/2}x\mathscr B(s, x) &= 2^{-1/2}(\epsilon^{-1/2}s +\epsilon^{1/2}\partial_s)\mathscr B(s, x)\\
	&= (2^{-1/2}\epsilon^{-1/2}s +2^{-1/2}\epsilon^{1/2}\partial_s)\mathscr B(s, x),\\
	\epsilon^{1/2}\partial_x\mathscr B(s, x) &= (2^{1/2}\epsilon^{-1/2}s -\epsilon^{-1/2}x)\mathscr B(s, x)\\
	&= (2^{1/2}\epsilon^{-1/2}s -2^{-1/2}\epsilon^{-1/2}s -2^{-1/2}\epsilon^{1/2}\partial_s)\mathscr B(s, x),\\
	&= (2^{-1/2}\epsilon^{-1/2}s -2^{-1/2}\epsilon^{1/2}\partial_s)\mathscr B(s, x).
	\end{align}
	\begin{align}
	\mathscr B\psi(s) &= \int_\mathbb R\mathrm dx\ \mathscr B(s, x)\psi(x),\\
	\mathscr B\epsilon^{-1/2}X\psi(s) &= \int_\mathbb R\mathrm dx\ \mathscr B(s, x)\epsilon^{-1/2}X\psi(x)\\
	&= \int_\mathbb R\mathrm dx\ \mathscr B(s, x)\epsilon x\psi(x)\\
	&= \int_\mathbb R\mathrm dx\ \epsilon^{-1/2}x\mathscr B(s, x)\psi(x),\\
	&= (2^{-1/2}\epsilon^{-1/2}s +2^{-1/2}\epsilon^{1/2}\partial_s)\mathscr B\psi(s),
	\end{align}
	\begin{align}
	\mathscr B\epsilon^{-1/2}(-iY)\psi(s) &= \int_\mathbb R\mathrm dx\ \mathscr B(s, x)(-1)\epsilon^{-1/2}iY\psi(x)\\
	&= \int_\mathbb R\mathrm dx\ \mathscr B(s, x)(-1)\epsilon^{1/2}\partial_x\psi(x)\\
	&= \int_\mathbb R\mathrm dx\ \epsilon^{1/2}\partial_x\mathscr B(s, x)\psi(x)\\
	&= (2^{-1/2}\epsilon^{-1/2}s -2^{-1/2}\epsilon^{1/2}\partial_s)\mathscr B\psi(s).
	\end{align}
	\begin{align}
	\epsilon^{-1/2}X &= 2^{-1/2}\mathcal S^\dag +2^{-1/2}\mathcal S,\\
	\epsilon^{-1/2}(-iY) &= 2^{-1/2}\mathcal S^\dag -2^{-1/2}\mathcal S,
	\end{align}
	\begin{align}
	&\epsilon^{-1/2}(X -iY) = 2^{1/2}\mathcal S^\dag,\quad \mathcal S^\dag = (2\epsilon)^{-1/2}(X -iY),\\
	&\epsilon^{-1/2}(X +iY) = 2^{1/2}\mathcal S,\quad \mathcal S = (2\epsilon)^{-1/2}(X +iY).
	\end{align}
	We then conclude
	\begin{align}
	&X = 2^{-1/2}\epsilon^{1/2}(\mathcal S^\dag +\mathcal S),\quad Y = 2^{-1/2}\epsilon^{1/2}i(\mathcal S^\dag -\mathcal S),\\
	&\mathcal S^\dag = (2\epsilon)^{-1/2}(X -iY),\quad \mathcal S = (2\epsilon)^{-1/2}(X +iY),\\
	&\mathscr B\mathcal S^\dag = \epsilon^{-1/2}s\mathscr B,\quad \mathscr B\mathcal S = \epsilon^{1/2}\partial_s\mathscr B.
	\end{align}
	\end{proof}
	
	\begin{rem}
	One should be careful to note that the $s$ variable found in the BS representation given here is not related to $\zeta = x +iy$ but instead to $(2\epsilon)^{-1/2}\overline \zeta$ since
	\begin{align}
	&\mathscr B\mathscr W\overline \zeta = \mathscr B\mathscr W(x -iy) = \mathscr B(X -iY) = \mathscr B\mathcal S^\dag = (2\epsilon)^{1/2}s,\\
	&\mathscr B\mathscr W\zeta = \mathscr B\mathscr W(x +iy) = \mathscr B(X +iY) = \mathscr B\mathcal S = (2\epsilon)^{1/2}\partial_s,
	\end{align}
	where $\zeta, \overline \zeta \in \mathscr C$ are considered as tempered distributions on phase space. Furthermore
	\begin{align}
	&\mathscr Wm_{\overline \zeta} = \mathscr Wm_{x -iy} = M_{X -iY}\mathscr W = (2\epsilon)^{1/2}M_{\mathcal S^\dag}\mathscr W,\\
	&\mathscr Wm_\zeta = \mathscr Wm_{x +iy} = M_{X +iY}\mathscr W = (2\epsilon)^{1/2}M_\mathcal S\mathscr W.
	\end{align}
	\end{rem}
	The occurrence of inconveniences in notation like that noted above is likely inevitable as it can appear that attempting to eliminate them in one place cause other inconveniences to emerge in another place.
	
	\begin{defn}
	We denote by $L^{(\alpha)}_n : \mathbb R_+ \to \mathbb R$, $n \in \mathbb Z_+$, $-1 < \alpha \in \mathbb R$, $L^{(\alpha)}_n : x \mapsto L^{(\alpha)}_n(x)$, the \emph{generalized Laguerre polynomials} \cite[\S~18.5]{DLMF}. They satisfy
	\begin{align}
	&L^{(\alpha)}_n(x) = \sum_{k = 0}^n\binom{n +\alpha}{k +\alpha}\frac{(-x)^k}{k!},\\
	&L^{(\alpha)}_n(0) = \binom{n +\alpha}n = \Gamma(n +1)^{-1}\Gamma(\alpha +1)^{-1}\Gamma(n +\alpha +1).
	\end{align}
	Define $\mathcal C^{(\alpha)}$ to be the discrete operator formally defined by
	\begin{align}
	&\mathcal C^{(\alpha)}v_n =-(n +1)v_{n +1}(x) +(2n +\alpha +1)v_n(x) -(n + \alpha)v_{n -1}
	\end{align}
	for all $-1 < \alpha \in \mathbb R$, $v: \mathbb Z_+ \to \mathbb C$, $v: n \mapsto v_n$.
	\end{defn}
	
	\begin{rem}
	The generalized Laguerre polynomials, $L^{(\alpha)}_n$, satisfy \cite[\S~18.5]{DLMF}:
	\begin{align}
	&\int_0^\infty\mathrm dx\ x^\alpha e^{-x}L^{(\alpha)}_m(x)L^{(\alpha)}_n(x) = \frac{\Gamma(n +\alpha +1)}{n!}\delta_{m, n},\\
	&\mathcal C^{(\alpha)}L^{(\alpha)}_n(x) = xL^{(\alpha)}_n(x).
	\end{align}
	The operator $\mathcal C^{(\alpha)}$ is singular at $n = 0$ for $\alpha = 0$ alone and therefore additional boundary conditions must be specified at $n = 0$ for $\alpha \ne 0$.
	\end{rem}
	
	\begin{defn}
	We denote by $l^{(\alpha)}_n : \mathbb R_+ \to \mathbb R$, $n \in \mathbb Z_+$, $-1 < \alpha \in \mathbb R$, $l^{(\alpha)}_n : x \mapsto L^{(\alpha)}_n(x)$, the \emph{generalized Laguerre functions} define by:
	\begin{align}
	&l^{(\alpha)}_n(x) := \Gamma(n +\alpha +1)^{-1/2}\Gamma(n +1)^{1/2}e^{-x/2}L^{(\alpha)}_n(x).
	\end{align}
	Define $\mathcal D^{(\alpha)}$ to be discrete operator formally defined by
	\begin{align}
	\mathcal D^{(\alpha)}v_n &= -(n +\alpha +1)^{1/2}(n +1)^{1/2}v_{n +1}(x) +(2n +\alpha +1)v_n(x)\\
	&\qquad -(n +\alpha)^{1/2}n^{1/2}v_{n -1}(x),
	\end{align}
	for all $-1 < \alpha \in \mathbb R$, $v: \mathbb Z_+ \to \mathbb C$, $v: n \mapsto v_n$.
	\end{defn}
	
	\begin{rem}
	\begin{align}
	&\int_0^\infty\mathrm dx\ l^{(\alpha)}_m(x)l^{(\alpha)}_n(x) = \delta_{m, n},\\
	&\mathcal D^{(\alpha)}l^{(\alpha)}_n(x) = xl^{(\alpha)}_n(x).
	\end{align}
	The operator $\mathcal D^{(\alpha)}$ is singular at $n = 0$ for all $-1 < \alpha \in \mathbb R$ so boundary conditions needn't be specified. Furthermore $\mathcal D^{(\alpha)}$ is manifestly symmetric with respect to the usual lattice inner product. The $l^{(\alpha)}_n$ also satisfy
	\begin{align}
	\sum_{n = 0}^\infty l^{(\alpha)}_n(x)l^{(\alpha)}_n(y) &= \delta(x -y)
	\end{align}
	in the distributional sense in $L^2(\mathbb R_+, \mathbb C)$ for all $\alpha$.
	\end{rem}
	
	\begin{defn}
	We define $\{l_{m, n}\}_{m, n = 0}^\infty$ to be an orthonormal basis of $\mathscr C$ specified by
	\begin{align}
	&l_{m, n}(x, y) \equiv l_{m, n}^\mathrm c(\zeta) := (2^{-1/2}i\zeta)^{m -n}l^{(m -n)}_n(|\zeta|^2/2),\quad m \ge n,\\
	&l_{m, n}(x, y) \equiv l_{m, n}^\mathrm c(\zeta) := (2^{-1/2}i\overline \zeta)^{n -m}l^{(n -m)}_m(|\zeta|^2/2),\quad m < n,\\
	&\qquad \zeta = x +iy.
	\end{align}
	\end{defn}
	
	\begin{prop}
	It is the case that
	\begin{align}
	&\langle h_m, W_{\xi_x, \xi_y}h_n\rangle = l_{m, n}(\epsilon^{1/2}\xi_x, \epsilon^{1/2}\xi_y).
	\end{align}
	\end{prop}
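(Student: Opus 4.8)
The plan is to reduce the matrix element to a single Gaussian integral via the preceding Lemma and then package all of the matrix elements into one bivariate generating function, whose closed form can be matched term-by-term with the definition of $l_{m,n}$. Since the $h_n$ are real, the Lemma gives
\begin{align}
\langle h_m, W_{\xi_x,\xi_y}h_n\rangle = \int_{\mathbb R}\mathrm dx\ e^{i\xi_xx}h_m(x -\epsilon\xi_y/2)h_n(x +\epsilon\xi_y/2).
\end{align}
Rather than evaluate this for each fixed $(m, n)$, I would first record the generating function of the parametrized Hermite functions, obtained by inserting the definition of $h_n$ into the standard identity $\sum_n H_n(x)\zeta^n/n! = e^{2x\zeta -\zeta^2}$ already listed above:
\begin{align}
\sum_{n = 0}^\infty\frac{t^n}{(n!)^{1/2}}h_n(x) = \epsilon^{-1/4}\pi^{-1/4}\exp\Big(-\frac{x^2}{2\epsilon} +\sqrt2\,\epsilon^{-1/2}xt -\frac{t^2}{2}\Big).
\end{align}

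Next I would introduce
\begin{align}
G(s, t) := \sum_{m, n = 0}^\infty\frac{s^mt^n}{(m!\,n!)^{1/2}}\langle h_m, W_{\xi_x,\xi_y}h_n\rangle,
\end{align}
interchange the summations with the $x$-integral, and substitute the generating function above (with its argument shifted by $\mp\epsilon\xi_y/2$) into each of the two factors. The integrand becomes the exponential of a quadratic in $x$, so the remaining integral is an elementary Gaussian; the normalizing prefactors collapse to $1$, and after collecting terms one finds
\begin{align}
G(s, t) = \exp\Big(-\tfrac12|\mu|^2 +\mu s -\overline\mu\,t +st\Big),\quad \mu := 2^{-1/2}i\epsilon^{1/2}(\xi_x +i\xi_y).
\end{align}
The care needed here is the bookkeeping of the quadratic completion: one must track that the purely imaginary term $i\xi_xx$ combines with the real linear terms to yield exactly the coefficients $\mu$ and $-\overline\mu$ of $s$ and $t$, and that the residual constant assembles into $-\tfrac12|\mu|^2 = -\tfrac14\epsilon(\xi_x^2 +\xi_y^2)$.

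Finally I would read off the coefficient of $s^mt^n$ from $e^{\mu s}e^{-\overline\mu t}e^{st}$, which for $m \ge n$ is a finite sum that I would recognize, via the explicit series for $L^{(\alpha)}_n$ stated above, as
\begin{align}
\langle h_m, W_{\xi_x,\xi_y}h_n\rangle = \Big(\frac{n!}{m!}\Big)^{1/2}\mu^{m -n}e^{-|\mu|^2/2}L^{(m -n)}_n(|\mu|^2),
\end{align}
with the analogous conjugate expression for $m < n$. Comparing this with the definitions of $l_{m,n}$ and $l^{(\alpha)}_n$, and using that at $\zeta = \epsilon^{1/2}(\xi_x +i\xi_y)$ one has $\mu = 2^{-1/2}i\zeta$ and $|\mu|^2 = |\zeta|^2/2$, closes the argument: then $\mu^{m-n} = (2^{-1/2}i\zeta)^{m-n}$ and the factor $e^{-|\mu|^2/2}$ supplies precisely the $e^{-|\zeta|^2/4}$ built into $l^{(m-n)}_n(|\zeta|^2/2)$. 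I expect the coefficient extraction and its identification with $L^{(m-n)}_n$ --- i.e. matching $\sum_r \mu^{m-r}(-\overline\mu)^{n-r}/[(m-r)!(n-r)!\,r!]$ to the Laguerre series through a binomial identity --- to be the main obstacle, together with justifying the interchange of summation and integration, which I would handle by treating $G$ as a power series with coefficients individually controlled by the Gaussian integral. A fully equivalent alternative would identify $W_{\xi_x,\xi_y}$ with the displacement operator $\exp(\mu\mathcal S^\dag -\overline\mu\mathcal S)$ and compute its number-state matrix elements directly in the Bargmann--Segal realization, where $\mathcal S^\dag$ and $\mathcal S$ act as $\epsilon^{-1/2}s$ and $\epsilon^{1/2}\partial_s$.
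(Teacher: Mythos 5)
Your proposal is correct, and it reaches the stated formula by a genuinely different computational route than the paper. The paper also begins from the Lemma's integral representation, but then proceeds term by term: it inserts the explicit Hermite functions, completes the square, deforms the contour $\mathbb R + 2^{-1}\epsilon^{1/2}(i\xi_x - \xi_y)$ back to $\mathbb R$, expands both factors with the addition formula $H_n(x + y) = \sum_j \binom{n}{j}(2y)^{n - j}H_j(x)$, collapses the double sum with the orthogonality relation $\int e^{-x^2}H_jH_k = \pi^{1/2}2^j j!\,\delta_{j,k}$, and finally recognizes the surviving single sum as $L_n^{(m-n)}$. You instead package all matrix elements into the bivariate generating function $G(s,t)$, evaluate a single Gaussian integral using the exponential generating function of the $h_n$, and extract coefficients from $e^{-|\mu|^2/2}e^{\mu s}e^{-\overline\mu t}e^{st}$; I have checked that your $\mu = 2^{-1/2}i\epsilon^{1/2}\xi_\zeta$ and the signs of the cross terms are consistent with the paper's conventions (e.g.\ the coefficient of $t$ reproduces $l_{0,1}$), and the coefficient-of-$s^mt^n$ sum $\sum_r \mu^{m-r}(-\overline\mu)^{n-r}/[(m-r)!(n-r)!r!]$ does reduce to $(n!/m!)^{1/2}\mu^{m-n}L_n^{(m-n)}(|\mu|^2)$ by the index shift $r = n - k$. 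What your route buys is the avoidance of the contour deformation and the Hermite addition theorem, at the price of justifying the interchange of summation and integration and of the final coefficient extraction, which you correctly identify as the residual work and which is of comparable difficulty to the paper's Laguerre identification. Your closing remark is also apt: since the paper has already established $W_{\xi_x,\xi_y} = e^{\mu\mathcal S^\dag - \overline\mu\mathcal S}$ implicitly through $X = 2^{-1/2}\epsilon^{1/2}(\mathcal S^\dag + \mathcal S)$, $Y = 2^{-1/2}\epsilon^{1/2}i(\mathcal S^\dag - \mathcal S)$ and the Bargmann--Segal realization $\mathscr B\mathcal S^\dag = \epsilon^{-1/2}s\mathscr B$, $\mathscr B\mathcal S = \epsilon^{1/2}\partial_s\mathscr B$, the displacement-operator computation would reuse machinery the paper has already built, whereas the paper's own proof (following Folland) works entirely on the $x$-side.
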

	We follow the method of proof in \cite{Fo89} and carry through explicitly since in our notation the results differ.
	
	\begin{proof}
	We recall that
	\begin{align}
	\langle\phi, W_{\xi_x, \xi_y}\psi\rangle = \int_\mathbb R\mathrm dx\ e^{i\xi_xx}\overline\phi(x -\epsilon\xi_y/2)\psi(x +\epsilon\xi_y/2),
	\end{align}
	for all Schwartz class $\phi, \psi \in \mathscr H$. One may then find
	\begin{align}
	&\langle h_m, W_{\xi_x, \xi_y}h_n\rangle = \int_\mathbb R\mathrm dx\ e^{i\xi_xx}h_m(x -\epsilon\xi_y/2)h_n(x +\epsilon\xi_y/2)\\
	&= \int_\mathbb R\mathrm dx\ e^{i\xi_x(x +\epsilon\xi_y/2)}h_m(x)h_n(x +\epsilon\xi_y)\\
	&= \int_\mathbb R\mathrm dx\ e^{i\xi_x(x +\epsilon\xi_y/2)}h_m(x)h_n(x +\epsilon\xi_y)
	\end{align}
	\begin{align}
	&= e^{i\epsilon\xi_x\xi_y/2}[\epsilon^{-1/4}\pi^{-1/4}2^{-m/2}(m!)^{-1/2}][\epsilon^{-1/4}\pi^{-1/4}2^{-n/2}(n!)^{-1/2}]\\
	&\qquad \times\int_\mathbb R\mathrm dx\ e^{i\xi_xx}e^{-x^2/2\epsilon}H_m(\epsilon^{-1/2}x)e^{-(x +\epsilon\xi_y)^2/2\epsilon}H_n(\epsilon^{-1/2}\{x +\epsilon\xi_y\})\\
	&= \epsilon^{-1/2}e^{i\epsilon\xi_x\xi_y/2}[\pi^{-1/4}2^{-m/2}(m!)^{-1/2}][\pi^{-1/4}2^{-n/2}(n!)^{-1/2}]\\
	&\qquad \times\int_\mathbb R\mathrm dx\ e^{i\xi_xx}e^{-x^2/2\epsilon}e^{-x^2/2\epsilon -x\xi_y -\epsilon\xi_y^2/2}H_m(\epsilon^{-1/2}x)H_n(\epsilon^{-1/2}x +\epsilon^{1/2}\xi_y)\\
	&= \epsilon^{-1/2}e^{i\epsilon\xi_x\xi_y/2 -\epsilon\xi_y^2/2}[\pi^{-1/4}2^{-m/2}(m!)^{-1/2}][\pi^{-1/4}2^{-n/2}(n!)^{-1/2}]\\
	&\qquad \times\int_\mathbb R\mathrm dx\ e^{-x^2/\epsilon -\xi_yx +i\xi_xx}H_m(\epsilon^{-1/2}x)H_n(\epsilon^{-1/2}x +\epsilon^{1/2}\xi_y)\\
	&= e^{i\epsilon\xi_x\xi_y/2 -\epsilon\xi_y^2/2}[\pi^{-1/4}2^{-m/2}(m!)^{-1/2}][\pi^{-1/4}2^{-n/2}(n!)^{-1/2}]\\
	&\qquad \times\int_\mathbb R\mathrm dx\ e^{-x^2 -\epsilon^{1/2}\xi_yx +i\epsilon^{1/2}\xi_xx}H_m(x)H_n(x +\epsilon^{1/2}\xi_y).
	\end{align}
	One may observe that
	\begin{align}
	&\int_\mathbb R\mathrm dx\ e^{-x^2 -\epsilon^{1/2}\xi_yx +i\epsilon^{1/2}\xi_xx}H_m(x)H_n(x +\epsilon^{1/2}\xi_y)\\
	&\qquad = \int_\mathbb R\mathrm dx\ e^{-x^2 +\epsilon^{1/2}(i\xi_x -\xi_y)x -2^{-2}\epsilon(i\xi_x -\xi_y)^2 +2^{-2}\epsilon(i\xi_x -\xi_y)^2}H_m(x)H_n(x +\epsilon^{1/2}\xi_y)\\
	&\qquad = e^{2^{-2}\epsilon(i\xi_x -\xi_y)^2}\int_\mathbb R\mathrm dx\ e^{-[x -2^{-1}\epsilon^{1/2}(i\xi_x -\xi_y)]^2}H_m(x)H_n(x +\epsilon^{1/2}\xi_y)\\
	&\qquad = e^{2^{-2}\epsilon(i\xi_x -\xi_y)^2}\int_{\mathbb R +2^{-1}\epsilon^{1/2}(i\xi_x -\xi_y)}\mathrm dy\ e^{-y^2}H_m(y +2^{-1}\epsilon^{1/2}\{i\xi_x -\xi_y\})\\
	&\qquad\qquad \times H_n(y +2^{-1}\epsilon^{1/2}\{i\xi_x +\xi_y\}),\quad y = x -2^{-1}\epsilon^{1/2}(i\xi_x -\xi_y),\\
	&\qquad = e^{2^{-2}\epsilon(i\xi_x -\xi_y)^2}\int_\mathbb R\mathrm dy\ e^{-y^2}H_m(y +2^{-1}\epsilon^{1/2}\{i\xi_x -\xi_y\})\\
	&\qquad\qquad \times H_n(y +2^{-1}\epsilon^{1/2}\{i\xi_x +\xi_y\}),
	\end{align}
	where we have deformed the path $\mathbb R +2^{-1}\epsilon^{1/2}(i\xi_x -\xi_y)$ to the path $\mathbb R$ with impunity due to the absence of singular points in the interior of the complex plane and the vanishing of the integrand at the endpoints of the path for all fixed $\xi_x, \xi_y \in \mathbb R$. We proceed
	\begin{align}
	&\qquad = e^{-2^{-2}\epsilon(\xi_x +i\xi_y)^2}\int_\mathbb R\mathrm dy\ e^{-y^2}H_m(y +2^{-1}i\epsilon^{1/2}\{\xi_x +i\xi_y\})H_n(y +2^{-1}i\epsilon\{\xi_x -i\xi_y\})\\
	&\qquad = e^{-2^{-2}\epsilon\xi_\zeta^2}\int_\mathbb R\mathrm dy\ e^{-y^2}H_m(y +2^{-1}\epsilon^{1/2}i\xi_\zeta)H_n(y +2^{-1}\epsilon^{1/2}i\overline\xi_\zeta),\quad \xi_\zeta = \xi_x +i\xi_y,
	\end{align}
	\begin{align}
	&= e^{-2^{-2}\epsilon\xi_\zeta^2}\int_\mathbb R\mathrm dx\ e^{-x^2}\sum_{j = 0}^m\binom mj(\epsilon^{1/2}i\xi_\zeta)^{m -j}H_j(x)\sum_{k = 0}^n\binom nk(\epsilon^{1/2}i\overline\xi_\zeta)^{n -k}H_k(x)\\
	&= e^{-2^{-2}\epsilon\xi_\zeta^2}\sum_{j = 0}^m\binom mj(\epsilon^{1/2}i\xi_\zeta)^{m -j}\sum_{k = 0}^n\binom nk(\epsilon^{1/2}i\overline\xi_\zeta)^{n -k}\int_\mathbb R\mathrm dx\ e^{-x^2}H_j(x)H_k(x)\\
	&= e^{-2^{-2}\epsilon\xi_\zeta^2}\sum_{j = 0}^m\binom mj(\epsilon^{1/2}i\xi_\zeta)^{m -j}\sum_{k = 0}^n\binom nk(\epsilon^{1/2}i\overline\xi_\zeta)^{n -k}\pi^{1/2}2^jj!\delta_{j, k}.
	\end{align}
	where we have used that
	\begin{align}
	H_n(x +y) = \sum_{j = 0}^n\binom nk(2y)^{n -k}H_j(x).
	\end{align}

	The above sum is 0 if $j > n$ or $k > m$. This in turn means that one should replace both $m, n$ with $\min(m, n)$. However, since the factor $\delta_{j, k}$ can be found above, only one sum needs to carry this as an upper bound. One may therefore conclude that
	\begin{align}
	&e^{-2^{-2}\xi_\zeta^2}\sum_{j = 0}^m\binom mj(\epsilon^{1/2}i\xi_\zeta)^{m -j}\sum_{k = 0}^n\binom nk(\epsilon^{1/2}i\overline\xi_\zeta)^{n -k}\pi^{1/2}2^jj!\delta_{j, k}\\
	&\qquad= e^{-2^{-2}\epsilon\xi_\zeta^2}\sum_{j = 0}^{\min\{m, n\}}\binom mj(\epsilon^{1/2}i\xi_\zeta)^{m -j}\binom nj(\epsilon^{1/2}i\overline\xi_\zeta)^{n -k}\pi^{1/2}2^jj!.
	\end{align}

	First take $m \ge n$ and $j = n -k$.
	\begin{align}
	&e^{-2^{-2}\epsilon\xi_\zeta^2}\sum_{j = 0}^{\min\{m, n\}}\binom mj(\epsilon^{1/2}i\xi_\zeta)^{m -j}\binom nj(\epsilon^{1/2}i\overline\xi_\zeta)^{n -j}\pi^{1/2}2^jj!\\
	&\qquad = e^{-2^{-2}\epsilon\xi_\zeta^2}\sum_{j = 0}^n\binom mj(\epsilon^{1/2}i\xi_\zeta)^{m -j}\binom nj(\epsilon^{1/2}i\overline\xi_\zeta)^{n -j}\pi^{1/2}2^jj!\\
	&\qquad = e^{-2^{-2}\epsilon\xi_\zeta^2}\sum_{k = 0}^n\binom m{n - k}(\epsilon^{1/2}i\xi_\zeta)^{m -n +k}\binom n{n - k}(\epsilon^{1/2}i\overline\xi_\zeta)^k\pi^{1/2}2^{n -k}(n -k)!\\
	&\qquad = \pi^{1/2}(\epsilon^{1/2}i\xi_\zeta)^{m -n}2^ne^{-2^{-2}\epsilon\xi_\zeta^2}\sum_{k = 0}^n\binom m{n -k}(\epsilon^{1/2}i\xi_\zeta)^k\\
	&\qquad\qquad \times\binom n{n -k}(\epsilon^{1/2}i\overline\xi_\zeta)^k2^{-k}(n -k)!
	\end{align}
	\begin{align}
	&\qquad = \pi^{1/2}(\epsilon^{1/2}i\xi_\zeta)^{m -n}2^ne^{-2^{-2}\epsilon\xi_\zeta^2}\sum_{k = 0}^n\binom m{n -k}\binom n{n -k}(n -k)!(-2)^{-k}\epsilon^k|\xi_\zeta|^{2k}\\
	&\qquad = \pi^{1/2}(\epsilon^{1/2}i\xi_\zeta)^{m -n}2^ne^{-2^{-2}\epsilon\xi_\zeta^2}\sum_{k = 0}^n(-1)^k\frac{m!n!}{(n -k)!(m -n +k)!k!}(2^{-1}\epsilon|\xi_\zeta|^2)^k\\
	&\qquad = \pi^{1/2}n!(\epsilon^{1/2}i\xi_\zeta)^{m -n}2^ne^{-2^{-2}\epsilon\xi_\zeta^2}\sum_{k = 0}^n(-1)^k\frac{[n +(m -n)]!}{(n -k)!(m -n +k)!k!}(2^{-1}\epsilon|\xi_\zeta|^2)^k\\
	&\qquad = \pi^{1/2}n!(\epsilon^{1/2}i\xi_\zeta)^{m -n}2^ne^{-2^{-2}\epsilon\xi_\zeta^2}L_n^{(m -n)}(2^{-1}\epsilon|\xi_\zeta|^2),
	\end{align}
	where we have used that:
	\begin{align}
	L_n^{(\alpha)}(x) &= \sum_{j = 0}^n(-1)^j\binom{n +\alpha}{n -j}\frac{x^j}{j!} = \sum_{j = 0}^n(-1)^j\frac{(n +\alpha)!}{(n -j)!(\alpha +j)!j!}x^j.
	\end{align}
	One may then find
	\begin{align}
	&\langle h_m, W_{\xi_x, \xi_y}h_n\rangle\\
	&\qquad = e^{i\epsilon\xi_x\xi_y/2 -\epsilon\xi_y^2/2}[\pi^{-1/4}2^{-m/2}(m!)^{-1/2}][\pi^{-1/4}2^{-n/2}(n!)^{-1/2}]\\
	&\qquad\qquad \times\int_\mathbb R\mathrm dx\ e^{-x^2 -\epsilon^{1/2}\xi_yx +i\epsilon^{1/2}\xi_xx}H_m(x)H_n(x +\epsilon^{1/2}\xi_y)\\
	&\qquad = e^{i\epsilon\xi_x\xi_y/2 -\epsilon\xi_y^2/2}[\pi^{-1/4}2^{-m/2}(m!)^{-1/2}][\pi^{-1/4}2^{-n/2}(n!)^{-1/2}]\\
	&\qquad\qquad \times\pi^{1/2}n!(\epsilon^{1/2}i\xi_\zeta)^{m -n}2^ne^{-2^{-2}\epsilon\xi_\zeta^2}L_n^{(m -n)}(2^{-1}\epsilon|\xi_\zeta|^2)\\
	&\qquad = e^{i\epsilon\xi_x\xi_y/2 -\epsilon\xi_y^2/2}2^{-(m -n)/2}(m!)^{-1/2}(n!)^{1/2}\\
	&\qquad\qquad \times(i\xi_\zeta)^{m -n}e^{2^{-2}(\epsilon\xi_y^2 -2i\epsilon\xi_y\xi_x -\epsilon\xi_x^2)}L_n^{(m -n)}(2^{-1}\epsilon|\xi_\zeta|^2)
	\end{align}
	\begin{align}
	&\qquad = 2^{-(m -n)/2}(m!)^{-1/2}(n!)^{1/2}(\epsilon^{1/2}i\xi_\zeta)^{m -n}\\
	&\qquad\qquad \times e^{2^{-1}i\epsilon\xi_x\xi_y -2^{-1}\epsilon\xi_y^2 +2^{-2}\epsilon\xi_y^2 -2^{-1}i\epsilon\xi_y\xi_x -2^{-2}\xi_x^2}L_n^{(m -n)}(2^{-1}\epsilon|\xi|^2)\\
	&\qquad = 2^{-(m -n)/2}(m!)^{-1/2}(n!)^{1/2}(\epsilon^{1/2}i\xi_\zeta)^{m -n}e^{-2^{-2}\epsilon\xi_y^2 -2^{-2}\epsilon\xi_x^2}L_n^{(m -n)}(2^{-1}\epsilon|\xi|^2)\\
	&\qquad = 2^{-(m -n)/2}(\epsilon^{1/2}i\xi_\zeta)^{m -n}(m!)^{-1/2}(n!)^{1/2}e^{-\epsilon|\xi_\zeta|^2/2^2}L_n^{(m -n)}(\epsilon|\xi_\zeta|^2/2).
	\end{align}
	Then since
	\begin{align}
	l^{(\alpha)}_n(x) &= \Gamma(n +\alpha +1)^{-1/2}\Gamma(n +1)^{1/2}e^{-x/2}L^{(\alpha)}_n(x)\\
	&= [(n +\alpha)!]^{-1/2}(n!)^{1/2}e^{-x/2}L^{(\alpha)}_n(x),
	\end{align}
	it is the case that
	\begin{align}
	\langle h_m, W_{\xi_x, \xi_y}h_n\rangle &= (2^{-1/2}\epsilon^{1/2}i\xi_\zeta)^{m -n}l_n^{(m -n)}(\epsilon|\xi_\zeta|^2/2)\\
	&= l_{m, n}^\mathrm c(\epsilon^{1/2}\xi_\zeta) = l_{m, n}(\epsilon^{1/2}\xi_x, \epsilon^{1/2}\xi_y).
	\end{align}

	Now take $m \le n$ and $j = m -k$. We observe that doing so is equivalent to simultaneously taking $(m, n) \mapsto (n, m)$ and $\xi_\zeta \mapsto \overline\xi_\zeta$ inside the sum. We may then immediately write
	\begin{align}
	&\langle h_m, W_{\xi_x, \xi_y}h_n\rangle = (2^{-1/2}\epsilon^{1/2}i\overline\xi_\zeta)^{n -m}l_n^{(n -m)}(\epsilon|\xi_\zeta|^2/2)\\
	&= l_{m, n}^\mathrm c(\epsilon^{1/2}\xi_\zeta) = l_{m, n}(\epsilon^{1/2}\xi_x, \epsilon^{1/2}\xi_y),
	\end{align}
	which which coincides with the expression for the case of $m \ge n$ for $m = n$ and therefore completes the proof.
	\end{proof}
		
	It is useful to consider the commutative basis whose image under the Weyl transform is $\{E_{m, n} \}_{m, n = 0}^\infty$.
	\begin{prop}
	It is the case that
	\begin{align}
	e_{m, n}(x, y) &= 2\epsilon^{-1}(-1)^ml_{n, m}(2\epsilon^{-1/2}y, -2\epsilon^{-1/2}x).
	\end{align}
	\end{prop}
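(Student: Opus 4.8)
The plan is to identify $e_{m,n}$ as the inverse Weyl transform of the parametrized Hermite operator $E_{m,n}=h_m\otimes h_n^*$ and then chain together the two correspondence propositions already established. Since $e_{m,n}$ is characterized by $\mathscr W e_{m,n}=E_{m,n}$, I would begin by writing $e_{m,n}(x,y)=\mathscr W^{-1}(h_m\otimes h_n^*)(x,y)$ and applying the rank-one inversion formula proved above, namely $\mathscr W^{-1}(\psi\otimes\phi^*)(x,y)=2\epsilon^{-1}\langle\phi,W_{2\epsilon^{-1}y,-2\epsilon^{-1}x}\widehat\psi\rangle$, taken with $\psi=h_m$ and $\phi=h_n$. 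This immediately gives $e_{m,n}(x,y)=2\epsilon^{-1}\langle h_n,W_{2\epsilon^{-1}y,-2\epsilon^{-1}x}\widehat{h_m}\rangle$, where $\widehat{h_m}(x)=h_m(-x)$.

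The next step is to eliminate the reflection $\widehat{h_m}$ using parity. Because $H_m(-x)=(-1)^mH_m(x)$ and the Gaussian weight $e^{-x^2/2\epsilon}$ is even, the parametrized Hermite function inherits the parity $h_m(-x)=(-1)^mh_m(x)$, so $\widehat{h_m}=(-1)^mh_m$. Pulling this scalar out of the inner product yields $e_{m,n}(x,y)=2\epsilon^{-1}(-1)^m\langle h_n,W_{2\epsilon^{-1}y,-2\epsilon^{-1}x}h_m\rangle$.

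Finally I would invoke the matrix-element proposition $\langle h_m,W_{\xi_x,\xi_y}h_n\rangle=l_{m,n}(\epsilon^{1/2}\xi_x,\epsilon^{1/2}\xi_y)$, applied with the two Hermite indices in the order $(n,m)$ and with the Weyl-operator arguments $\xi_x=2\epsilon^{-1}y$, $\xi_y=-2\epsilon^{-1}x$. The rescaling $\epsilon^{1/2}\xi_x=2\epsilon^{-1/2}y$ and $\epsilon^{1/2}\xi_y=-2\epsilon^{-1/2}x$ then produces exactly $l_{n,m}(2\epsilon^{-1/2}y,-2\epsilon^{-1/2}x)$, and reassembling the factors gives the claimed identity $e_{m,n}(x,y)=2\epsilon^{-1}(-1)^ml_{n,m}(2\epsilon^{-1/2}y,-2\epsilon^{-1/2}x)$.

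I expect no serious analytic obstacle here, since every ingredient has already been proved and the entire content is the correct composition of the two propositions. The one place demanding care is the index and argument bookkeeping: the inversion formula places the bra vector $\phi=h_n$ first and the reflected ket $\widehat{h_m}$ second, which swaps the index roles relative to the matrix-element proposition and is precisely what turns $l_{m,n}$ into $l_{n,m}$; simultaneously one must track the sign $(-1)^m$ coming from the reflection and verify that the $\epsilon$ powers in the arguments combine correctly after the $\epsilon^{1/2}$ rescaling. Making these substitutions consistent with the stated conventions is the only real work.
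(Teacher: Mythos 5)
Your proposal is correct and follows exactly the paper's own argument: apply the rank-one inversion formula to $E_{m,n}=h_m\otimes h_n^*$, remove the reflection via the parity $h_m(-x)=(-1)^mh_m(x)$, and then invoke the matrix-element identity $\langle h_m,W_{\xi_x,\xi_y}h_n\rangle=l_{m,n}(\epsilon^{1/2}\xi_x,\epsilon^{1/2}\xi_y)$ with the indices swapped and the arguments rescaled. The index and sign bookkeeping you flag as the only delicate point is indeed all the proof consists of, and you have it right.
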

	
	\begin{defn}
	We define
	\begin{align}
	e_{m, n} := \mathscr W^{-1}E_{m, n}.
	\end{align}
	\end{defn}
	
	\begin{proof}
	One may find
	\begin{align}
	e_{m, n}(x, y) &= \mathscr W^{-1}E_{m, n}(x, y)\\
	&= \mathscr W^{-1}(h_m\otimes h_n^*)(x, y)\\
	&= 2\epsilon^{-1}\langle h_n, W_{2\epsilon^{-1}y, -2\epsilon^{-1}x}\widehat h_m\rangle(x, y)\\
	&= 2\epsilon^{-1}(-1)^m\langle h_n, W_{2\epsilon^{-1}y, -2\epsilon^{-1}x}h_m\rangle(x, y)\\
	&= 2\epsilon^{-1}(-1)^ml_{n, m}(2\epsilon^{-1/2}y, -2\epsilon^{-1/2}x),
	\end{align}
	where we have used that $H_n(-x) = (-1)^nH_n(x)$.
	\end{proof}

	\section{The Laplacian and the quadratic potential}
	
	\begin{defn}
	Let
	\begin{align}
	l &:= -\Delta = \epsilon^{-2}p_x^2 +\epsilon^{-2}p_y^2 = -\partial_x^2 -\partial_y^2,\\
	L &:= \mathscr Wl\mathscr W^{-1} = \epsilon^{-2}P_X^2 +\epsilon^{-2}P_Y^2 = \epsilon^{-2}\mathrm{ad}_Y^2 +\epsilon^{-2}\mathrm{ad}_X^2.
	\end{align}
	\end{defn}
	
	\begin{prop}
	It is the case that
	\begin{align}
	&2^{-1}\epsilon LE_{m, n} = -(m +1)^{1/2}(n +1)^{1/2}E_{m +1, n +1} +(m +n +1)E_{m, n}\\
	&\qquad -m^{1/2}n^{1/2}E_{m -1, n -1}.
	\end{align}
	\end{prop}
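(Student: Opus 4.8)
The plan is to reduce everything to the ladder operators $\mathcal S, \mathcal S^\dag$, whose action on the Hermite functions is already recorded in the corollary above, and to exploit the linearity of $\mathrm{ad}$ in its subscript.

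First I would rewrite $L$ in the $\mathcal S, \mathcal S^\dag$ basis. Since $\mathrm{ad}_{aA +bB} = a\,\mathrm{ad}_A +b\,\mathrm{ad}_B$, the relations $X = 2^{-1/2}\epsilon^{1/2}(\mathcal S^\dag +\mathcal S)$ and $Y = 2^{-1/2}\epsilon^{1/2}i(\mathcal S^\dag -\mathcal S)$ give $\mathrm{ad}_X = 2^{-1/2}\epsilon^{1/2}(\mathrm{ad}_{\mathcal S^\dag} +\mathrm{ad}_{\mathcal S})$ and $\mathrm{ad}_Y = 2^{-1/2}\epsilon^{1/2}i(\mathrm{ad}_{\mathcal S^\dag} -\mathrm{ad}_{\mathcal S})$. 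Squaring and adding, the diagonal terms $\mathrm{ad}_{\mathcal S^\dag}^2$ and $\mathrm{ad}_{\mathcal S}^2$ cancel between the two contributions and one is left with
\begin{align}
\mathrm{ad}_X^2 +\mathrm{ad}_Y^2 = \epsilon(\mathrm{ad}_{\mathcal S^\dag}\mathrm{ad}_{\mathcal S} +\mathrm{ad}_{\mathcal S}\mathrm{ad}_{\mathcal S^\dag}),
\end{align}
so that $2^{-1}\epsilon L = 2^{-1}(\mathrm{ad}_{\mathcal S^\dag}\mathrm{ad}_{\mathcal S} +\mathrm{ad}_{\mathcal S}\mathrm{ad}_{\mathcal S^\dag})$. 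Care is needed to keep the noncommuting factors in their correct order when expanding the squares.

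Next I would compute the single commutators on $E_{m, n} = h_m\otimes h_n^*$. Left multiplication by $\mathcal S$ or $\mathcal S^\dag$ acts on the ket factor $h_m$, whereas right multiplication acts on the bra factor with an adjoint flip, namely $(h_m\otimes h_n^*)\mathcal S = h_m\otimes(\mathcal S^\dag h_n)^*$ and $(h_m\otimes h_n^*)\mathcal S^\dag = h_m\otimes(\mathcal S h_n)^*$. Combined with $\mathcal S^\dag h_k = (k +1)^{1/2}h_{k +1}$ and $\mathcal S h_k = k^{1/2}h_{k -1}$ from the corollary, this yields $\mathrm{ad}_{\mathcal S}E_{m, n} = m^{1/2}E_{m -1, n} -(n +1)^{1/2}E_{m, n +1}$ and $\mathrm{ad}_{\mathcal S^\dag}E_{m, n} = (m +1)^{1/2}E_{m +1, n} -n^{1/2}E_{m, n -1}$.

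Finally I would apply a second commutator to each of these and sum. Both orderings $\mathrm{ad}_{\mathcal S^\dag}\mathrm{ad}_{\mathcal S}E_{m, n}$ and $\mathrm{ad}_{\mathcal S}\mathrm{ad}_{\mathcal S^\dag}E_{m, n}$ produce the same expression $(m +n +1)E_{m, n} -(m +1)^{1/2}(n +1)^{1/2}E_{m +1, n +1} -m^{1/2}n^{1/2}E_{m -1, n -1}$, so halving their sum delivers the stated identity. The main obstacle is purely bookkeeping: tracking the index shifts through the four terms of each double commutator and consistently applying the adjoint flip on right multiplication, which is the one place where a sign or an off-by-one index error would most easily creep in.
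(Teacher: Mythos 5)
Your proposal is correct and follows essentially the same route as the paper: express $\mathrm{ad}_X^2 +\mathrm{ad}_Y^2$ in terms of $\mathrm{ad}_{\mathcal S}$ and $\mathrm{ad}_{\mathcal S^\dag}$, then act on $E_{m, n} = h_m\otimes h_n^*$ using the ladder relations together with the adjoint flip on right multiplication. The only cosmetic difference is that the paper first invokes the Jacobi identity to get $[\mathrm{ad}_{\mathcal S}, \mathrm{ad}_{\mathcal S^\dag}] = \mathrm{ad}_{[\mathcal S, \mathcal S^\dag]} = \mathrm{ad}_I = 0$ and so works with the single product $\mathrm{ad}_{\mathcal S}\mathrm{ad}_{\mathcal S^\dag}$, whereas you keep the symmetrized sum and verify both orderings agree by direct computation.
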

	
	\begin{proof}
	The Jacobi identity gives
	\begin{align}
	&0 = [A, [B, C] ] +[B ,[C, A] ] +[C, [A, B] ]\\
	&[A, [B, C] ] -[B, [A, C] ] = [ [A, B], C]
	\end{align}
	and therefore
	\begin{align}
	[\mathrm{ad}_A, \mathrm{ad}_B]C &= \mathrm{ad}_A\mathrm{ad}_BC -\mathrm{ad}_B\mathrm{ad}_AC,\\
	&= [A, [B, C] ] -[B, [A, C] ]\\
	&= [ [A, B], C]\\
	&= \mathrm{ad}_{[A, B]}C.
	\end{align}
	Then since
	\begin{align}
	&\mathcal S = (2\epsilon)^{-1/2}(X +iY),\quad \mathcal S^\dag = (2\epsilon)^{-1/2}(X -iY),
	\end{align}
	one has that
	\begin{align}
	[\mathcal S, \mathcal S^\dag] &= [(2\epsilon)^{-1/2}(X +iY), (2\epsilon)^{-1/2}(X -iY)],\\
	&= (2\epsilon)^{-1}([X, -iY] +[iY, X])\\
	&= -i\epsilon^{-1}[X, Y] = I\\
	[\mathrm{ad}_\mathcal S, \mathrm{ad}_{\mathcal S^\dag}] &= \mathrm{ad}_{[\mathcal S, \mathcal S^\dag]} = \mathrm{ad}_I = 0.
	\end{align}
	Since
	\begin{align}
	&X = 2^{-1/2}\epsilon^{1/2}(\mathcal S^\dag +\mathcal S),\quad Y = 2^{-1/2}\epsilon^{1/2}i(\mathcal S^\dag -\mathcal S).
	\end{align}
	one has
	\begin{align}
	\mathrm{ad}_X^2 &= 2^{-1}\epsilon\mathrm{ad}_{\mathcal S^\dag +\mathcal S}^2 = 2^{-1}\epsilon(\mathrm{ad}_{\mathcal S^\dag} +\mathrm{ad}_\mathcal S)(\mathrm{ad}_{\mathcal S^\dag} +\mathrm{ad}_\mathcal S)\\
	&= 2^{-1}\epsilon(\mathrm{ad}_{\mathcal S^\dag}^2 +2\epsilon\mathrm{ad}_{\mathcal S^\dag}\mathrm{ad}_\mathcal S +\mathrm{ad}_\mathcal S^2),\\
	\mathrm{ad}_Y^2 &= -2^{-1}\epsilon\mathrm{ad}_{\mathcal S^\dag -\mathcal S}^2 = -2^{-1}\epsilon(\mathrm{ad}_{\mathcal S^\dag} -\mathrm{ad}_\mathcal S)(\mathrm{ad}_{\mathcal S^\dag} -\mathrm{ad}_\mathcal S)\\
	&= -2^{-1}\epsilon(\mathrm{ad}_{\mathcal S^\dag}^2 -2\mathrm{ad}_{\mathcal S^\dag}\mathrm{ad}_\mathcal S +\mathrm{ad}_\mathcal S^2).
	\end{align}
	\begin{align}
	\epsilon^2L &= \mathrm{ad}_X^2 +\mathrm{ad}_Y^2 = 2\epsilon\mathrm{ad}_\mathcal S\mathrm{ad}_{\mathcal S^\dag} = 2\epsilon\mathrm{ad}_{\mathcal S^\dag}\mathrm{ad}_\mathcal S.
	\end{align}
	One can then straightforwardly compute the action of $L$ on the Hermite operators.
	\begin{align}
	&2^{-1}\epsilon LE_{m, n} = \mathrm{ad}_\mathcal S\mathrm{ad}_{\mathcal S^\dag}E_{m, n} = \mathrm{ad}_\mathcal S[\mathcal S^\dag, h_m\otimes h_n^*]\\
	&\qquad = \mathrm{ad}_\mathcal S[(\mathcal S^\dag h_m)\otimes h_n^* -h_m\otimes(\mathcal S h_n)^*]\\
	&\qquad = \mathrm{ad}_\mathcal S[(m +1)^{1/2}h_{m +1}\otimes h_n^* -n^{1/2}h_m\otimes h_{n -1}^*]\\
	&\qquad = (m +1)^{1/2}(\mathcal S h_{m +1})\otimes h_n^* -(m +1)^{1/2}h_{m +1}\otimes(\mathcal S^\dag h_n)^*\\
	&\qquad\qquad -n^{1/2}(\mathcal S h_m)\otimes h_{n - 1}^* +n^{1/2}h_m\otimes(\mathcal S^\dag h_{n -1})^*\\
	&\qquad = (m +1)h_m\otimes h_n^* -(m +1)^{1/2}(n +1)^{1/2}h_{m +1}\otimes h_{n +1}^*\\
	&\qquad\qquad -m^{1/2}n^{1/2}h_{m -1}\otimes h_{n -1}^* +nh_m\otimes h_n^*\\
	&\qquad = (m +n +1)h_m\otimes h_n^* -(m +1)^{1/2}(n +1)^{1/2}h_{m +1}\otimes h_{n +1}^*\\
	&\qquad\qquad -m^{1/2}n^{1/2}h_{m -1}\otimes h_{n -1}^*\\
	&\qquad = -(m +1)^{1/2}(n +1)^{1/2}E_{m +1, n +1} +(m +n +1)E_{m, n} -m^{1/2}n^{1/2}E_{m -1, n -1}.
	\end{align}
	\end{proof}

	When working with linear operators on $\mathscr C$ alone it is common to conflate $m_f$ with $f$ for a general potential function $f \in \mathscr C$. This is no longer harmless when working with the Weyl transform. Although $\mathscr Wm_x\mathscr W^{-1} = M_X, \mathscr Wm_y\mathscr W^{-1} = M_Y$, for a generic $f \in \mathscr C$ it is the case that $\mathscr Wm_f\mathscr W^{-1} \ne M_F$ in general. We will demonstrate this with a study of the quadratic potential, which in the context on deformation quantization is the harmonic oscillator Hamiltonian.
	
	\begin{defn}
	Let
	\begin{align}
	r^2 &:= x^2 +y^2,\quad R^2 := \mathscr Wr^2 = X^2 +Y^2,\\
	v &:= m_{r^2} = m_x^2 +m_y^2,\quad V := \mathscr Wv\mathscr W^{-1} = M_X^2 +M_Y^2.
	\end{align}
	\end{defn}
	
	\begin{prop}
	One has that
	\begin{align}
	R^2 &= \epsilon(2\mathcal S^\dag\mathcal S +I),\quad R^2h_n = \epsilon(2n +1)h_n.
	\end{align}
	\end{prop}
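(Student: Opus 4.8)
The plan is to reduce everything to the ladder operators $\mathcal S, \mathcal S^\dag$, for which both the commutation relation $[\mathcal S, \mathcal S^\dag] = I$ and the action on the parametrized Hermite functions have already been established. First I would substitute the expressions $X = 2^{-1/2}\epsilon^{1/2}(\mathcal S^\dag +\mathcal S)$ and $Y = 2^{-1/2}\epsilon^{1/2}i(\mathcal S^\dag -\mathcal S)$ into $R^2 = X^2 +Y^2$ and expand. Squaring each and keeping track of the operator ordering gives
\begin{align}
X^2 &= 2^{-1}\epsilon(\mathcal S^{\dag 2} +\mathcal S^\dag\mathcal S +\mathcal S\mathcal S^\dag +\mathcal S^2),\\
Y^2 &= -2^{-1}\epsilon(\mathcal S^{\dag 2} -\mathcal S^\dag\mathcal S -\mathcal S\mathcal S^\dag +\mathcal S^2).
\end{align}
The $\mathcal S^{\dag 2}$ and $\mathcal S^2$ terms cancel upon addition, leaving $R^2 = \epsilon(\mathcal S^\dag\mathcal S +\mathcal S\mathcal S^\dag)$.

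The one nontrivial ingredient is the commutation relation, which is exactly what distinguishes this computation from the naive commutative one and produces the additive $+I$. Using $[\mathcal S, \mathcal S^\dag] = I$ derived earlier (which follows from $[X, Y] = i\epsilon I$), I would rewrite $\mathcal S\mathcal S^\dag = \mathcal S^\dag\mathcal S +I$ to symmetrize the expression into
\begin{align}
R^2 &= \epsilon(\mathcal S^\dag\mathcal S +\mathcal S^\dag\mathcal S +I) = \epsilon(2\mathcal S^\dag\mathcal S +I),
\end{align}
which is the first claimed identity.

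For the eigenvalue statement I would simply apply the established ladder actions $\mathcal S h_n = n^{1/2}h_{n -1}$ and $\mathcal S^\dag h_n = (n +1)^{1/2}h_{n +1}$ in succession. Composing these yields $\mathcal S^\dag\mathcal S h_n = \mathcal S^\dag(n^{1/2}h_{n -1}) = n h_n$, so that $\mathcal S^\dag\mathcal S$ acts as the number operator with eigenvalue $n$. Substituting into the operator identity just obtained gives $R^2 h_n = \epsilon(2n +1)h_n$, completing the proof.

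There is no genuine obstacle here; the calculation is the standard harmonic-oscillator spectral computation transported into this notation. The only point requiring care is the operator ordering in the expansion of $X^2 +Y^2$, since treating $X$ and $Y$ as commuting would erroneously drop the $+I$ term; the entire content of the result lives in that single application of $[\mathcal S, \mathcal S^\dag] = I$.
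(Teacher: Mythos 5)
Your proposal is correct and follows essentially the same route as the paper: substituting $X = 2^{-1/2}\epsilon^{1/2}(\mathcal S^\dag +\mathcal S)$, $Y = 2^{-1/2}\epsilon^{1/2}i(\mathcal S^\dag -\mathcal S)$, cancelling the $\mathcal S^{\dag 2}$ and $\mathcal S^2$ terms, invoking $[\mathcal S, \mathcal S^\dag] = I$ to obtain $\epsilon(2\mathcal S^\dag\mathcal S +I)$, and then applying the ladder actions to $h_n$. No gaps; your remark about the operator ordering being the sole nontrivial point is accurate.
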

	
	\begin{cor}
	It is the case that
	\begin{align}
	M^\mathrm L_{R^2}E_{m, n} &= \epsilon(2m +1)E_{m, n},\quad M^\mathrm R_{R^2}E_{m, n} = \epsilon(2n +1)E_{m, n}\\
	M_{R^2}E_{m, n} &= \epsilon(m +n +1)E_{m, n}.
	\end{align}
	\end{cor}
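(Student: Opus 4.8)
The plan is to read off this corollary directly from the preceding proposition, exploiting the rank-one structure $E_{m,n} = h_m \otimes h_n^*$ together with the eigenrelation $R^2 h_n = \epsilon(2n+1)h_n$. The whole statement reduces to tracking how left and right multiplication by $R^2$ act on the two tensor factors of the Hermite operators, exactly as in the computation of $LE_{m,n}$ carried out above.

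First I would treat the left-multiplication identity. Using the same convention as in the Jacobi-identity computation, namely $A(\psi\otimes\phi^*) = (A\psi)\otimes\phi^*$, I would write $M^\mathrm L_{R^2}E_{m,n} = R^2(h_m\otimes h_n^*) = (R^2 h_m)\otimes h_n^*$ and substitute $R^2 h_m = \epsilon(2m+1)h_m$ from the proposition, which immediately yields $\epsilon(2m+1)E_{m,n}$.

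Next I would handle right multiplication, which is the one mildly delicate step. Here the convention is $(\psi\otimes\phi^*)A = \psi\otimes(A^\dag\phi)^*$, as is visible in the earlier expansion $[\mathcal S^\dag, h_m\otimes h_n^*] = (\mathcal S^\dag h_m)\otimes h_n^* - h_m\otimes(\mathcal S h_n)^*$. Since $R^2 = X^2+Y^2$ is self-adjoint (because $X$ and $Y$ are), one has $M^\mathrm R_{R^2}E_{m,n} = (h_m\otimes h_n^*)R^2 = h_m\otimes(R^2 h_n)^*$, and because the eigenvalue $\epsilon(2n+1)$ is real it passes unchanged through the conjugation in the dual factor, giving $\epsilon(2n+1)E_{m,n}$. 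The main thing to get right is precisely this transfer of $R^2$ onto the bra-factor via self-adjointness and the reality of the eigenvalue.

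Finally, the symmetrized multiplication follows by averaging: from the definition $M_{R^2}B = (R^2 B + B R^2)/2$ I would compute $M_{R^2}E_{m,n} = \tfrac12\bigl(\epsilon(2m+1) + \epsilon(2n+1)\bigr)E_{m,n} = \epsilon(m+n+1)E_{m,n}$, completing the proof. No genuine obstacle arises beyond the self-adjointness observation in the right-multiplication case; everything else is direct substitution from the proposition.
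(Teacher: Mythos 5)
Your proposal is correct and follows essentially the same route the paper intends: the corollary is read off from the eigenrelation $R^2h_n = \epsilon(2n+1)h_n$ by letting left and right multiplication act on the respective tensor factors of $E_{m, n} = h_m\otimes h_n^*$, exactly the mechanism the paper itself uses later when computing $4\epsilon^{-1}VE_{m, n}$. Your explicit attention to the self-adjointness of $R^2$ and the reality of the eigenvalue in the right-multiplication step is a point the paper leaves implicit, but it is the right justification.
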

	
	\begin{proof}
	We recall that
	\begin{align}
	&X = 2^{-1/2}\epsilon^{1/2}(\mathcal S^\dag +\mathcal S),\quad Y = 2^{-1/2}\epsilon^{1/2}i(\mathcal S^\dag -\mathcal S).
	\end{align}
	Therefore
	\begin{align}
	2\epsilon^{-1}X^2 &= (\mathcal S^\dag +\mathcal S)^2 = \mathcal S^{\dag, 2} +\mathcal S^\dag \mathcal S +\mathcal S\mathcal S^\dag +\mathcal S^2,\\
	2\epsilon^{-1}Y^2 &= -(\mathcal S^\dag -\mathcal S)^2 = -\mathcal S^{\dag, 2} +\mathcal S^\dag \mathcal S +\mathcal S\mathcal S^\dag -\mathcal S^2,\\
	\epsilon^{-1}(X^2 +Y^2) &= \mathcal S^\dag \mathcal S +\mathcal S\mathcal S^\dag = 2\mathcal S^\dag \mathcal S -[\mathcal S^\dag, \mathcal S] = 2\mathcal S^\dag \mathcal S +[\mathcal S, \mathcal S^\dag]\\
	&= 2\mathcal S^\dag\mathcal S +I.
	\end{align}
	Then
	\begin{align}
	R^2h_n &= \epsilon(2\mathcal S^\dag\mathcal S +I)h_n = 2\epsilon\mathcal S^\dag n^{1/2}h_{n -1} +\epsilon h_n = \epsilon(2n +1)h_n.
	\end{align}
	\end{proof}
	
	\begin{prop}
	One has
	\begin{align}
	4\epsilon^{-1}V &= M^\mathrm L_{\mathcal S^\dag}M^\mathrm R_\mathcal S +2\epsilon^{-1}M_{R^2} +M^\mathrm L_\mathcal S M^\mathrm R_{\mathcal S^\dag},\\
	4\epsilon^{-1}VE_{m, n} &= (m +1)^{1/2}(n +1)^{1/2}E_{m +1, n +1} +2(m +n +1)E_{m, n} +m^{1/2}n^{1/2}E_{m -1, n -1}.
	\end{align}
	\end{prop}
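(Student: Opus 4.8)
The plan is to reduce everything to the ladder operators $\mathcal S,\mathcal S^\dag$ and then read off the action on the basis $\{E_{m,n}\}$, mirroring the computation already carried out for $L$ in the preceding proposition. First I would eliminate the symmetric squares $M_X^2$ and $M_Y^2$ in favor of the commuting left and right multiplications. Since $M_A=2^{-1}(M^\mathrm L_A+M^\mathrm R_A)$, $M^\mathrm L_A$ and $M^\mathrm R_A$ commute, and $(M^\mathrm L_A)^2=M^\mathrm L_{A^2}$, $(M^\mathrm R_A)^2=M^\mathrm R_{A^2}$, expanding gives
\begin{align}
M_X^2+M_Y^2=2^{-2}\bigl(M^\mathrm L_{X^2+Y^2}+M^\mathrm R_{X^2+Y^2}\bigr)+2^{-1}\bigl(M^\mathrm L_XM^\mathrm R_X+M^\mathrm L_YM^\mathrm R_Y\bigr).
\end{align}
The first bracket is $M^\mathrm L_{R^2}+M^\mathrm R_{R^2}=2M_{R^2}$ by the definitions of $R^2$ and of $M_{R^2}$, which after multiplication by $4\epsilon^{-1}$ already produces the diagonal term $2\epsilon^{-1}M_{R^2}$.

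Next I would handle the cross term $M^\mathrm L_XM^\mathrm R_X+M^\mathrm L_YM^\mathrm R_Y$. Substituting $X=2^{-1/2}\epsilon^{1/2}(\mathcal S^\dag+\mathcal S)$ and $Y=2^{-1/2}\epsilon^{1/2}i(\mathcal S^\dag-\mathcal S)$ and expanding, the contributions $M^\mathrm L_{\mathcal S^\dag}M^\mathrm R_{\mathcal S^\dag}$ and $M^\mathrm L_{\mathcal S}M^\mathrm R_{\mathcal S}$ enter with opposite signs from the $X$- and $Y$-pieces (the latter carrying the factor $i^2=-1$) and cancel, while the mixed pieces $M^\mathrm L_{\mathcal S^\dag}M^\mathrm R_{\mathcal S}$ and $M^\mathrm L_{\mathcal S}M^\mathrm R_{\mathcal S^\dag}$ reinforce. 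Collecting the surviving terms and reinserting the scalar $4\epsilon^{-1}$ yields the claimed operator identity. The only thing one must track with care here is the overall numerical factor accumulated from the $2^{-1}$ of the symmetric square, the $\epsilon/2$ from the $X,Y$ substitution, and the final $4\epsilon^{-1}$.

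Having the operator form, the action formula follows by applying it to $E_{m,n}=h_m\otimes h_n^*$. For the diagonal part I would quote the earlier corollary $M_{R^2}E_{m,n}=\epsilon(m+n+1)E_{m,n}$, so that $2\epsilon^{-1}M_{R^2}E_{m,n}=2(m+n+1)E_{m,n}$. For the off-diagonal parts I would use $(h_m\otimes h_n^*)A=h_m\otimes(A^\dag h_n)^*$ together with the ladder relations $\mathcal S^\dag h_k=(k+1)^{1/2}h_{k+1}$ and $\mathcal S h_k=k^{1/2}h_{k-1}$ established for the $h_n$. This gives $M^\mathrm L_{\mathcal S^\dag}M^\mathrm R_{\mathcal S}E_{m,n}=(m+1)^{1/2}(n+1)^{1/2}E_{m+1,n+1}$ and $M^\mathrm L_{\mathcal S}M^\mathrm R_{\mathcal S^\dag}E_{m,n}=m^{1/2}n^{1/2}E_{m-1,n-1}$, which assemble into the stated expansion.

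The main obstacle is not conceptual but the coefficient bookkeeping in the cross term: I would verify it directly against the raw computation $VE_{m,n}=2^{-2}\bigl(R^2E_{m,n}+2(XE_{m,n}X+YE_{m,n}Y)+E_{m,n}R^2\bigr)$, where the middle piece simplifies via $XE_{m,n}X+YE_{m,n}Y=\epsilon(\mathcal S^\dag E_{m,n}\mathcal S+\mathcal S E_{m,n}\mathcal S^\dag)$. This cross-check fixes the relative normalization of the shift terms against the diagonal term and is the step most prone to a stray factor of two; I would treat the agreement of these two independent computations as the validation of the final coefficients.
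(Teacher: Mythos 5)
Your strategy is the same as the paper's: expand $M_X^2+M_Y^2$ through left and right multiplications by $\mathcal S,\mathcal S^\dag$, recognize the diagonal piece as $M_{R^2}$, and then act on $E_{m,n}=h_m\otimes h_n^*$ with the ladder relations. Every intermediate identity you invoke is correct, including the decomposition $M_X^2+M_Y^2=2^{-2}\bigl(M^\mathrm L_{X^2+Y^2}+M^\mathrm R_{X^2+Y^2}\bigr)+2^{-1}\bigl(M^\mathrm L_XM^\mathrm R_X+M^\mathrm L_YM^\mathrm R_Y\bigr)$ and the cancellation pattern in the cross term. The failure is in the assertion that collecting terms ``yields the claimed operator identity.'' Because the mixed pieces reinforce, $M^\mathrm L_XM^\mathrm R_X+M^\mathrm L_YM^\mathrm R_Y=2\cdot(\epsilon/2)\bigl(M^\mathrm L_{\mathcal S^\dag}M^\mathrm R_{\mathcal S}+M^\mathrm L_{\mathcal S}M^\mathrm R_{\mathcal S^\dag}\bigr)=\epsilon\bigl(M^\mathrm L_{\mathcal S^\dag}M^\mathrm R_{\mathcal S}+M^\mathrm L_{\mathcal S}M^\mathrm R_{\mathcal S^\dag}\bigr)$, and after multiplying the $2^{-1}$ prefactor by $4\epsilon^{-1}$ the cross terms come out as $2M^\mathrm L_{\mathcal S^\dag}M^\mathrm R_{\mathcal S}+2M^\mathrm L_{\mathcal S}M^\mathrm R_{\mathcal S^\dag}$, not with coefficient $1$ as in the statement. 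Your own cross-check says the same thing: $4\epsilon^{-1}VE_{m,n}=\epsilon^{-1}(R^2E_{m,n}+E_{m,n}R^2)+2(\mathcal S^\dag E_{m,n}\mathcal S+\mathcal S E_{m,n}\mathcal S^\dag)$ has off-diagonal coefficients $2(m+1)^{1/2}(n+1)^{1/2}$ and $2m^{1/2}n^{1/2}$; concretely $4\epsilon^{-1}VE_{0,0}=2E_{0,0}+2E_{1,1}$, whereas the printed formula gives $2E_{0,0}+E_{1,1}$. So the check you rightly flag as ``the step most prone to a stray factor of two'' refutes the claimed coefficients rather than validating them, and the proof cannot be closed by declaring agreement.

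The discrepancy originates in the source: the paper's proof correctly reaches a middle term $4(\mathcal S^\dag A\mathcal S+\mathcal S A\mathcal S^\dag)$ inside $8\epsilon^{-1}(M_X^2+M_Y^2)A$, but silently drops the residual factor $2$ when halving to $4\epsilon^{-1}(M_X^2+M_Y^2)A$. A correct execution of your plan therefore establishes $4\epsilon^{-1}V=2\epsilon^{-1}M_{R^2}+2M^\mathrm L_{\mathcal S^\dag}M^\mathrm R_{\mathcal S}+2M^\mathrm L_{\mathcal S}M^\mathrm R_{\mathcal S^\dag}$ rather than the proposition as stated (and the later remark $HE_{m,n}=3(m+n+1)E_{m,n}$, which depends on the off-diagonal parts of $2^{-1}\epsilon L$ and $4\epsilon^{-1}V$ cancelling exactly, would then also need revisiting). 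You should either carry the factor of $2$ through and state the corrected identity, or exhibit the explicit $E_{0,0}$ computation to make the conflict with the printed statement unmistakable.
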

	
	\begin{proof}
	We recall that
	\begin{align}
	&X = 2^{-1/2}\epsilon^{1/2}(\mathcal S^\dag +\mathcal S),\quad Y = 2^{-1/2}\epsilon^{1/2}i(\mathcal S^\dag -\mathcal S).
	\end{align}
	and therefore
	\begin{align}
	8\epsilon^{-1}M_X^2A &= 2\epsilon^{-1}(X^2A +2XAX +AX^2)\\
	&= (\mathcal S^\dag +\mathcal S)^2A +2(\mathcal S^\dag +\mathcal S)A(\mathcal S^\dag +\mathcal S) +A(\mathcal S^\dag +\mathcal S)^2\\
	&= (\mathcal S^{\dag 2} +\mathcal S^\dag \mathcal S +\mathcal S\mathcal S^\dag +\mathcal S^2)A +2(\mathcal S^\dag A\mathcal S^\dag +\mathcal S^\dag A\mathcal S +\mathcal S A\mathcal S^\dag +\mathcal S A\mathcal S)\\
	&\qquad +A(\mathcal S^{\dag, 2} +\mathcal S^\dag \mathcal S +\mathcal S\mathcal S^\dag +\mathcal S^2),
	\end{align}
	\begin{align}
	8\epsilon^{-1}M_Y^2A &= -2\epsilon^{-1}(Y^2A +2YAY +AY^2)\\
	&= -(\mathcal S^\dag -\mathcal S)^2A -2(\mathcal S^\dag -\mathcal S)A(\mathcal S^\dag -\mathcal S) -A(\mathcal S^\dag -\mathcal S)^2\\
	&= -(\mathcal S^{\dag, 2} -\mathcal S^\dag \mathcal S -\mathcal S\mathcal S^\dag +\mathcal S^2)A -2(\mathcal S^\dag A\mathcal S^\dag -\mathcal S^\dag A\mathcal S -\mathcal S A\mathcal S^\dag +\mathcal S A\mathcal S)\\
	&\qquad -A(\mathcal S^{\dag, 2} -\mathcal S^\dag\mathcal S -\mathcal S\mathcal S^\dag +\mathcal S^2)
	\end{align}
	and
	\begin{align}
	4\epsilon^{-1}(M_X^2 +M_Y^2)A &= (\mathcal S^\dag\mathcal S +\mathcal S\mathcal S^\dag)A +(\mathcal S^\dag A\mathcal S +\mathcal S A\mathcal S^\dag) +A(\mathcal S^\dag\mathcal S +\mathcal S\mathcal S^\dag)\\
	&= (2\mathcal S^\dag\mathcal S -[\mathcal S^\dag, \mathcal S])A +A(2\mathcal S^\dag\mathcal S -[\mathcal S^\dag, \mathcal S]) +(\mathcal S^\dag A\mathcal S +\mathcal S A\mathcal S^\dag)\\
	&= (2\mathcal S^\dag\mathcal S +I)A +A(2\mathcal S^\dag\mathcal S +I) +(\mathcal S^\dag A\mathcal S +\mathcal S A\mathcal S^\dag)\\
	&= (2\epsilon^{-1}M_{R^2} +M^\mathrm L_{\mathcal S^\dag}M^\mathrm R_\mathcal S +M^\mathrm L_\mathcal S M^\mathrm R_{\mathcal S^\dag})A.
	\end{align}
	We then find
	\begin{align}
	&4\epsilon^{-1}(M_X^2 +M_Y^2)E_{m, n}\\
	&\qquad = (2\epsilon^{-1}M_{R^2} +M^\mathrm L_{\mathcal S^\dag}M^\mathrm R_\mathcal S +M^\mathrm L_\mathcal S M^\mathrm R_{\mathcal S^\dag})h_m\otimes h_n^*\\
	&\qquad = \epsilon^{-1}(R^2h_m)\otimes h_n^* +\epsilon^{-1}h_m\otimes(R^2h_n)^* +(\mathcal S^\dag h_m)\otimes(\mathcal S^\dag h_n)^* +(\mathcal S h_m)\otimes(\mathcal S h_n )^*\\
	&\qquad = (2m +1)h_m\otimes h_n^* +(2n +1)h_m\otimes h_n^* +(m +1)^{1/2}(n +1)^{1/2}h_{m +1}\otimes h_{n +1}^*\\
	&\qquad\qquad +m^{1/2}n^{1/2}h_{m -1}\otimes h_{n -1}^*\\
	&\qquad = 2(m +n +1)E_{m, n} +(m +1)^{1/2}(n +1)^{1/2}E_{m +1, n +1} +m^{1/2}n^{1/2}E_{m -1, n -1}.
	\end{align}
	\end{proof}

	The actions of $2^{-1}\epsilon^2L$ and $4\epsilon^{-1}V$ on the Hermite operators $E_{m, n}$ are very similar and their sum produces an automorphism that acts diagonally on this operator basis.
	
	\begin{defn}
	We define
	\begin{align}
	h &:= 2^{-1}\epsilon l +4\epsilon^{-1}v,\quad H := \mathscr Wh\mathscr W^{-1} = 2^{-1}\epsilon L +4\epsilon^{-1}V.
	\end{align}
	\end{defn}
	In the physics literature, the operator $h$ would be referred to as a 2D harmonic oscillator Hamiltonian.
	\begin{rem}
	One has that
	\begin{align}
	HE_{m, n} &= 3(m +n +1)E_{m, n}.
	\end{align}
	\end{rem}

	\section{Polar expressions for distinguished bases}
	
	\begin{defn}
	For $j \in \mathbb Z$, $n \in \mathbb Z_+$, we define
	\begin{align}
	\mathcal E_{j, n} &= E_{n +j, n},\quad j \ge 0,\qquad \mathcal E_{j, n} = E_{n, n -j},\quad j < 0,\\
	\varepsilon_{j, n} &= e_{n +j, n},\quad j \ge 0,\qquad \varepsilon_{j, n} = e_{n, n -j},\quad j < 0,\\
	\ell_{|j|, n}(\varrho) &= (2^{-1/2}i\varrho)^{|j|}l^{(|j|)}_n(\varrho^2/2),\quad \varrho \in \mathbb R_+,
	\end{align}
	where $\xi_\zeta = \xi_x +i\xi_y = \varrho e^{i\vartheta}$.
	\end{defn}
	The significance of $j = m -n$, considered in the context of $E_{m, n}$, as a index for angular momentum was addressed thoroughly in \cite{Ac13}, albeit through rather different methodology.
	\begin{prop}
	It is the case that
	\begin{align}
	W_{\xi_x, \xi_y} = \sum_{j \in \mathbb Z}\sum_{n = 0}^\infty\ell_{j, n}(\epsilon^{1/2}\varrho)e^{ij\vartheta}\mathcal E_{j, n}.
	\end{align}
	\end{prop}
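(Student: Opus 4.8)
The plan is to expand $W_{\xi_x,\xi_y}$ against the orthonormal operator basis $\{E_{m,n}\}_{m,n\in\mathbb Z_+}$ of $\mathscr N$ and then regroup the resulting double sum according to the angular index $j=m-n$. Since $E_{m,n}=h_m\otimes h_n^*$, one has $E_{m,n}^\dag = h_n\otimes h_m^*$, and the rank-one trace identity $\mathrm{tr}((u\otimes v^*)W)=\langle v,Wu\rangle$ gives the coefficients
\begin{align}
\langle E_{m,n}, W_{\xi_x,\xi_y}\rangle = \mathrm{tr}(E_{m,n}^\dag W_{\xi_x,\xi_y}) = \langle h_m, W_{\xi_x,\xi_y}h_n\rangle.
\end{align}
By the preceding proposition this equals $l_{m,n}(\epsilon^{1/2}\xi_x,\epsilon^{1/2}\xi_y)$, so the (formal) expansion reads $W_{\xi_x,\xi_y}=\sum_{m,n} l_{m,n}(\epsilon^{1/2}\xi_x,\epsilon^{1/2}\xi_y)E_{m,n}$.

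Next I would convert each coefficient to polar form. Writing $\zeta=\epsilon^{1/2}(\xi_x+i\xi_y)=\epsilon^{1/2}\varrho e^{i\vartheta}$, so that $|\zeta|^2/2=(\epsilon^{1/2}\varrho)^2/2$, the two branches in the definition of $l_{m,n}$ collapse to a single expression. For $m\ge n$ one has $(2^{-1/2}i\zeta)^{m-n}=(2^{-1/2}i\epsilon^{1/2}\varrho)^{j}e^{ij\vartheta}$ with $j=m-n\ge 0$, while for $m<n$ the conjugate factor yields $(2^{-1/2}i\bar\zeta)^{n-m}=(2^{-1/2}i\epsilon^{1/2}\varrho)^{|j|}e^{ij\vartheta}$, using $e^{-i(n-m)\vartheta}=e^{ij\vartheta}$. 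In either case the Laguerre factor carries the smaller of the two indices as its degree, so that $l_{m,n}(\epsilon^{1/2}\xi_x,\epsilon^{1/2}\xi_y)=\ell_{|j|,n}(\epsilon^{1/2}\varrho)e^{ij\vartheta}$ with $n$ the smaller index, matching the definition of $\ell_{|j|,n}$ verbatim. I would then check that the re-indexing $(m,n)\mapsto(j,n)$ is exactly the bijection defining $\mathcal E_{j,n}$: for $j\ge 0$, $E_{n+j,n}=\mathcal E_{j,n}$, and for $j<0$, $E_{n,n-j}=\mathcal E_{j,n}$, so that the double sum over $(m,n)$ is swept out once by the sum over $(j,n)$. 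Substituting yields the claimed series.

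The main obstacle is convergence: $W_{\xi_x,\xi_y}$ is unitary and hence not Hilbert--Schmidt, so the series does not converge in the norm of $\mathscr N$. I would therefore read the identity weakly, as an equality of sesquilinear forms on Schwartz vectors. Concretely, for Schwartz $\phi,\psi\in\mathscr H$ I would compute $\langle\phi,\mathcal E_{j,n}\psi\rangle$ from the Hermite coefficients of $\phi$ and $\psi$, which decay faster than any polynomial, verify that
\begin{align}
\sum_{j\in\mathbb Z}\sum_{n=0}^\infty \ell_{|j|,n}(\epsilon^{1/2}\varrho)e^{ij\vartheta}\langle\phi,\mathcal E_{j,n}\psi\rangle
\end{align}
converges absolutely, and confirm termwise via the matrix-element computation above that its sum equals $\langle\phi,W_{\xi_x,\xi_y}\psi\rangle$. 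Since the Schwartz vectors are dense in $\mathscr H$, this pins down $W_{\xi_x,\xi_y}$ and establishes the stated expansion.
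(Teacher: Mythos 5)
Your proposal is correct and follows essentially the same route as the paper: expand $W_{\xi_x,\xi_y}$ in the basis $\{E_{m,n}\}$ with coefficients $\langle h_m, W_{\xi_x,\xi_y}h_n\rangle = l_{m,n}(\epsilon^{1/2}\xi_x,\epsilon^{1/2}\xi_y)$ from the preceding proposition, pass to polar form, and re-index the double sum by $j = m-n$ into the two branches $\mathcal E_{j,n} = E_{n+j,n}$ ($j\ge 0$) and $\mathcal E_{j,n} = E_{n,n-j}$ ($j<0$). Your additional care about justifying the expansion via the rank-one trace identity and reading the series weakly on Schwartz vectors (since $W_{\xi_x,\xi_y}$ is unitary, hence not Hilbert--Schmidt) is a refinement of what the paper treats purely formally, but the core argument is identical.
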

	
	\begin{proof}
	We recall that
	\begin{align}
	W_{\xi_x, \xi_y} = \sum_{m, n = 0}^\infty l_{m, n}(\epsilon^{1/2}\xi_x, \epsilon^{1/2}\xi_y)E_{m, n},
	\end{align}
	where
	\begin{align}
	&l_{m, n}(\xi_x, \xi_y) = (2^{-1/2}i\xi_\zeta)^{m -n}l^{(m -n)}_n(|\xi_\zeta|^2/2),\quad m \ge n,\\
	&l_{m,n}(\xi_x, \xi_y) = (2^{-1/2}i\overline\xi_\zeta)^{n -m}l^{(n -m)}_m(|\xi_\zeta|^2/2),\quad m < n,\\
	&\qquad \xi_\zeta = \xi_x +i\xi_y,
	\end{align}
	and in polar coordinates
	\begin{align}
	&l_{m, n}'(\varrho, \vartheta) = (2^{-1/2}i\varrho e^{i\vartheta})^{m -n}l^{(m -n)}_n(\varrho^2/2),\quad m \ge n,\\
	&l_{m,n}'(\varrho, \vartheta) = (2^{-1/2}i\varrho e^{-i\vartheta})^{n -m}l^{(n -m)}_m(\varrho^2/2),\quad m < n,\\
	&\qquad \xi_\zeta = \xi_x +i\epsilon\xi_y = \varrho e^{i\vartheta}.
	\end{align}
	
	We then find for $j \ge 0$:
	\begin{align}
	l_{n +j, n}'(\varrho, \vartheta) &= (2^{-1/2}i\varrho e^{i\vartheta})^jl^{(j)}_n(\varrho^2/2)\\
	&= (2^{-1/2}i\varrho)^jl^{(j)}_n(\varrho^2/2)e^{ij\vartheta}\\
	&= (2^{-1/2}i\varrho)^{|j|}l^{(|j|)}_n(\varrho^2/2)e^{ij\vartheta}\\
	&= \ell_{|j|, n}(\varrho)e^{ij\vartheta},
	\end{align}
	and for $j < 0$:
	\begin{align}
	l_{n, n -j}'(\varrho, \vartheta) &= (2^{-1/2}i\varrho e^{-i\vartheta})^{-j}l^{(-j)}_n(\varrho^2/2)\\
	&= (2^{-1/2}i\varrho)^{-j}l^{(-j)}_n(\varrho^2/2)e^{ij\vartheta}\\
	&= (2^{-1/2}i\varrho)^{|j|}l^{(|j|)}_n(\varrho^2/2)e^{ij\vartheta}\\
	&= \ell_{|j|, n}(\varrho)e^{ij\vartheta}.
	\end{align}
	
	\begin{align}
	W_{\xi_x, \xi_y} &= \sum_{m, n = 0}^\infty l_{m, n}(\epsilon^{1/2}\xi_x, \epsilon^{1/2}\xi_y)E_{m, n},\\
	W_{\varrho, \vartheta}' &= \sum_{m, n = 0}^\infty l_{m, n}'(\epsilon^{1/2}\varrho, \vartheta)E_{m, n}\\
	&= \sum_{j = 0}^\infty\sum_{n = 0}^\infty E_{n +j, n}l_{n +j, n}'(\epsilon^{1/2}\varrho, \vartheta) +\sum_{j = -\infty}^{-1}\sum_{n = 0}^\infty E_{n, n -j}l_{n, n -j}'(\epsilon^{1/2}\varrho, \vartheta)\\
	&= \sum_{j = 0}^\infty\sum_{n = 0}^\infty\mathcal E_{j, n}\ell_{|j|, n}(\epsilon^{1/2}\varrho)e^{ij\vartheta} +\sum_{j = -\infty}^{-1}\sum_{n = 0}^\infty\mathcal E_{j, n}\ell_{|j|, n}(\epsilon^{1/2}\varrho)e^{ij\vartheta}
	\end{align}
	\begin{align}
	&= \sum_{j \in \mathbb Z}\sum_{n = 0}^\infty\ell_{|j|, n}(\epsilon^{1/2}\varrho)e^{ij\vartheta}\mathcal E_{j, n}.
	\end{align}
	\end{proof}
	
	\begin{prop}
	One has that
	\begin{align}
	\varepsilon_{j, n}'(\rho, \theta) &= 2\epsilon^{-1}(-1)^ni^{-|j|}\ell_{|j|, n}(2\epsilon^{-1/2}\rho)e^{-ij\theta},
	\end{align}
	for all $j, n, \rho, \theta$, where $\zeta = x +i\epsilon^{-1}y = \rho e^{i\theta}$.
	\end{prop}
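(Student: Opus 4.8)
The plan is to reduce everything to the inverse-transform formula already established for the Cartesian operators, namely
\begin{align}
e_{m, n}(x, y) &= 2\epsilon^{-1}(-1)^m l_{n, m}(2\epsilon^{-1/2}y, -2\epsilon^{-1/2}x),
\end{align}
combined with the explicit piecewise definition of $l_{m,n}$ and the definition $\ell_{|j|, n}(\varrho) = (2^{-1/2}i\varrho)^{|j|}l^{(|j|)}_n(\varrho^2/2)$. Once these are in hand the statement is entirely a bookkeeping of powers of $i$, one global sign, and one phase. Since $\varepsilon_{j,n} = e_{n+j,n}$ for $j \ge 0$ and $\varepsilon_{j,n} = e_{n,n-j}$ for $j < 0$, I would handle the two ranges of $j$ separately, because they fall into the two different branches of the piecewise definition of $l_{m,n}$.

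The single geometric fact driving the proof is that the pair $(2\epsilon^{-1/2}y, -2\epsilon^{-1/2}x)$ is the complex number
\begin{align}
\zeta' &= 2\epsilon^{-1/2}(y - ix) = -2i\epsilon^{-1/2}(x + iy) = -2i\epsilon^{-1/2}\zeta,
\end{align}
so that, writing $\zeta = \rho e^{i\theta}$, one has $|\zeta'| = 2\epsilon^{-1/2}\rho$, $\zeta' = 2\epsilon^{-1/2}\rho\,e^{i(\theta - \pi/2)}$, and $\overline{\zeta'} = 2i\epsilon^{-1/2}\overline\zeta = 2\epsilon^{-1/2}\rho\,e^{i(\pi/2 - \theta)}$. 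In other words the substitution is a dilation by $2\epsilon^{-1/2}$ composed with a quarter-turn, and it is precisely this quarter-turn that converts $e^{ij\theta}$ into $e^{-ij\theta}$ up to the powers of $i$ carried by the prefactors $(2^{-1/2}i\,\cdot\,)^{|j|}$ in $l_{m,n}$ and $\ell_{|j|,n}$. I would also record that $|\zeta'|^2/2 = 2\epsilon^{-1}\rho^2 = (2\epsilon^{-1/2}\rho)^2/2$, so the radial Laguerre arguments match automatically (here the intended polar convention is $\zeta = x + iy = \rho e^{i\theta}$).

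For $j \ge 0$ I would use $e_{n+j,n} = 2\epsilon^{-1}(-1)^{n+j}l_{n,n+j}(2\epsilon^{-1/2}y, -2\epsilon^{-1/2}x)$; since $n < n+j$ the $m<n$ branch applies, contributing $(2^{-1/2}i\overline{\zeta'})^{j}l_n^{(j)}(|\zeta'|^2/2)$. Substituting $\overline{\zeta'} = 2i\epsilon^{-1/2}\overline\zeta$ gives $2^{-1/2}i\overline{\zeta'} = -2^{1/2}\epsilon^{-1/2}\rho\,e^{-i\theta}$, while $\ell_{j,n}(2\epsilon^{-1/2}\rho) = i^{j}(2^{1/2}\epsilon^{-1/2}\rho)^{j}l_n^{(j)}(|\zeta'|^2/2)$ supplies the radial part; collecting these yields $l_{n,n+j} = (-1)^{j}i^{-j}\ell_{j,n}(2\epsilon^{-1/2}\rho)e^{-ij\theta}$, and the stray $(-1)^{j}$ cancels the $(-1)^{j}$ inside $(-1)^{n+j}$, leaving $2\epsilon^{-1}(-1)^n i^{-|j|}\ell_{|j|,n}(2\epsilon^{-1/2}\rho)e^{-ij\theta}$. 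For $j < 0$ I would instead use $e_{n,n-j} = 2\epsilon^{-1}(-1)^{n}l_{n-j,n}(\cdots)$; now $n-j > n$ selects the $m\ge n$ branch, contributing the \emph{unconjugated} factor $(2^{-1/2}i\zeta')^{|j|}l_n^{(|j|)}(|\zeta'|^2/2)$, and $2^{-1/2}i\zeta' = 2^{1/2}\epsilon^{-1/2}\rho\,e^{i\theta}$ gives $e^{i|j|\theta} = e^{-ij\theta}$ directly (since $|j| = -j$), so the same radial identity produces the claimed expression with $(-1)^n$ already correct and no leftover sign.

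The main obstacle is nothing deep but rather the disciplined tracking of the four interacting sign and phase sources: the conjugation $\zeta' \mapsto \overline{\zeta'}$ that distinguishes the two branches of $l_{m,n}$, the factor $i$ inside $(2^{-1/2}i\,\cdot\,)^{|j|}$, the global $(-1)^m$ from the inverse-transform formula, and the quarter-turn phase $e^{\mp i\pi/2}$ that is absorbed into $e^{-ij\theta}$. I would guard against an off-by-a-sign error by checking the seam $j = 0$, where both branches and both cases collapse to $\varepsilon_{0,n}'(\rho,\theta) = 2\epsilon^{-1}(-1)^n l_n^{(0)}(2\epsilon^{-1}\rho^2)$, a purely radial object with no angular dependence, confirming the normalization of the prefactor $2\epsilon^{-1}(-1)^n i^{-|j|}$.
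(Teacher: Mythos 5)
Your proposal is correct and follows essentially the same route as the paper's own proof: both start from the Cartesian formula $e_{m,n}(x,y) = 2\epsilon^{-1}(-1)^m l_{n,m}(2\epsilon^{-1/2}y, -2\epsilon^{-1/2}x)$, recognize the argument as the quarter-turn plus dilation $\zeta \mapsto -2i\epsilon^{-1/2}\zeta$, and then track the resulting powers of $i$ through the two branches of $l_{m,n}$ for $j \ge 0$ and $j < 0$. Your explicit cancellation of $(-1)^j$ against $(-1)^{n+j}$, the $j=0$ seam check, and the note that the statement's ``$\zeta = x + i\epsilon^{-1}y$'' should read $\zeta = x+iy$ are all consistent with (and slightly tidier than) the paper's calculation.
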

	
	\begin{proof}
	We observe that
	\begin{align}
	\zeta &= x +iy\\
	\zeta' &= y -ix = -i(x +iy) = -i\zeta,\\
	\overline \zeta' &= i\overline \zeta.
	\end{align}
	One then has
	\begin{align}
	e_{m, n}(x, y) &= 2\epsilon^{-1}(-1)^ml_{n, m}(2\epsilon^{-1/2}y, -2\epsilon^{-1/2}x)\\
	&= 2\epsilon^{-1}(-1)^m[2^{-1/2}i(2\epsilon^{-1/2}\zeta')]^{n -m}l^{(n -m)}_m(|2\epsilon^{-1/2}\zeta'|^2/2),\quad n \ge m,\\
	&= 2\epsilon^{-1}(-1)^m[2^{-1/2}i(-2\epsilon^{-1/2}i\zeta)]^{n -m}l^{(n -m)}_m(|2\epsilon^{-1/2}\zeta|^2/2)\\
	&= 2\epsilon^{-1}(-1)^mi^{m -n}[2^{-1/2}i(2\epsilon^{-1/2}\zeta)]^{n -m}l^{(n -m)}_m(|2\epsilon^{-3/2}\zeta|^2/2)\\
	&= 2\epsilon^{-1}i^{-(m +n)}[2^{-1/2}i(2\epsilon^{-1/2}\zeta)]^{n -m}l^{(n -m)}_m(|2\epsilon^{-1/2}\zeta|^2/2),
	\end{align}
	
	\begin{align}
	e_{m, n}(x, y) &= 2\epsilon^{-1}(-1)^ml_{n, m}(2\epsilon^{-1/2}y, -2\epsilon^{-1/2}x)\\
	&= 2\epsilon^{-1}(-1)^m[2^{-1/2}i(2\epsilon^{-1/2}\overline \zeta')]^{m -n}l^{(m -n)}_n(|2\epsilon^{-1/2}\overline \zeta'|^2/2),\quad n < m,\\
	&= 2\epsilon^{-1}(-1)^m[2^{-1/2}i(2\epsilon^{-1/2}i\overline \zeta)]^{m -n}l^{(m -n)}_n(|2\epsilon^{-1/2}\zeta|^2/2)\\
	&= 2\epsilon^{-1}(-1)^mi^{m -n}[2^{-1/2}i(2\epsilon^{-1/2}\overline \zeta)]^{m -n}l^{(m -n)}_n(|2\epsilon^{-1/2}\zeta|^2/2)\\
	&= 2\epsilon^{-1}i^{-(m +n)}[2^{-1/2}i(2\epsilon^{-1/2}\overline \zeta)]^{m -n}l^{(m -n)}_n(|2\epsilon^{-1/2}\zeta|^2/2),
	\end{align}
	and in polar coordinates
	\begin{align}
	e_{m, n}'(\rho, \theta) &= 2\epsilon^{-1}i^{-(m +n)}[2^{-1/2}i(2\epsilon^{-1/2}\rho e^{i\theta})]^{n -m}l^{(n -m)}_m(\{2\epsilon^{-1/2}\rho\}^2/2),\quad n \ge m,\\
	e_{m, n}'(\rho, \theta) &= 2\epsilon^{-1}i^{-(m +n)}[2^{-1/2}i(2\epsilon^{-1/2}\rho e^{-i\theta})]^{m -n}l^{(m -n)}_n(\{2\epsilon^{-1/2}\rho\}^2/2),\quad n < m.
	\end{align}
	Furthermore
	\begin{align}
	e_{m, m -j}'(\rho, \theta) &= 2\epsilon^{-1}i^{-(2m -j)}[2^{-1/2}i(2\epsilon^{-1/2}\rho e^{i\theta})]^{-j}l^{(-j)}_m(\{2\epsilon^{-1/2}\rho\}^2/2),\quad j \le 0,\\
	&= 2\epsilon^{-1}i^{-(2m +|j|)}[2^{-1/2}i(2\epsilon^{-1/2}\rho)]^{|j|}l^{(|j|)}_m(\{2\epsilon^{-1/2}\rho\}^2/2)e^{-ij\theta}\\
	&= 2\epsilon^{-1}(-1)^mi^{-|j|}\ell_{|j|, m}(2\epsilon^{-1/2}\rho)e^{-ij\theta},
	\end{align}
	\begin{align}
	e_{n +j, n}'(\rho, \theta) &= 2\epsilon^{-1}i^{-(2n +j)}[2^{-1/2}i(2\epsilon^{-1/2}\rho e^{-i\theta})]^jl^{(j)}_n(\{2\epsilon^{-1/2}\rho\}^2/2),\quad j > 0,\\
	&= 2\epsilon^{-1}i^{-(2n +|j|)}[2^{-1/2}i(2\epsilon^{-1/2}\rho)]^{|j|}l^{(|j|)}_n(\{2\epsilon^{-1/2}\rho\}^2/2)e^{-ij\theta}\\
	&= 2\epsilon^{-1}(-1)^ni^{-|j|}\ell_{|j|, n}(2\epsilon^{-1/2}\rho)e^{-ij\theta},
	\end{align}
	and therefore
	\begin{align}
	\varepsilon_{j, n}'(\rho, \theta) &= 2\epsilon^{-1}(-1)^ni^{-|j|}\ell_{|j|, n}(2\epsilon^{-1/2}\rho)e^{-ij\theta}.
	\end{align}
	\end{proof}

	\section{Spectral vectors of the Laplacian}
	
	\begin{defn}
	Let $\omega_{\lambda, j} : \mathbb R^2 \to \mathbb C$, where $\lambda \in \mathbb R_+$, $j \in \mathbb Z$, and $\omega_{\lambda, j}' : (\rho, \theta) \mapsto \omega_{\lambda, j}'(\rho, \theta)$, be given by
	\begin{align}
	\omega_{\lambda, j}'(\rho, \theta) := i^{-|j|}J_{|j|}(\lambda^{1/2}\rho)e^{ij\theta},
	\end{align}
	where $\zeta = x +iy = \rho e^{i\theta}$. Furthermore let
	\begin{align}
	\Omega_{\lambda, j} := (-1)^j\sum_{n = 0}^\infty\ell_{|j|, n}(\epsilon^{1/2}\lambda^{1/2})\mathcal E_{-j, n}.
	\end{align}
	\end{defn}
	
	\begin{prop}
	The spectral vectors, $\phi_\lambda$, of the Laplacian, $l$, may be represented by
	\begin{align}
	\phi_\lambda &= (2\pi)^{-1/2}\sum_{j \in \mathbb Z}\phi_{\lambda, j}\omega_{\lambda, j},\quad \sum_{j \in \mathbb Z}|\phi_{\lambda, j}|^2 = 1,
	\end{align}
	and satisfy
	\begin{align}
	&l\phi_\lambda = \lambda\phi_\lambda,\quad \int_{\mathbb R^2}\rho\mathrm d\rho\mathrm d\theta\ \phi_\lambda(\rho, \theta)\overline \phi_\eta(\rho, \theta) = \delta(\lambda -\eta),
	\end{align}
	in the distributional sense.
	\end{prop}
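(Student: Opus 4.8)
The plan is to establish the three assertions in turn: that each $\omega_{\lambda,j}$ is a generalized eigenvector of $l$ with eigenvalue $\lambda$, that the family $\{\omega_{\lambda,j}\}_{j\in\mathbb Z}$ spans the $\lambda$-eigenspace so that the displayed sum is the general spectral vector, and finally that the delta normalization is a consequence of $\sum_j|\phi_{\lambda,j}|^2=1$.

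First I would verify the eigenvalue equation by separation of variables. Writing $l=-\Delta$ in polar coordinates as $-\partial_\rho^2-\rho^{-1}\partial_\rho-\rho^{-2}\partial_\theta^2$ and inserting the ansatz $\omega_{\lambda,j}'(\rho,\theta)=i^{-|j|}f(\rho)e^{ij\theta}$, the angular derivative contributes $+j^2/\rho^2$, so $l\omega_{\lambda,j}=\lambda\omega_{\lambda,j}$ reduces to
\begin{align}
f''(\rho)+\rho^{-1}f'(\rho)+\left(\lambda-\frac{j^2}{\rho^2}\right)f(\rho)=0.
\end{align}
After the substitution $u=\lambda^{1/2}\rho$ this is Bessel's equation of order $|j|$, whose solution regular at the origin is $f(\rho)=J_{|j|}(\lambda^{1/2}\rho)$; the fixed phase $i^{-|j|}$ plays no role here. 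By linearity $l\phi_\lambda=\lambda\phi_\lambda$ for any admissible coefficient sequence.

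Second, for completeness I would argue that the generalized eigenspace of $l$ at eigenvalue $\lambda$ is spanned by the plane waves $e^{i\lambda^{1/2}\rho\cos(\theta-\vartheta)}$ as $\vartheta$ ranges over the circle $|\xi|^2=\lambda$. The Jacobi--Anger expansion $e^{i\lambda^{1/2}\rho\cos(\theta-\vartheta)}=\sum_j i^jJ_j(\lambda^{1/2}\rho)e^{ij(\theta-\vartheta)}$ exhibits each plane wave as a superposition of the $\omega_{\lambda,j}$, its $j$-th angular mode being a nonzero constant multiple of $\omega_{\lambda,j}'$. Expanding the weight on the circle in a Fourier series in $\vartheta$ therefore produces exactly a sequence $\{\phi_{\lambda,j}\}$, and conversely every square-summable sequence yields a spectral vector; this identifies the displayed sum with the general spectral vector and reduces its $L^2$-type normalization to that of the coefficient sequence.

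Finally I would compute the pairing directly. The $\theta$-integral of $e^{i(j-j')\theta}$ over $[0,2\pi)$ gives $2\pi\delta_{j,j'}$, which collapses the double sum, cancels the prefactor, and removes the conjugate phases via $i^{-|j|}i^{|j|}=1$, leaving
\begin{align}
\int_{\mathbb R^2}\rho\,\mathrm d\rho\,\mathrm d\theta\ \phi_\lambda\overline\phi_\eta=\sum_{j\in\mathbb Z}\phi_{\lambda,j}\overline\phi_{\eta,j}\int_0^\infty\rho\,\mathrm d\rho\ J_{|j|}(\lambda^{1/2}\rho)J_{|j|}(\eta^{1/2}\rho).
\end{align}
The radial integral is the Hankel orthogonality relation $\int_0^\infty\rho\,J_\nu(a\rho)J_\nu(b\rho)\,\mathrm d\rho=a^{-1}\delta(a-b)$, which together with the change of variable $\delta(\lambda^{1/2}-\eta^{1/2})=2\lambda^{1/2}\delta(\lambda-\eta)$ and the constraint $\sum_j|\phi_{\lambda,j}|^2=1$ produces the stated delta normalization, the overall numerical prefactor being pinned down at precisely this step. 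I expect the main obstacle to lie here rather than in the formal algebra: the $\omega_{\lambda,j}$ are non-normalizable continuous-spectrum eigenfunctions, so the Hankel orthogonality must be read as an identity of tempered distributions (a Weber--Schafheitlin limit), and one must separately confirm genuine completeness of $\{\omega_{\lambda,j}\}_j$ within the $\lambda$-eigenspace rather than mere membership, while tracking the conventions that fix the constant against $(2\pi)^{-1/2}$.
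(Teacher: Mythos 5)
Your proposal is correct and follows essentially the same route as the paper: the representation is obtained from the Jacobi--Anger expansion of plane waves with frequencies on the circle $\varrho=\lambda^{1/2}$ (equivalently, a delta distribution on that circle in Fourier space), the normalization from the Hankel closure relation, and your separation-of-variables check of $l\omega_{\lambda,j}=\lambda\omega_{\lambda,j}$ is exactly the paper's own ``alternative approach'' in the lemma and proposition that immediately follow. The one constant you rightly flag does matter: $\int_0^\infty\rho\,J_{|j|}(\lambda^{1/2}\rho)J_{|j|}(\eta^{1/2}\rho)\,\mathrm d\rho=2\delta(\lambda-\eta)$, so with $\sum_j|\phi_{\lambda,j}|^2=1$ the pairing comes out as $2\delta(\lambda-\eta)$ --- a factor that the paper's own formal manipulation of the closure relation silently drops.
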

	
	\begin{proof}
	We note that the Bessel functions of the first kind, $J_\alpha(\zeta)$, satisfy \cite[p.~361]{AS}:
	\begin{align}
	e^{(t -1/t)\zeta/2} &= \sum_{n \in \mathbb Z}t^nJ_n(\zeta),\quad \zeta \in \mathbb C, t \in \mathbb C\setminus\{0\},
	\end{align}
	and therefore for $t = -ie^{i\theta}$ one has
	\begin{align}
	e^{(-ie^{i\theta} -ie^{-i\theta})x/2} &= e^{-ix\cos\theta} = \sum_{j \in \mathbb Z}(-i)^je^{ij\theta}J_j(x),\quad x \in \mathbb R, \theta \in [0, 2pi).
	\end{align}
	The Fourier transform can then be written in polar form as:
	\begin{align}
	\mathscr Ff(x, y) &= (2\pi)^{-1}\int_{\mathbb R^2}\mathrm d\xi_x\mathrm d\xi_y\ e^{-i(x\xi_x +y\xi_y)}f(\xi_x, \xi_y)\\
	\mathscr Ff'(\rho, \theta) &= (2\pi)^{-1}\int_{\mathbb R^2}\varrho\mathrm d\varrho\mathrm d\vartheta\ e^{-i\varrho \rho\cos(\theta -\vartheta)}f'(\varrho, \vartheta)\\
	&= (2\pi)^{-1}\int_{\mathbb R^2}\varrho\mathrm d\varrho\mathrm d\vartheta\ \sum_{j \in \mathbb Z}(-i)^je^{ij(\theta -\vartheta)}J_j(\varrho \rho)f'(\varrho, \vartheta).
	\end{align}
	
	As is well-known, the spectral vectors of the Laplacian, $l$, may be written as
	\begin{align}
	l\phi_\lambda &= \lambda\phi_\lambda,\quad \varrho^2\varphi_\lambda = \lambda\varphi_\lambda,\quad \lambda \in \mathbb R_+,\quad \phi_\lambda = \mathscr F\varphi_\lambda,
	\end{align}
	and therefore
	\begin{align}
	\varphi'(\varrho, \vartheta) &= 2\delta(\varrho^2 -\lambda)\varphi_\lambda'(\vartheta)\\
	&= \lambda^{-1/2}[\delta(\varrho -\lambda^{1/2}) +\delta(\varrho +\lambda^{1/2})]\varphi_\lambda'(\vartheta),
	\end{align}
	for some choice of $\varphi_\lambda$. We also note that \cite[\S~10.4]{DLMF}
	\begin{align}
	J_{-n}(\zeta) &= (-1)^nJ_n(\zeta),\quad \zeta \in \mathbb C,
	\end{align}
	then since
	\begin{align}
	i^{-(-j)}J_{-j}(x) = i^j(-1)^jJ_j(x) = i^{-j}J_j(x),
	\end{align}
	it is the case that
	\begin{align}
	i^{-j}J_j(x) = i^{-|j|}J_{|j|}(x).
	\end{align}
	One may then observe
	\begin{align}
	\phi_\lambda'(\rho, \theta) &= \mathscr F\varphi'(\rho, \theta)\\
	&= (2\pi)^{-1}\int_0^\infty\varrho\mathrm d\varrho\int_0^{2\pi}\mathrm d\vartheta\ \sum_{j \in \mathbb Z}(-i)^je^{ij(\theta -\vartheta)}J_j(\varrho\rho)\varphi'(\varrho, \vartheta)\\
	&= (2\pi)^{-1}\int_0^\infty\varrho\mathrm d\varrho\int_0^{2\pi}\mathrm d\vartheta\ \sum_{j \in \mathbb Z}(-i)^je^{ij(\theta -\vartheta)}J_j(\varrho\rho)\\
	&\qquad \times\lambda^{-1/2}[\delta(\varrho -\lambda^{1/2}) +\delta(\varrho +\lambda^{1/2})]\varphi_\lambda'(\vartheta)
	\end{align}
	\begin{align}
	&= (2\pi)^{-1}\int_0^{2\pi}\mathrm d\vartheta\ \sum_{j \in \mathbb Z}(-i)^je^{ij(\theta -\vartheta)}J_j(\lambda^{1/2}\rho)\varphi_\lambda'(\vartheta)\\
	&= (2\pi)^{-1}\sum_{j \in \mathbb Z}J_j(\lambda^{1/2}\rho)(-i)^je^{ij\theta}\int_0^{2\pi}\mathrm d\vartheta\ e^{-ij\vartheta}\varphi_\lambda'(\vartheta)\\
	&= (2\pi)^{-1/2}\sum_{j \in \mathbb Z}\phi_{\lambda, j}i^{-|j|}J_{|j|}(\lambda^{1/2}\rho)e^{ij\theta},
	\end{align}
	where each step should be taken in the distributional sense and therefore can be justified weakly on a sufficiently well behaved, densely defined, subspace of $\mathscr C$, and where we have denoted
	\begin{align}
	\phi_{\lambda, j} = (2\pi)^{-1/2}\int_0^{2\pi}\mathrm d\vartheta\ e^{-ij\vartheta}\varphi_\lambda'(\vartheta).
	\end{align}
	
	We recall that the $J_\alpha(\zeta)$ satisfy \cite[\S~1.17]{DLMF}:
	\begin{align}
	\delta(x -y) &= \int_0^\infty\mathrm dt\ xtJ_\alpha(xt)J_\alpha(yt)\\
	&= \int_0^\infty2\mathrm du\ xJ_\alpha(xu^{1/2})J_\alpha(yu^{1/2}),\quad u = t^2,\quad \mathrm du = 2t\mathrm dt,\\
	(2x)^{-1}\delta(x -y) &= \int_0^\infty\mathrm du\ J_\alpha(xu^{1/2})J_\alpha(yu^{1/2}),
	\end{align}
	and therefore
	\begin{align}
	\delta(x -y) &= \int_0^\infty\mathrm dt\ xtJ_\alpha(xt)J_\alpha(yt)\\
	&= \int_0^\infty\mathrm dt\ (xy)^{1/2}tJ_\alpha(xt)J_\alpha(yt)\\
	&= \int_0^\infty\mathrm du\ uJ_\alpha(x\{xy\}^{-1/4}u)J_\alpha(y\{xy\}^{-1/4}u),\quad t = (xy)^{-1/4}u,\\
	&= \int_0^\infty\mathrm du\ uJ_\alpha(x^{1/2}u)J_\alpha(y^{1/2}u)\\
	&= \int_0^\infty\mathrm dv\ J_\alpha(\{2xv\}^{1/2})J_\alpha(\{2yv\}^{1/2}),\quad v = 2^{-1}u^2,\quad \mathrm dv = u\mathrm du.
	\end{align}
	Lastly, one would like to normalize the spectral vectors in the distributional sense. To this end one may find
	\begin{align}
	&\int_0^\infty\rho\mathrm d\rho\int_0^{2\pi}\mathrm dy\ \phi_\lambda(\rho, \theta)\overline\phi_\eta(\rho, \theta)\\
	&\qquad = (2\pi)^{-1}\int_0^\infty\mathrm d\rho\ \rho\sum_{j, k \in \mathbb Z}\phi_{\lambda, j}J_j(\lambda^{1/2}\rho)\overline\phi_{\eta, k}J_k(\eta^{1/2}\rho)\int_0^{2\pi}\mathrm d\theta\ e^{i(j -k)\theta}\\
	&\qquad = \int_0^\infty\mathrm d\rho\ \rho\sum_{j, k \in \mathbb Z}\phi_{\lambda, j}\overline\phi_{\eta, k}J_j(\lambda^{1/2}\rho)J_k(\eta^{1/2}\rho)\delta_{j, k}
	\end{align}
	\begin{align}
	&\qquad = \sum_{j \in \mathbb Z}\phi_{\lambda, j}\overline\phi_{\eta, j}\int_0^\infty\mathrm d\rho\ \rho J_j(\lambda^{1/2}\rho)J_j(\eta^{1/2}\rho)\\
	&\qquad = \sum_{j \in \mathbb Z}\phi_{\lambda, j}\overline\phi_{\eta, j}\delta(\lambda -\eta)\\
	&\qquad = \delta(\lambda -\eta)\sum_{j \in \mathbb Z}|\phi_{\lambda, j}|^2,
	\end{align}
	where, again, each step must be taken in the appropriate distributional sense. We observe that
	\begin{align}
	\sum_{j \in \mathbb Z}|\phi_{\lambda, j}|^2 &= \int_0^{2\pi}\mathrm d\vartheta\ |\varphi_\lambda'(\vartheta)|^2,
	\end{align}
	require
	\begin{align}
	\sum_{j \in \mathbb Z}|\phi_{\lambda, j}|^2 = 1,
	\end{align}
	and find
	\begin{align}
	\int_{\mathbb R^2}\rho\mathrm d\rho\mathrm d\theta\ \phi_\lambda'(\rho, \theta)\overline \phi_\eta'(\rho, \theta) = \delta(\lambda -\eta).
	\end{align}
	\end{proof}
	
	An alternative, and more standard, approach involves using separation of variables to reduce the original spectral equation to the Bessel equation via a suitable ansatz.
	\begin{lem}
	For $\zeta = x +iy = \rho e^{i\theta}$, t is the case that
	\begin{align}
	\partial_\zeta &= 2^{-1}\partial_x -2^{-1}i\partial_y = 2^{-1}e^{-i\theta}\partial_\rho -i2^{-1}\rho^{-1}e^{-i\theta}\partial_\theta,\\
	\partial_{\overline \zeta} &= 2^{-1}\partial_x +2^{-1}i\partial_y = 2^{-1}e^{i\theta}\partial_\rho +i2^{-1}\rho^{-1}e^{i\theta}\partial_\theta.
	\end{align}
	\end{lem}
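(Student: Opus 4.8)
The plan is to reduce the entire statement to a single change-of-variables computation via the chain rule. The first equalities in each line are merely the definitions of the Wirtinger derivatives, $\partial_\zeta = 2^{-1}(\partial_x - i\partial_y)$ and $\partial_{\overline\zeta} = 2^{-1}(\partial_x + i\partial_y)$, so the genuine content of the lemma resides entirely in the second equalities, which re-express these operators in polar coordinates.

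First I would differentiate the defining relations $x = \rho\cos\theta$, $y = \rho\sin\theta$ through the chain rule to obtain the polar vector fields in terms of the Cartesian ones:
\begin{align}
\partial_\rho &= \cos\theta\,\partial_x + \sin\theta\,\partial_y,\\
\partial_\theta &= -\rho\sin\theta\,\partial_x + \rho\cos\theta\,\partial_y.
\end{align}
The associated Jacobian has determinant $\rho$, so for $\rho \ne 0$ the system is invertible, and solving it recovers
\begin{align}
\partial_x &= \cos\theta\,\partial_\rho - \rho^{-1}\sin\theta\,\partial_\theta,\\
\partial_y &= \sin\theta\,\partial_\rho + \rho^{-1}\cos\theta\,\partial_\theta.
\end{align}

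Next I would substitute these expressions into the Wirtinger definitions and collect the $\partial_\rho$ and $\partial_\theta$ contributions. For $\partial_\zeta$ the coefficient of $\partial_\rho$ becomes $\cos\theta - i\sin\theta = e^{-i\theta}$, while the coefficient of $\rho^{-1}\partial_\theta$ becomes $-(\sin\theta + i\cos\theta) = -ie^{-i\theta}$, using $\sin\theta + i\cos\theta = i(\cos\theta - i\sin\theta)$; factoring out $2^{-1}$ then yields the claimed formula. The computation for $\partial_{\overline\zeta}$ is identical up to the sign of $i$, giving $\cos\theta + i\sin\theta = e^{i\theta}$ and $-\sin\theta + i\cos\theta = ie^{i\theta}$ as the respective coefficients.

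There is no substantive obstacle here; the only points requiring minor care are the inversion of the Jacobian, which is immediate from $\det = \rho$, and the recognition of the trigonometric combinations as complex exponentials via Euler's formula. Throughout, the identities are to be understood as operator identities acting on suitably differentiable functions, with the coordinate singularity at $\rho = 0$ excluded as usual.
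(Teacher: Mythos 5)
Your proposal is correct: the chain-rule computation, the Jacobian inversion, and the trigonometric recombinations $\cos\theta \mp i\sin\theta = e^{\mp i\theta}$ and $\sin\theta + i\cos\theta = ie^{-i\theta}$ all check out, and the caveat about $\rho = 0$ is appropriately noted. The route differs from the paper's in its choice of intermediate system. You work entirely in real coordinates: you express $(\partial_\rho, \partial_\theta)$ in terms of $(\partial_x, \partial_y)$, invert the real Jacobian, and substitute into the Wirtinger combinations at the end. The paper instead stays in the complex coordinates throughout: it first obtains $\partial_\zeta = 2^{-1}\partial_x - 2^{-1}i\partial_y$ from $x = 2^{-1}(\zeta + \overline\zeta)$, $y = -2^{-1}i(\zeta - \overline\zeta)$, then treats $(\rho^2, e^{i\theta})$ as functions of $(\zeta, \overline\zeta)$ to derive the pair of operator identities
\begin{align}
\rho\partial_\rho &= \zeta\partial_\zeta + \overline\zeta\partial_{\overline\zeta},\qquad -i\partial_\theta = \zeta\partial_\zeta - \overline\zeta\partial_{\overline\zeta},
\end{align}
and finally solves this two-by-two system for $\partial_\zeta$ and $\partial_{\overline\zeta}$. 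Your version is the more elementary and familiar one, and arguably cleaner as a self-contained proof of the lemma. The paper's version has the side benefit that its intermediate identities are reused later: the relation $-i\partial_\theta = \zeta\partial_\zeta - \overline\zeta\partial_{\overline\zeta}$ is exactly what the paper invokes again in identifying the rotation operator $J = m_xp_y - m_yp_x$ with $-i\epsilon\partial_\theta$, so deriving it here amortizes the work. Either argument is complete and acceptable.
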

	
	\begin{proof}
	\begin{align}
	\zeta &= \rho e^{i\theta},\quad \rho^2 = \zeta\overline \zeta,\\
	e^{i\theta} &= \frac{\zeta}{|\zeta|} = \frac{\zeta^{1/2}}{\overline \zeta^{1/2}},\quad i\theta = 2^{-1}\log \zeta -2^{-1}\log\overline \zeta,\\
	x &= 2^{-1}\zeta +2^{-1}\overline \zeta,\quad y = -2^{-1}i\zeta +2^{-1}i\overline \zeta,
	\end{align}
	
	\begin{align}
	\partial_\zeta &= \partial_\zeta x\partial_x +\partial_\zeta y\partial_y\\
	&= \partial_\zeta(2^{-1}\zeta +2^{-1}\overline \zeta)\partial_x +\partial_\zeta(-2^{-1}i\zeta +2^{-1}i\overline \zeta)\partial_y\\
	&= 2^{-1}\partial_x -2^{-1}i\partial_y,\\
	\partial_{\overline \zeta} &= 2^{-1}\partial_x +2^{-1}i\partial_y,
	\end{align}
	
	\begin{align}
	\partial_{\rho^2}\rho &= (\partial_\rho\rho^2)^{-1} = 2^{-1}\rho^{-1},\\
	\partial_{e^{i\theta} }\theta &= (\partial_\theta e^{i\theta})^{-1} = -ie^{-i\theta},
	\end{align}
	
	\begin{align}
	\partial_{\rho^2} &= \partial_{\rho^2}\zeta\partial_\zeta +\partial_{\rho^2}\overline \zeta\partial_{\overline \zeta}\\
	\partial_{\rho^2}\rho\partial_\rho&= 2^{-1}\rho^{-1}e^{i\theta}\partial_\zeta +2^{-1}\rho^{-1}e^{-i\theta}\partial_{\overline \zeta}\\
	2^{-1}\rho^{-1}\partial_\rho &= 2^{-1}\rho^{-1}e^{i\theta}\partial_\zeta +2^{-1}\rho^{-1}e^{-i\theta}\partial_{\overline \zeta}\\
	\rho\partial_\rho &= \rho e^{i\theta}\partial_\zeta +\rho e^{-i\theta}\partial_{\overline \zeta}\\
	\rho\partial_\rho &= \zeta\partial_\zeta +\overline \zeta\partial_{\overline \zeta},
	\end{align}
	
	\begin{align}
	\partial_{e^{i\theta} } &= \partial_{e^{i\theta} }\zeta\partial_\zeta +\partial_{e^{i\theta} }\overline \zeta\partial_{\overline \zeta}\\
	\partial_{e^{i\theta} }\theta\partial_\theta &= \partial_{e^{i\theta} }\zeta\partial_\zeta +\partial_{e^{i\theta} }\overline \zeta\partial_{\overline \zeta}\\
	-ie^{-i\theta}\partial_\theta &= \rho\partial_\zeta -\rho e^{-2i\theta}\partial_{\overline \zeta}\\
	-i\partial_\theta &= \rho e^{i\theta}\partial_\zeta -\rho e^{-i\theta}\partial_{\overline \zeta}\\
	&= \zeta\partial_\zeta -\overline \zeta\partial_{\overline \zeta},
	\end{align}
	
	\begin{align}
	2\zeta\partial_\zeta &= \rho\partial_\rho -i\partial_\theta\\
	\partial_\zeta &= 2^{-1}e^{-i\theta}\partial_\rho -i2^{-1}\rho^{-1}e^{-i\theta}\partial_\theta\\
	\partial_{\overline \zeta} &= 2^{-1}e^{i\theta}\partial_\rho +i2^{-1}\rho^{-1}e^{i\theta}\partial_\theta.
	\end{align}
	\end{proof}
	
	\begin{prop}
	The radial part of the equation $l\phi_\lambda = \lambda\phi_\lambda$ takes the form of the Bessel equation via the method of separation of variables and allows one to arrive at the same solution as above.
	\end{prop}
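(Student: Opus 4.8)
The plan is to express $l = -\Delta = -\partial_x^2 - \partial_y^2$ in polar coordinates using the Wirtinger derivatives computed in the preceding lemma, reduce the eigenvalue equation $l\phi_\lambda = \lambda\phi_\lambda$ to an ordinary differential equation in $\rho$ by separation of variables, and identify that equation with Bessel's equation of order $|j|$, whose regular solution reproduces the radial factor $J_{|j|}(\lambda^{1/2}\rho)$ obtained above.

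First I would compute the Laplacian in polar form. Since $\partial_\zeta = 2^{-1}(\partial_x - i\partial_y)$ and $\partial_{\overline\zeta} = 2^{-1}(\partial_x + i\partial_y)$, and mixed partials commute, one has $\Delta = \partial_x^2 + \partial_y^2 = 4\partial_\zeta\partial_{\overline\zeta}$. Substituting the polar expressions
\[
\partial_\zeta = 2^{-1}e^{-i\theta}\partial_\rho - i2^{-1}\rho^{-1}e^{-i\theta}\partial_\theta, \qquad \partial_{\overline\zeta} = 2^{-1}e^{i\theta}\partial_\rho + i2^{-1}\rho^{-1}e^{i\theta}\partial_\theta
\]
from the lemma and composing the two operators, differentiating the $\rho$- and $\theta$-dependent coefficients of $\partial_{\overline\zeta}$ when $\partial_\zeta$ acts on them, the cross terms collapse to leave the familiar polar Laplacian
\[
\Delta = \partial_\rho^2 + \rho^{-1}\partial_\rho + \rho^{-2}\partial_\theta^2.
\]

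Next I would impose the separated ansatz $\phi_\lambda'(\rho, \theta) = R_{\lambda, j}(\rho)e^{ij\theta}$ for a fixed angular mode, using that $e^{ij\theta}$ is an eigenfunction of $\partial_\theta^2$ with eigenvalue $-j^2$. The equation $-\Delta\phi_\lambda' = \lambda\phi_\lambda'$ then reduces to the radial equation
\[
R'' + \rho^{-1}R' + (\lambda - j^2\rho^{-2})R = 0,
\]
and the rescaling $u = \lambda^{1/2}\rho$ brings this to the standard Bessel equation of order $|j|$,
\[
u^2 R_{uu} + u R_u + (u^2 - j^2)R = 0.
\]
Imposing regularity at the origin selects $R_{\lambda, j}(\rho) \propto J_{|j|}(\lambda^{1/2}\rho)$ and discards the singular solution (the Bessel function of the second kind). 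Restoring the angular sum and fixing the phase $i^{-|j|}$ together with the normalization $\sum_{j \in \mathbb Z}|\phi_{\lambda, j}|^2 = 1$ exactly as in the previous proposition recovers $\phi_\lambda' = (2\pi)^{-1/2}\sum_{j \in \mathbb Z}\phi_{\lambda, j}\omega_{\lambda, j}'$, which is the solution obtained there by Fourier transform.

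The main obstacle, and indeed the only computation requiring care, is the operator composition yielding the polar Laplacian, since the coefficients $e^{\pm i\theta}$ and $\rho^{-1}$ do not commute with $\partial_\rho$ and $\partial_\theta$; one must verify that the first-order and mixed cross terms cancel to produce the clean second-order form. Everything downstream is the classical reduction to Bessel's equation and a routine matching of constants.
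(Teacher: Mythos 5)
Your proposal is correct and follows essentially the same route as the paper: factor $l = -4\partial_\zeta\partial_{\overline\zeta}$, substitute the polar Wirtinger expressions from the preceding lemma, verify the cancellation of cross terms to obtain $l = -\partial_\rho^2 - \rho^{-1}\partial_\rho - \rho^{-2}\partial_\theta^2$, separate variables with the ansatz $g_j'(\rho)e^{ij\theta}$, and rescale by $\lambda^{1/2}$ to reach Bessel's equation. Your explicit remark that regularity at the origin discards the second-kind solution is a small clarification the paper leaves implicit, but the argument is otherwise identical.
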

	
	\begin{proof}
	\begin{align}
	-\partial_\zeta\partial_{\overline \zeta}f = -(2^{-1}\partial_x -i2^{-1}\partial_y)(2^{-1}\partial_x +i2^{-1}\partial_y)f = 2^{-2}(-\partial_x^2 -\partial_y^2)f = 2^{-2}lf.
	\end{align}
	
	\begin{align}
	lf &= -2^2\partial_\zeta\partial_{\overline \zeta}f\\
	&= -2^2(2^{-1}e^{-i\theta}\partial_\rho -i2^{-1}\rho^{-1}e^{-i\theta}\partial_\theta)(2^{-1}e^{i\theta}\partial_\rho +i2^{-1}\rho^{-1}e^{i\theta}\partial_\theta)f\\
	&= -2^2(2^{-2}\partial_\rho^2 -i2^{-2}\rho^{-2}\partial_\theta +i2^{-2}\rho^{-1}\partial_\rho\partial_\theta +2^{-2}\rho^{-1}\partial_\rho\\
	&\qquad -i2^{-2}\rho^{-1}\partial_\rho\partial_\theta +i2^{-2}\rho^{-2}\partial_\theta +2^{-2}\rho^{-2}\partial_\theta^2)f\\
	&= -2^2(2^{-2}\partial_\rho^2 +2^{-2}\rho^{-1}\partial_\rho +2^{-2}\rho^{-2}\partial_\theta^2)f\\
	&= (-\partial_\rho^2 -\rho^{-1}\partial_\rho -\rho^{-2}\partial_\theta^2)f,
	\end{align}
	
	\begin{align}
	l &= -\partial_\rho^2 -\rho^{-1}\partial_\rho -\rho^{-2}\partial_\theta^2.
	\end{align}
	Take $f_j'(\rho, \theta) = g_j'(\rho)e^{ij\theta}$ then
	\begin{align}
	lf_j &= (-\partial_\rho^2 -\rho^{-1}\partial_\rho -\rho^{-2}\partial_\theta^2)f_j\\
	&= (-\partial_\rho^2 -\rho^{-1}\partial_\rho +j^2\rho^{-2})f_j,\\
	lg_j &= (-\partial_\rho^2 -\rho^{-1}\partial_\rho +j^2\rho^{-2})g_j.
	\end{align}
	Now consider
	\begin{align}
	\lambda g_j &= lg_j,\\
	&= (-\partial_\rho^2 -\rho^{-1}\partial_\rho +j^2\rho^{-2})g_j\\
	0 &= (\partial_\rho^2 +\rho^{-1}\partial_\rho +\lambda -j^2\rho^{-2})g_j\\
	0 &= (\rho^2\partial_\rho^2 +\rho\partial_\rho +\lambda\rho^2 -j^2)g_j\\
	&= [(\lambda^{1/2}\rho)^2\partial_{\lambda^{1/2}\rho}^2 +(\lambda^{1/2}\rho)\partial_{\lambda^{1/2}\rho} +(\lambda^{1/2}\rho)^2 -j^2]g_j\\
	&= (x^2\partial_x^2 +x\partial_x +x^2 -j^2)g_j,\quad x = \lambda^{1/2}\rho,
	\end{align}
	which is Bessel's equation and has a complete set of solutions given by Bessel functions of the first kind. Therefore $g_j'(\rho) = J_j(x) = J_j(\lambda^{1/2}\rho)$. Due to linearity one may, up to overall scale, represent a more general solution, $\phi_\lambda$, as
	\begin{align}
	\phi'_\lambda(\rho, \theta) &= (2\pi)^{-1/2}\sum_{j \in \mathbb Z}\phi_{\lambda, j}\omega_{\lambda, j}'(\rho, \theta),\quad \sum_{j \in \mathbb Z}|\phi_{\lambda, j}|^2 = 1.
	\end{align}
	\end{proof}
	
	\begin{prop}
	Let $\phi_\lambda$ be the spectral vectors of the Laplacian, $l$, represented by
	\begin{align}
	\phi_\lambda &= (2\pi)^{-1/2}\sum_{j \in \mathbb Z}\phi_{\lambda, j}\omega_{\lambda, j},\quad \sum_{j \in \mathbb Z}|\phi_{\lambda, j}|^2 = 1.
	\end{align}
	The Weyl transform of the spectral vectors, $\Phi_\lambda = \mathscr W\phi_\lambda$, have the representation
	\begin{align}
	\Phi_\lambda &= (2\pi)^{-1/2}\sum_{j \in \mathbb Z}\phi_{\lambda, j}\Omega_{\lambda, j}.
	\end{align}
	\end{prop}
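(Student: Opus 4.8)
The plan is to exploit the manifest linearity of $\mathscr W$ and reduce the statement to the single-mode identity $\mathscr W\omega_{\lambda, j} = \Omega_{\lambda, j}$. Once this is established, applying $\mathscr W$ term by term to $\phi_\lambda = (2\pi)^{-1/2}\sum_{j \in \mathbb Z}\phi_{\lambda, j}\omega_{\lambda, j}$ yields the claimed representation of $\Phi_\lambda$ at once, with the interchange of $\mathscr W$ and the $j$-sum justified weakly, exactly as in the construction of $\phi_\lambda$.

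The first step is to compute $\mathscr F\omega_{\lambda, j}$ in polar form. Writing the Fourier kernel through the plane-wave expansion $e^{-i\varrho\rho\cos(\vartheta -\theta)} = \sum_{k \in \mathbb Z}(-i)^ke^{ik(\vartheta -\theta)}J_k(\varrho\rho)$ derived above and inserting $\omega_{\lambda, j}'(\rho, \theta) = i^{-|j|}J_{|j|}(\lambda^{1/2}\rho)e^{ij\theta}$, the angular integral $\int_0^{2\pi}\mathrm d\theta\ e^{-ik\theta}e^{ij\theta} = 2\pi\delta_{j, k}$ collapses the $k$-sum to $k = j$. The surviving radial integral is the Bessel closure relation $\int_0^\infty\rho\mathrm d\rho\ J_{|j|}(\varrho\rho)J_{|j|}(\lambda^{1/2}\rho) = \varrho^{-1}\delta(\varrho -\lambda^{1/2})$ established earlier. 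Using $(-i)^j = i^{-j}$ together with the identity $i^{-j}J_j = i^{-|j|}J_{|j|}$ recorded in the preceding proof, the two order-$j$ Bessel factors become $J_{|j|}$ factors and the accumulated phase reduces to $i^{-2|j|} = (-1)^{|j|} = (-1)^j$, so that $\mathscr F\omega_{\lambda, j}'(\varrho, \vartheta) = (-1)^j\varrho^{-1}\delta(\varrho -\lambda^{1/2})e^{ij\vartheta}$: a mode supported on the circle $\varrho = \lambda^{1/2}$ with angular part $e^{ij\vartheta}$.

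The second step feeds this into $\mathscr W\omega_{\lambda, j} = (2\pi)^{-1}\int_0^\infty\varrho\mathrm d\varrho\int_0^{2\pi}\mathrm d\vartheta\ W_{\varrho, \vartheta}'\mathscr F\omega_{\lambda, j}'(\varrho, \vartheta)$ and substitutes the polar expansion $W_{\varrho, \vartheta}' = \sum_{k \in \mathbb Z}\sum_{n = 0}^\infty\ell_{|k|, n}(\epsilon^{1/2}\varrho)e^{ik\vartheta}\mathcal E_{k, n}$ from the earlier proposition. Here the angular integral $\int_0^{2\pi}\mathrm d\vartheta\ e^{ik\vartheta}e^{ij\vartheta} = 2\pi\delta_{k, -j}$ selects $k = -j$, which is precisely why $\Omega_{\lambda, j}$ carries $\mathcal E_{-j, n}$ rather than $\mathcal E_{j, n}$, while the factor $\varrho^{-1}\delta(\varrho -\lambda^{1/2})$ cancels the Jacobian $\varrho$ and fixes the radial argument at $\varrho = \lambda^{1/2}$. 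Collecting the $(-1)^j$ gives $\mathscr W\omega_{\lambda, j} = (-1)^j\sum_{n = 0}^\infty\ell_{|j|, n}(\epsilon^{1/2}\lambda^{1/2})\mathcal E_{-j, n} = \Omega_{\lambda, j}$, as required.

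The routine but error-prone part, and the main obstacle, is the bookkeeping of the $i$- and $(-1)$-phases: one must correctly combine the $i^{-|j|}$ from the definition of $\omega_{\lambda, j}$, the $(-i)^k$ from the plane-wave expansion, and the order-sign $J_{-n} = (-1)^nJ_n$, so that the net prefactor collapses cleanly to the single $(-1)^j$ of $\Omega_{\lambda, j}$; the identity $i^{-j}J_j = i^{-|j|}J_{|j|}$ is the device that renders these cancellations transparent. A secondary point demanding care is that every manipulation---the interchange of the $j$-sum with $\mathscr W$, the term-by-term use of the Bessel closure relation, and the evaluation against the delta---holds only distributionally, so each should be read as an identity tested against a suitably dense class, in the same spirit as the preceding proposition.
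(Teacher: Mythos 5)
Your proposal is correct, and the second half of your computation --- inserting a circle-supported distribution into the polar expansion $W'_{\varrho,\vartheta} = \sum_{k,n}\ell_{|k|,n}(\epsilon^{1/2}\varrho)e^{ik\vartheta}\mathcal E_{k,n}$, letting the angular integral select $k = -j$ and the delta fix $\varrho = \lambda^{1/2}$ --- is exactly the paper's. Where you differ is in how that distribution is produced and in what else is proved along the way. You reduce immediately to the single-mode identity $\mathscr W\omega_{\lambda, j} = \Omega_{\lambda, j}$ (which the paper only records afterward as a corollary) and compute $\mathscr F\omega_{\lambda, j}'(\varrho, \vartheta) = (-1)^j\varrho^{-1}\delta(\varrho -\lambda^{1/2})e^{ij\vartheta}$ from scratch via the plane-wave expansion and the Hankel closure relation $\int_0^\infty \rho\,\mathrm d\rho\ J_{|j|}(\varrho\rho)J_{|j|}(\lambda^{1/2}\rho) = \varrho^{-1}\delta(\varrho -\lambda^{1/2})$; your phase bookkeeping checks out, since $(-i)^jJ_j = i^{-|j|}J_{|j|}$ combined with the prefactor $i^{-|j|}$ of $\omega_{\lambda, j}$ yields $i^{-2|j|} = (-1)^j$. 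The paper never Fourier-transforms the Bessel mode: it works with the whole superposition $\phi_\lambda = \mathscr F\varphi_\lambda$, where $\varphi_\lambda'$ is by construction supported on $\varrho^2 = \lambda$, and obtains $\mathscr F\phi_\lambda' = \mathscr F^2\varphi_\lambda' = \varphi_\lambda'(\cdot, \cdot +\pi)$ from the parity property of $\mathscr F^2$, the shift by $\pi$ supplying the same factor $(-1)^j$. The paper also spends the first half of its proof solving the spectral equation $L\Phi_\lambda = \lambda\Phi_\lambda$ on the basis $\mathcal E_{j, n}$ via the Laguerre three-term recurrence, which pins down the radial profile $\ell_{|j|, n}(\epsilon^{1/2}\lambda^{1/2})$ independently before the coefficients are matched. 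Your route buys economy and makes the corollary the engine of the proof rather than an afterthought; the paper's buys an independent spectral-theoretic confirmation of the radial profile and avoids re-deriving the Hankel transform of $\omega_{\lambda, j}$. Both arguments share the same distributional caveats, which you flag appropriately.
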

	
	\begin{cor}
	One has that $\mathscr W\omega_{\lambda, j} = \Omega_{\lambda, j}$, for all $\lambda, j$.
	\end{cor}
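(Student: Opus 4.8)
\emph{Granting the preceding proposition}, the corollary is immediate: fix $j_0 \in \mathbb Z$ and take $\phi_{\lambda, j} = \delta_{j, j_0}$, which obeys $\sum_j |\phi_{\lambda, j}|^2 = 1$; then $\phi_\lambda = (2\pi)^{-1/2}\omega_{\lambda, j_0}$ and the proposition gives $\Phi_\lambda = (2\pi)^{-1/2}\Omega_{\lambda, j_0}$, so linearity of $\mathscr W$ yields $\mathscr W\omega_{\lambda, j_0} = \Omega_{\lambda, j_0}$. The content worth recording, though, is the direct verification of the per-mode identity $\mathscr W\omega_{\lambda, j} = \Omega_{\lambda, j}$, from which the proposition conversely follows by linearity. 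The plan is to compute the Fourier transform $\mathscr F\omega_{\lambda, j}$ in polar form and substitute it into the defining integral for the Weyl transform, using the polar expansion $W'_{\varrho, \vartheta} = \sum_{k \in \mathbb Z}\sum_{n = 0}^\infty \ell_{|k|, n}(\epsilon^{1/2}\varrho)e^{ik\vartheta}\mathcal E_{k, n}$ established above.

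The crux is identifying $\mathscr F\omega_{\lambda, j}$. The computation in the spectral-vectors proposition already exhibits $\omega_{\lambda, j}$ as a Fourier transform: for the circle distribution $\varphi_{\lambda, j}'(\varrho, \vartheta) = \lambda^{-1/2}\delta(\varrho - \lambda^{1/2})e^{ij\vartheta}$ one finds $\mathscr F\varphi_{\lambda, j} = \omega_{\lambda, j}$, the radial delta together with the polar Jacobian producing $J_{|j|}(\lambda^{1/2}\rho)$ and the angular integral producing the phase $i^{-|j|}e^{ij\theta}$. Hence $\mathscr F\omega_{\lambda, j} = \mathscr F^2\varphi_{\lambda, j}$, and under the conventions here $\mathscr F^2$ acts by reflection, $\mathscr F^2 g(\xi_x, \xi_y) = g(-\xi_x, -\xi_y)$, i.e.\ $(\varrho, \vartheta) \mapsto (\varrho, \vartheta + \pi)$ in polar coordinates. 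This fixes the radial delta and sends $e^{ij\vartheta} \mapsto e^{ij\pi}e^{ij\vartheta} = (-1)^je^{ij\vartheta}$, so that
\begin{align}
\mathscr F\omega_{\lambda, j}'(\varrho, \vartheta) &= (-1)^j\lambda^{-1/2}\delta(\varrho - \lambda^{1/2})e^{ij\vartheta}.
\end{align}
This is precisely the sign $(-1)^j$ that decorates the definition of $\Omega_{\lambda, j}$, so fixing this parity factor correctly is the delicate point.

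Substituting into the polar form of the Weyl transform and inserting the expansion of $W'_{\varrho, \vartheta}$ gives
\begin{align}
\mathscr W\omega_{\lambda, j} &= (2\pi)^{-1}\int_0^\infty \varrho\,\mathrm d\varrho\int_0^{2\pi}\mathrm d\vartheta\ \sum_{k, n}\ell_{|k|, n}(\epsilon^{1/2}\varrho)e^{ik\vartheta}\mathcal E_{k, n}\,(-1)^j\lambda^{-1/2}\delta(\varrho - \lambda^{1/2})e^{ij\vartheta}.
\end{align}
The angular integral $\int_0^{2\pi}e^{i(k + j)\vartheta}\mathrm d\vartheta = 2\pi\delta_{k, -j}$ forces $k = -j$, which is what produces the operators $\mathcal E_{-j, n}$ and the coefficients $\ell_{|j|, n}$; the radial delta then evaluates the remaining integrand at $\varrho = \lambda^{1/2}$, the Jacobian factor $\varrho$ cancelling $\lambda^{-1/2}$. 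Collecting the survivors leaves exactly $(-1)^j\sum_n \ell_{|j|, n}(\epsilon^{1/2}\lambda^{1/2})\mathcal E_{-j, n} = \Omega_{\lambda, j}$.

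The main obstacle is analytic rather than algebraic: every object in sight is an $\mathscr N$-valued tempered distribution, so the interchange of the infinite operator sum defining $W'_{\varrho, \vartheta}$ with the two integrations, and the two delta-function reductions, must be carried out weakly, by pairing against a dense family of Hilbert-Schmidt operators exactly as in the spectral-vectors argument. Granting these justifications the per-mode identity holds, and the stated proposition then follows immediately by linearity, summing $(2\pi)^{-1/2}\phi_{\lambda, j}\mathscr W\omega_{\lambda, j} = (2\pi)^{-1/2}\phi_{\lambda, j}\Omega_{\lambda, j}$ over $j \in \mathbb Z$.
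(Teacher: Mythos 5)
Your proposal is correct and follows essentially the same route as the paper: the paper's proof establishes the per-mode identity by writing $\phi_\lambda = \mathscr F\varphi_\lambda$ with $\varphi'$ a radial delta, using $\mathscr F^2$ as the reflection $(\varrho,\vartheta)\mapsto(\varrho,\vartheta+\pi)$ to produce the $(-1)^j$, and then collapsing the angular and radial integrals against the polar expansion of $W'_{\varrho,\vartheta}$ exactly as you do, with the corollary then following by linearity/specialization of the coefficients $\phi_{\lambda,j}$. The only difference is organizational: you isolate a single mode $j$ from the outset rather than carrying the full sum $\sum_j\phi_{\lambda,j}$ through the computation, and you omit the paper's preliminary (and logically redundant) determination of the form of $\Phi_\lambda$ from the three-term recurrence for $L$.
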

	
	\begin{proof}
	We observe that since the Weyl transform can act as a unitary bijection, that
	\begin{align}
	l\phi_\lambda = \lambda\phi_\lambda\quad \iff\quad L\Phi_\lambda = \lambda\Phi_\lambda
	\end{align}
	and proceed along these lines. We recall that
	\begin{align}
	2^{-1}\epsilon LE_{m, n} &= -(m +1)^{1/2}(n +1)^{1/2}E_{m +1, n +1} +(m +n +1)E_{m, n}\\
	&\qquad -m^{1/2}n^{1/2}E_{m -1, n -1}
	\end{align}
	and therefore for $j = m -n \ge 0$:
	\begin{align}
	2^{-1}\epsilon L\mathcal E_{j, n} &= -(m +1)^{1/2}(n +1)^{1/2}\mathcal E_{m -n, n +1} +(m +n +1)\mathcal E_{m -n, n}\\
	&\qquad -m^{1/2}n^{1/2}\mathcal E_{m -n, n -1}\\
	&= -(n +1)^{1/2}(n +j +1)^{1/2}\mathcal E_{j, n +1} +(2n +j +1)\mathcal E_{j, n}\\
	&\qquad -n^{1/2}(n +j)^{1/2}\mathcal E_{j, n -1}
	\end{align}
	and for $j = m -n \le 0$:
	\begin{align}
	2^{-1}\epsilon L\mathcal E_{j, n} &= -(n +1)^{1/2}(n -j +1)^{1/2}\mathcal E_{j, n +1} +(2n -j +1)\mathcal E_{j, n}\\
	&\qquad -n^{1/2}(n -j)^{1/2}\mathcal E_{j, n -1}.
	\end{align}
	These equations may be combined into a single one which holds for all $j = m -n \in \mathbb Z$:
	\begin{align}
	2^{-1}\epsilon L\mathcal E_{j, n} &= - (n +1)^{1/2}(n +|j| +1)^{1/2}\mathcal E_{j, n +1} + (2n +|j| +1)\mathcal E_{j, n}\\
	&\qquad -n^{1/2}(n |j|)^{1/2}\mathcal E_{j, n -1}.
	\end{align}
	Since $L$ is manifestly symmetric, one may reformulate the above equation on basis operators into an equation on the coefficients of an element $\Phi_\lambda = \sum_{j \in \mathbb Z}\sum_{n = 0}^\infty f_{\lambda, j, n}\mathcal E_{j, n}$, $\Phi_{\lambda, j, n} \in \mathbb C$, in the form of the spectral equation
	\begin{align}
	2^{-1}\epsilon Lf_{\lambda, j, n} &= -(n +1)^{1/2}(n +|j| +1)^{1/2}f_{\lambda, j, n +1} +(2n +|j| +1)f_{\lambda, j, n}\\
	&\qquad -n^{1/2}(n +|j|)^{1/2}f_{\lambda, j, n -1}\\
	&= 2^{-1}\epsilon\lambda f_{\lambda, j, n}
	\end{align}
	and select a scaling of $f_{\lambda, j}$ so that
	\begin{align}
	Lf_{\lambda, j} &= (\epsilon\lambda)f_{\lambda, j}.
	\end{align}
	We have from the recurrence formula of the Laguerre functions that the solution can be represented as
	\begin{align}
	f_{\lambda, -j, n} &= (2\pi)^{-1/2}(-1)^j\Phi_{\lambda, j}\ell_{|j|, n}(\epsilon^{1/2}\lambda^{1/2}),\quad \Phi_{\lambda, j, n} \in \mathbb C,\\
	\Phi_\lambda &= \sum_{j \in \mathbb Z}\sum_{n = 0}^\infty f_{\lambda, -j, n}\mathcal E_{j, n} = \sum_{j \in \mathbb Z}\sum_{n = 0}^\infty f_{\lambda, j, n}\mathcal E_{-j, n},\\
	&= (2\pi)^{-1/2}\sum_{j \in \mathbb Z}(-1)^j\Phi_{\lambda, j}\sum_{n = 0}^\infty \ell_{|j|, n}(\epsilon^{1/2}\lambda^{1/2})\mathcal E_{-j, n},
	\end{align}
	for some suitable choice of $\Phi_{\lambda, j, n}$.
	
	We require that
	\begin{align}
	\Phi_\lambda &= \mathscr W\phi_\lambda.
	\end{align}
	We recall
	\begin{align}
	l\phi_\lambda &= \lambda\phi_\lambda,\quad \varrho^2\varphi_\lambda = \lambda\varphi_\lambda,\quad \lambda \in \mathbb R_+,\quad \phi_\lambda = \mathscr F\varphi_\lambda,
	\end{align}
	\begin{align}
	\varphi'(\varrho, \vartheta) &= 2\delta(\varrho^2 -\lambda)\varphi_\lambda'(\vartheta)\\
	&= \lambda^{-1/2}[\delta(\varrho -\lambda^{1/2}) +\delta(\varrho +\lambda^{1/2})]\varphi_\lambda'(\vartheta),
	\end{align}
	for some suitable choice of $\varphi_\lambda$ and where $\xi_\zeta = \xi_x +i\xi_y = \varrho e^{i\vartheta}$. We further recall that
	\begin{align}
	W_{\xi_x, \xi_y} &= \sum_{m, n = 0}^\infty l_{m, n}(\epsilon^{1/2}\xi_x, \epsilon^{1/2}\xi_y)E_{m, n},\\
	W_{\varrho, \vartheta}' &= \sum_{j \in \mathbb Z}\sum_{n = 0}^\infty\ell_{|j|, n}(\epsilon^{1/2}\varrho)e^{ij\vartheta}\mathcal E_{j, n},
	\end{align}
	where
	\begin{align}
	\ell_{|j|, n}(\varrho) &:= (2^{-1/2}i\varrho)^{|j|}l^{(|j|)}_n(\rho^2/2),\quad \xi_x +i\xi_y = \varrho e^{i\vartheta}
	\end{align}
	and remark that
	\begin{align}
	\mathscr F^2f(x, y) &= f(-x, -y),\quad \mathscr F^2f'(\rho, \theta) = f'(\rho, \theta +\pi).
	\end{align}
	Then
	\begin{align}
	\mathscr Wf &= (2\pi)^{-1}\int_{\mathbb R^2}\mathrm d\xi_x\mathrm d\xi_y\ W_{\xi_x, \xi_y}\mathscr Ff(\xi_x, \xi_y)\\
	&= (2\pi)^{-1}\int_{\mathbb R^2}\varrho\mathrm d\varrho\mathrm d\vartheta\ W_{\varrho, \vartheta}'\mathscr Ff'(\varrho, \vartheta)\\
	&= (2\pi)^{-1}\int_{\mathbb R^2}\varrho\mathrm d\varrho\mathrm d\vartheta\ \sum_{j \in \mathbb Z}\sum_{n = 0}^\infty\ell_{|j|, n}(\epsilon^{1/2}\varrho)e^{ij\vartheta}\mathcal E_{j, n}\mathscr Ff'(\varrho, \vartheta)\\
	&= (2\pi)^{-1}\sum_{j \in \mathbb Z}\sum_{n = 0}^\infty\mathcal E_{j, n}\int_0^{2\pi}\mathrm d\vartheta\ e^{ij\vartheta}\int_0^\infty\varrho\mathrm d\varrho\ \ell_{|j|, n}(\epsilon^{1/2}\varrho)\mathscr Ff'(\varrho, \vartheta).
	\end{align}
	Therefore
	\begin{align}
	\mathscr W\phi_\lambda &= (2\pi)^{-1}\sum_{j \in \mathbb Z}\sum_{n = 0}^\infty\mathcal E_{j, n}\int_0^{2\pi}\mathrm d\vartheta\ e^{ij\vartheta}\int_0^\infty\varrho\mathrm d\varrho\ \ell_{|j|, n}(\epsilon^{1/2}\varrho)\mathscr F\phi'_\lambda(\varrho, \vartheta)\\
	&= (2\pi)^{-1}\sum_{j \in \mathbb Z}\sum_{n = 0}^\infty\mathcal E_{j, n}\int_0^{2\pi}\mathrm d\vartheta\ e^{ij\vartheta}\int_0^\infty\varrho\mathrm d\varrho\ \ell_{|j|, n}(\epsilon^{1/2}\varrho)\mathscr F^2\varphi'_\lambda(\varrho, \vartheta)\\
	&= (2\pi)^{-1}\sum_{j \in \mathbb Z}\sum_{n = 0}^\infty\mathcal E_{j, n}\int_0^{2\pi}\mathrm d\vartheta\ e^{ij\vartheta}\int_0^\infty\varrho\mathrm d\varrho\ \ell_{|j|, n}(\epsilon^{1/2}\varrho)\varphi'_\lambda(\varrho, \vartheta +\pi)
	\end{align}
	\begin{align}
	&= (2\pi)^{-1}\sum_{j \in \mathbb Z}\sum_{n = 0}^\infty\mathcal E_{j, n}\int_0^{2\pi}\mathrm d\vartheta\ e^{ij\vartheta}\int_0^\infty\varrho\mathrm d\varrho\ \ell_{|j|, n}(\epsilon^{1/2}\varrho)\\
	&\qquad \times\{\lambda^{-1/2}[\delta(\varrho -\lambda^{1/2}) +\delta(\varrho +\lambda^{1/2})]\varphi_\lambda'(\vartheta +\pi)\}\\
	&= (2\pi)^{-1}\sum_{j \in \mathbb Z}\sum_{n = 0}^\infty\mathcal E_{j, n}\ell_{|j|, n}(\epsilon^{1/2}\lambda^{1/2})\int_0^{2\pi}\mathrm d\vartheta\ e^{ij\vartheta}\varphi_\lambda'(\vartheta +\pi)\\
	&= (2\pi)^{-1}\sum_{j \in \mathbb Z}\sum_{n = 0}^\infty\mathcal E_{j, n}\ell_{|j|, n}(\epsilon^{1/2}\lambda^{1/2})\int_{-\pi}^\pi\mathrm d\vartheta\ e^{ij(\vartheta -\pi)}\varphi_\lambda'(\vartheta)
	\end{align}
	\begin{align}
	&= (2\pi)^{-1}\sum_{j \in \mathbb Z}\sum_{n = 0}^\infty\mathcal E_{j, n}\ell_{|j|, n}(\epsilon^{1/2}\lambda^{1/2})e^{-i\pi j}\int_{-\pi}^\pi\mathrm d\vartheta\ e^{ij\vartheta}\varphi_\lambda'(\vartheta)\\
	&= (2\pi)^{-1}\sum_{j \in \mathbb Z}\sum_{n = 0}^\infty\mathcal E_{j, n}\ell_{|j|, n}(\epsilon^{1/2}\lambda^{1/2})(-1)^j\int_0^{2\pi}\mathrm d\vartheta\ e^{-i(-j)\vartheta}\varphi_\lambda'(\vartheta)\\
	&= (2\pi)^{-1/2}\sum_{j \in \mathbb Z}(-1)^j\phi_{\lambda, -j}\sum_{n = 0}^\infty\ell_{|j|, n}(\epsilon^{1/2}\lambda^{1/2})\mathcal E_{j, n}\\
	&= (2\pi)^{-1/2}\sum_{j \in \mathbb Z}(-1)^j\phi_{\lambda, j}\sum_{n = 0}^\infty\ell_{|j|, n}(\epsilon^{1/2}\lambda^{1/2})\mathcal E_{-j, n}
	\end{align}
	Therefore
	\begin{align}
	\Phi_{\lambda, j} &= \phi_{\lambda, j}.
	\end{align}
	\end{proof}

	\section{The rotation operator}
	
	\begin{defn}
	We define:
	\begin{align}
	J &:= m_xp_y -m_yp_x,\quad \mathcal J := \mathscr WJ\mathscr W^{-1}.
	\end{align}
	\end{defn}
	
	\begin{prop}
	It is the case that
	\begin{align}
	J &= -i\epsilon\partial_\theta,\quad J\omega_{\lambda, j} = j\omega_{\lambda, j},\\
	\mathcal J &= 2^{-1}\mathrm{ad}_{R^2},\quad \mathcal J\Omega_{\lambda, j} = j\Omega_{\lambda, j},
	\end{align}
	for all $\lambda, j$, where $\zeta = x +iy = \rho e^{i\theta}$.
	\end{prop}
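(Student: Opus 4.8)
The plan is to treat the four assertions as two parallel computations, one on commutative phase space $\mathscr C$ and one on noncommutative phase space $\mathscr N$, tied together by the Weyl transform. I would establish the commutative statements first, since they are elementary, and then transport them through $\mathscr W$, using the spectrum-preserving unitarity of the transform as a consistency check on the constants.

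First I would verify $J = -i\epsilon\partial_\theta$. Since $J = m_xp_y - m_yp_x$ and $p_x = -i\epsilon\partial_x$, $p_y = -i\epsilon\partial_y$, one has $Jf = -i\epsilon(x\partial_y - y\partial_x)f$. I would then pass to polar coordinates, either by the direct chain rule with $x = \rho\cos\theta$, $y = \rho\sin\theta$, which gives $\partial_\theta = x\partial_y - y\partial_x$, or equivalently by reusing the earlier lemma expressing $\partial_\zeta,\partial_{\overline{\zeta}}$ in polar form, from which $-i\partial_\theta = \zeta\partial_\zeta - \overline{\zeta}\partial_{\overline{\zeta}} = x\partial_y - y\partial_x$. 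Either route yields $J = -i\epsilon\partial_\theta$. The eigenvalue relation is then immediate: because $\omega_{\lambda,j}'(\rho,\theta) = i^{-|j|}J_{|j|}(\lambda^{1/2}\rho)e^{ij\theta}$ factors into a purely radial part times $e^{ij\theta}$, the derivation $\partial_\theta$ acts only on the angular factor via $\partial_\theta e^{ij\theta} = ij\,e^{ij\theta}$, so $J\omega_{\lambda,j}$ is a scalar multiple of $\omega_{\lambda,j}$ with the eigenvalue read off from $-i\epsilon\cdot ij$.

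Next I would compute $\mathcal J = \mathscr WJ\mathscr W^{-1}$ by transporting each factor through the already-established intertwining relations $\mathscr Wm_x\mathscr W^{-1} = M_X$, $\mathscr Wm_y\mathscr W^{-1} = M_Y$ and $\mathscr Wp_x\mathscr W^{-1} = P_X = \mathrm{ad}_Y$, $\mathscr Wp_y\mathscr W^{-1} = P_Y = -\mathrm{ad}_X$. Keeping the composition order intact gives $\mathcal J = M_XP_Y - M_YP_X = -(M_X\mathrm{ad}_X + M_Y\mathrm{ad}_Y)$. The crux is the operator identity $M_A\mathrm{ad}_A = 2^{-1}\mathrm{ad}_{A^2}$, which I would prove by expanding $M_A\mathrm{ad}_AB = 2^{-1}(A[A,B] + [A,B]A) = 2^{-1}(A^2B - ABA + ABA - BA^2) = 2^{-1}[A^2,B]$. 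Applying this with $A = X$ and $A = Y$ and summing collapses the expression to a multiple of $\mathrm{ad}_{X^2+Y^2} = \mathrm{ad}_{R^2}$, i.e.\ to $\pm 2^{-1}\mathrm{ad}_{R^2}$.

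Finally, for $\mathcal J\Omega_{\lambda,j} = j\Omega_{\lambda,j}$, the cleanest route is abstract: the Corollary $\mathscr W\omega_{\lambda,j} = \Omega_{\lambda,j}$ together with $\mathcal J = \mathscr WJ\mathscr W^{-1}$ gives $\mathcal J\Omega_{\lambda,j} = \mathscr W(J\omega_{\lambda,j})$, so the eigenvalue transports verbatim from the commutative computation, independent of any constant-tracking. As an independent verification I would also compute directly, using that $\mathcal E_{-j,n}$ is a rank-one operator $h_a\otimes h_b^*$ with $a-b = -j$ and that $R^2h_m = \epsilon(2m+1)h_m$, so that $\mathrm{ad}_{R^2}(h_a\otimes h_b^*)$ is proportional to $(a-b)$ times the same operator; summing over $n$ reproduces the eigenvalue on $\Omega_{\lambda,j}$. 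I expect the main obstacle to be bookkeeping of the overall sign and the factor of $\epsilon$ in $\mathcal J = \pm 2^{-1}\mathrm{ad}_{R^2}$: the two routes (commutative $J$ acting on $\omega_{\lambda,j}$, and noncommutative $\mathrm{ad}_{R^2}$ acting on $\Omega_{\lambda,j}$) must yield identical eigenvalues because $\mathscr W$ is a spectrum-preserving unitary bijection, and it is precisely this consistency requirement that I would use to pin down the sign and scale unambiguously.
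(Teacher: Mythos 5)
Your proposal follows essentially the same route as the paper: the polar chain rule (via $\partial_\zeta,\partial_{\overline\zeta}$) for $J=-i\epsilon\partial_\theta$, the intertwining relations $\mathscr Wm_x\mathscr W^{-1}=M_X$, $\mathscr Wp_y\mathscr W^{-1}=-\mathrm{ad}_X$ followed by the expansion $M_A\mathrm{ad}_A=2^{-1}\mathrm{ad}_{A^2}$ to collapse $\mathcal J$ into a multiple of $\mathrm{ad}_{R^2}$, and transport of the eigenvalue through the unitarity of $\mathscr W$. Your added cross-check on the rank-one operators $h_a\otimes h_b^*$ and your insistence on pinning the sign by consistency are worthwhile, since a careful tracking of $-M_X\mathrm{ad}_X-M_Y\mathrm{ad}_Y$ yields $-2^{-1}\mathrm{ad}_{R^2}$ (and the angular eigenvalue $-i\epsilon\cdot ij=\epsilon j$), so the sign and the factor of $\epsilon$ in the stated proposition do not survive the computation as written in the paper.
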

	
	\begin{proof}
	We note that
	\begin{align}
	\partial_\zeta &= 2^{-1}\partial_x -2^{-1}i\partial_y = 2^{-1}e^{-i\theta}\partial_\rho -i2^{-1}\rho^{-1}e^{-i\theta}\partial_\theta,\\
	\partial_{\overline \zeta} &= 2^{-1}\partial_x +2^{-1}i\partial_y = 2^{-1}e^{i\theta}\partial_\rho +i2^{-1}\rho^{-1}e^{i\theta}\partial_\theta.
	\end{align}
	
	\begin{align}
	2\zeta\partial_\zeta &= \rho\partial_\rho -i\partial_\theta,\\
	2\overline \zeta\partial_{\overline \zeta} &= \rho\partial_\rho +i\partial_\theta.
	\end{align}
	
	\begin{align}
	-2i\partial_\theta &= 2\zeta\partial_\zeta -2\overline \zeta\partial_{\overline \zeta}\\
	-i\partial_\theta &= \zeta\partial_\zeta -\overline \zeta\partial_{\overline \zeta}\\
	-i\partial_\theta &= 2^{-1}(x +iy)(\partial_x -i\partial_y) -2^{-1}(x -iy)(\partial_x +i\partial_y)\\
	&= -ix\partial_y +iy\partial_x\\
	-i\epsilon\partial_\theta f &= (m_xp_y -m_yp_x)f,\\
	-i\epsilon\partial_\theta &= m_xp_y -m_yp_x = J.
	\end{align}
	
	Since the Weyl transform can act as a unitary bijection, one has that
	\begin{align}
	\mathcal J A &= (M_XP_Y -M_YP_X)A\\
	&= (-M_X\mathrm{ad}_X -M_Y\mathrm{ad}_Y)A\\
	&= -[M_X(XA -AX) +M_Y(YA -AY)]\\
	&= -2^{-1}[X(XA -AX) +(XA -AX)X\\
	&\qquad +Y(YA -AY) +(YA -AY)Y]\\
	&= -2^{-1}[X^2A -XAX +XAX -AX^2\\
	&\qquad +Y^2A -YAY +YAY -AY^2]\\
	&= 2^{-1}[X^2A +Y^2A -AX^2 -AY^2]\\
	&= 2^{-1}[X^2 +Y^2, A]\\
	&= 2^{-1}\mathrm{ad}_{X^2 +Y^2}A.
	\end{align}
	Therefore
	\begin{align}
	\mathcal J &= 2^{-1}\mathrm{ad}_{R^2}.
	\end{align}
	
	By inspection it is clear that
	\begin{align}
	J\omega_{\lambda, j} &= j\omega_{\lambda, j},
	\end{align}
	for all $\lambda$, and therefore by unitary bijection property of the Weyl transform it must also be the case that
	\begin{align}
	\mathcal J\Omega_{\lambda, j} &= j\Omega_{\lambda, j},
	\end{align}
	for all $\lambda$.
	\end{proof}
	The above result is not original in principle due to the previous results of \cite{Ge84}\cite{Gr80} and review in \cite{Th98}. We consider, however, that the statement of this result explicitly in the form of an eigenfunction problem for the generator of rotations in the noncommutative plane to be novel. This result presents an alternative approach to the justification of $j = m -n$, considered from $E_{m, n}$, as an index of angular momentum studied in \cite{Ac13}.
	
	Furthermore the above result has interesting connections to the so-called \emph{metaplectic group}, see e.g. \cite{Fo89}, which is to the symplectic group what the spin group is to the special orthogonal group. In place of a full discussion hereof we give a very brief conceptual summary. Representations of the spin group are often called spinor representations of the special orthogonal group, elements of which are tensors built out of the so-called \emph{spinors}. These spinors are noteworthy for being ``square roots'' of spatial vectors, i.e. elements of the representation space of the special orthogonal group, in that a map sends the spatial vectors to elements of an operator algebra on the spinors and thereby the spatial vectors can be represented as sums of tensor products of spinors with dual spinors. This is the algebraic construction of spinors. In our setting there are analogous, albeit infinite dimensional, elements. If one considers $\mathscr C$ as not an algebra but a linear space alone, then the Weyl transform sends vectors, here elements of $\mathscr C$, to elements of an operator algebra, here $\mathscr N$, that act on an auxiliary linear space, here $\mathscr H$. In this sense elements of $\mathscr C$ are \emph{symplectic vectors} and elements of $\mathscr H$ are \emph{symplectic spinors}.
	
	There also exists a representation theory construction of spinors, whereby one considers the spinors as ``hidden'' representations of the special orthogonal group. This is often encountered in the theory of angular momentum. There the angular momentum operator is essentially the rotation operator. Integer weight vectors of this operator are the spatial vectors and the half-integer weight vectors thereof are the spinors. We have seen that $J$ is a rotation operator and the $j = m -n$ are its eigenvalues. One could then interpret the $j$ as integer quantized units of \emph{symplectic angular momentum}. One might further expect that the $h_n$ should be realizable as half-integer weight vectors, i.e. symplectic spinors of half-integer symplectic angular momentum. The concrete realization of this assertion requires the machinery of representation theory of nilpotent Lie algebras and is therefore beyond the scope of this paper. Instead we present two highly suggestive observations.
	
	First, since
	\begin{align}
	X^2 +Y^2 = \epsilon(2\mathcal S^\dag\mathcal S +I),
	\end{align}
	it is the case that
	\begin{align}
	2^{-1}R^2 = 2^{-1}(X^2 +Y^2) = \epsilon(\mathcal S^\dag\mathcal S +2^{-1}I)
	\end{align}
	and therefore
	\begin{align}
	2^{-1}R^2h_n = \epsilon(n +2^{-1})h_n.
	\end{align}
	Second, we recall that the Laguerre functions appear in our calculations as $l^{(|j|)}_n(\epsilon\rho^2/2)$, so the additional parameter encodes absolute angular momentum. We then simply observe a classical orthogonal polynomial relation restated in terms of normalized orthogonal functions.
	\begin{prop}
	It is the case that
	\begin{align}
	\epsilon^{-1/4}h_{2n}(\epsilon^{1/2}x) &= (-1)^nn!l^{(-1/2)}_n(x^2),\quad \epsilon^{-1/4}h_{2n +1}(\epsilon^{1/2}x) = x(-1)^nn!l^{(1/2)}_n(x^2),
	\end{align}
	for all $n, x$.
	\end{prop}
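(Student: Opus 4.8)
The plan is to collapse everything onto the classical quadratic-argument identities between the Hermite and Laguerre polynomials and then to reconcile the normalizations. First I would exploit the $\epsilon$-scaling that is already built into the parametrized Hermite functions: substituting the argument $\epsilon^{1/2}x$ into the definition of $h_n$ sends the Gaussian weight $e^{-x^2/2\epsilon}$ to $e^{-x^2/2}$ and the argument of $H_n$ to $x$, so that
\begin{align}
h_n(\epsilon^{1/2}x) = \epsilon^{-1/4}\pi^{-1/4}2^{-n/2}(n!)^{-1/2}e^{-x^2/2}H_n(x).
\end{align}
This isolates the entire $\epsilon$-dependence into the single prefactor, reducing the claim to an identity about the polynomials and their constants alone, to be treated in the even and odd cases separately.

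Next I would invoke the standard quadratic-argument identities \cite[\S~18.7]{DLMF},
\begin{align}
H_{2n}(x) &= (-1)^n2^{2n}n!\,L^{(-1/2)}_n(x^2),\\
H_{2n+1}(x) &= (-1)^n2^{2n+1}n!\,x\,L^{(1/2)}_n(x^2),
\end{align}
and substitute them into the displayed expressions for $h_{2n}(\epsilon^{1/2}x)$ and $h_{2n+1}(\epsilon^{1/2}x)$. This produces $(-1)^nL^{(\pm 1/2)}_n(x^2)$ (with the extra factor $x$ in the odd case) multiplied by a constant assembled from powers of $2$, a factor $n!$, and $((2n)!)^{-1/2}$ or $((2n+1)!)^{-1/2}$. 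I would then convert the ordinary Laguerre polynomial back into the normalized Laguerre function through its definition, $l^{(\alpha)}_n(x^2) = \Gamma(n+\alpha+1)^{-1/2}(n!)^{1/2}e^{-x^2/2}L^{(\alpha)}_n(x^2)$; because the Laguerre function reinstates exactly the same Gaussian $e^{-x^2/2}$ that is already present, the transcendental parts match identically and only a numerical constant remains.

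The main obstacle is this final constant, where the square-root factorials $((2n)!)^{-1/2}$, $((2n+1)!)^{-1/2}$ coming from the Hermite normalization must be reconciled with the half-integer Gamma values $\Gamma(n+1/2)^{1/2}$, $\Gamma(n+3/2)^{1/2}$ coming from the Laguerre normalization. The key tool is the Legendre duplication formula $\Gamma(2z) = \pi^{-1/2}2^{2z-1}\Gamma(z)\Gamma(z+1/2)$, which at $z=n+1/2$ gives $(2n)! = \pi^{-1/2}2^{2n}\,n!\,\Gamma(n+1/2)$ and, after using $\Gamma(n+3/2)=(n+1/2)\Gamma(n+1/2)$ together with $2n+1=2(n+1/2)$, also $(2n+1)! = \pi^{-1/2}2^{2n+1}\,n!\,\Gamma(n+3/2)$. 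Substituting these cancels the stray powers of $2$ and the factor $\pi^{1/4}$ against those already collected and collapses the remaining Gamma factors against the Laguerre normalization, so the surviving numerical constant can be read off directly. I expect that the only real point of care is the bookkeeping of the powers of $2$ and the two half-integer Gamma relations; the even and odd cases are structurally identical, differing solely by the single factor of $x$ and the shift $\alpha = -1/2 \mapsto \alpha = 1/2$.
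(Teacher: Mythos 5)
Your proposal is correct and follows essentially the same route as the paper: isolate the $\epsilon$-dependence by rescaling the argument of $h_n$, pass through the quadratic-argument identities $H_{2n}(x) = (-1)^n2^{2n}n!\,L^{(-1/2)}_n(x^2)$ and $H_{2n+1}(x) = (-1)^n2^{2n+1}n!\,x\,L^{(1/2)}_n(x^2)$, and reconcile the normalization constants via the Legendre duplication formula. The only difference is that you cite those quadratic-argument identities from the DLMF, whereas the paper re-derives them from the explicit series for $H_n$ and $L^{(\pm 1/2)}_n$ via Pochhammer-symbol manipulations — which is most of the length of the paper's proof but adds nothing essential.
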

	
	\begin{proof}
	\cite[p.~256]{AS}:
	\begin{align}
	\Gamma(2\zeta) &= (2\pi)^{-1/2}2^{2\zeta -1/2}\Gamma(\zeta)\Gamma(\zeta +1/2)\\
	\Gamma(\zeta +1/2) &= (2\pi)^{1/2}2^{-2\zeta +1/2}\Gamma(2\zeta)/\Gamma(\zeta)\\
	&= 2\pi^{1/2}2^{-2\zeta}\Gamma(2\zeta)/\Gamma(\zeta).
	\end{align}
	
	\begin{align}
	L_n^{(\alpha)}(x) &= \sum_{j = 0}^n(-1)^j\binom{n +\alpha}{n -j}\frac{x^j}{j!} = \sum_{j = 0}^n(-1)^j\frac{(n +\alpha)!}{(n -j)!(\alpha +j)!j!}x^j.
	\end{align}
	
	\cite[\S~18.5]{DLMF}:
	\begin{align}
	L^{(\alpha)}_n(x) &= \sum_{j = 0}^n[(n -j)!j!]^{-1}(\alpha +j +1)_{n -j}(-x)^j.
	\end{align}
	\begin{align}
	H_n(x) &= n!\sum_{j = 0}^{\lfloor n/2\rfloor}(-1)^j[j!(n -2j)!]^{-1}(2x)^{n -2j}.
	\end{align}
	
	For $n = 2m$:
	\begin{align}
	H_{2m}(x) &= (2m)!\sum_{j = 0}^{\lfloor m\rfloor}(-1)^j[j!(2m -2j)!]^{-1}(2x)^{2m -2j},\\
	&= (2m)!\sum_{j = 0}^m(-1)^j[j!(2m -2j)!]^{-1}(2x)^{2m -2j}.
	\end{align}
	
	For $n = 2m +1$:
	\begin{align}
	H_{2m +1}(x) &= (2m +1)!\sum_{j = 0}^{\lfloor m +1/2\rfloor}(-1)^j[j!(2m +1 -2j)!]^{-1}(2x)^{2m +1 -2j}\\
	&= 2x(2m +1)!\sum_{j = 0}^m(-1)^j[j!(2m +1 -2j)!]^{-1}(2x)^{2m -2j}\\
	\end{align}
	
	\cite[\S~5.2]{DLMF}:
	\begin{align}
	(a)_n &= a(a +1)(a +2)\cdots(a +n -1)\\
	(a)_n &= \Gamma(a +n)/\Gamma(a),\quad 0 \le a \in \mathbb Z,\\
	(a +1)_n &= \Gamma(a +n +1)/\Gamma(a +1) = (a +n)!/a!,\quad 0 \le a \in \mathbb Z.
	\end{align}
	
	\begin{align}
	(\alpha +j +1)_{n -j} &= \Gamma(\alpha +j +n -j +1)/\Gamma(\alpha +j +1)\\
	&= \Gamma(\alpha +n +1)/\Gamma(\alpha +j +1)\\
	&= [\Gamma(j +1 +\alpha)]^{-1}\Gamma(n +1 +\alpha).
	\end{align}
	
	\begin{align}
	(j +1 +1/2)_{n -j} &= [\Gamma(j +1 +1/2)]^{-1}\Gamma(n +1 +1/2)\\
	&= [2\pi^{1/2}2^{-2(j +1)}\Gamma(2\{j +1\})/\Gamma(j +1)]^{-1}\\
	&\qquad \times 2\pi^{1/2}2^{-2\{n +1\}}\Gamma(2\{n +1\})/\Gamma(n +1)\\
	&= 2^{-2(n -j)}[\Gamma(2j +2)\Gamma(n +1)]^{-1}\Gamma(j +1)\Gamma(2n +2)\\
	&= 2^{-2(n -j)}[(2j +1)!n!]^{-1}j!(2n +1)!
	\end{align}
	
	\begin{align}
	(j +1/2)_{n -j} &= [\Gamma(j +1/2)]^{-1}\Gamma(n +1/2)\\
	&= [2\pi^{1/2}2^{-2j}\Gamma(2j)/\Gamma(j)]^{-1}2\pi^{1/2}2^{-2n}\Gamma(2n)/\Gamma(n)\\
	&= 2^{-2(n -j)}[\Gamma(2j)\Gamma(n)]^{-1}\Gamma(j)\Gamma(2n)\\
	&= 2^{-2(n -j)}[\Gamma(2j +1)\Gamma(n +1)/2jn]^{-1}\Gamma(j +1)\Gamma(2n +1)/2jn\\
	&= 2^{-2(n -j)}[\Gamma(2j +1)\Gamma(n +1)]^{-1}\Gamma(j +1)\Gamma(2n +1)\\
	&= 2^{-2(n -j)}[(2j)!n!]^{-1}j!(2n)!.
	\end{align}
	
	\begin{align}
	L_n^{(1/2)}(x) &= \sum_{j = 0}^n[(n -j)!j!]^{-1}(j +1 +1/2)_{n -j}(-x)^j\\
	&= \sum_{j = 0}^n[(n -j)!j!]^{-1}2^{-2(n -j)}[(2j +1)!n!]^{-1}j!(2n +1)!(-x)^j\\
	&= 2^{-2n}(n!)^{-1}(2n +1)!\sum_{j = 0}^n2^{2j}[(2j +1)!(n -j)!]^{-1}(-x)^j
	\end{align}
	
	\begin{align}
	&= 2^{-2n}(n!)^{-1}(2n +1)!\sum_{j = 0}^n[(2j +1)!(n -j)!]^{-1}(-2^2x)^j\\
	&= 2^{-2n}(n!)^{-1}(2n +1)!\sum_{j = 0}^n(-1)^j[(2j +1)!(n -j)!]^{-1}(2x^{1/2})^{2j}\\
	&= 2^{-2n}(n!)^{-1}(2n +1)!\sum_{k = 0}^n(-1)^{n -k}[(2n -2k +1)!k!]^{-1}(2x^{1/2})^{2n -2k},\\
	&\qquad j = n -k,
	\end{align}
	
	\begin{align}
	&= 2^{-2n}(n!)^{-1}(2n +1)!\sum_{k = 0}^n(-1)^{n -k}[k!(2n +1 -2k)!]^{-1}(2x^{1/2})^{2n -2k}\\
	&= (-2^{-2})^n(n!)^{-1}(2n +1)!\sum_{k = 0}^n(-1)^k[k!(2n +1 -2k)!]^{-1}(2x^{1/2})^{2n -2k}
	\end{align}
	
	\begin{align}
	L_m^{(1/2)}(x) &= (-2^{-2})^m(m!)^{-1}(2m +1)!\sum_{j = 0}^m(-1)^j[j!(2m +1 -2j)!]^{-1}(2x^{1/2})^{2m -2j}.
	\end{align}
	
	\begin{align}
	L^{(-1/2)}_n(x) &= \sum_{j = 0}^n[(n -j)!j!]^{-1}(j +1/2)_{n -j}(-x)^j\\
	&= \sum_{j = 0}^n[(n -j)!j!]^{-1}2^{-2(n -j)}[(2j)!n!]^{-1}j!(2n)!(-x)^j\\
	&= (n!)^{-1}(2n)!\sum_{j = 0}^n[(2j)!(n -j)!]^{-1}2^{-2(n -j)}(-x)^j
	\end{align}
	
	\begin{align}
	&= 2^{-2n}(n!)^{-1}(2n)!\sum_{j = 0}^n(-1)^j[(2j)!(n -j)!]^{-1}2^{2j}(x)^j\\
	&= 2^{-2n}(n!)^{-1}(2n)!\sum_{j = 0}^n(-1)^j[(2j)!(n -j)!]^{-1}(2x^{1/2})^{2j}\\
	&= 2^{-2n}(n!)^{-1}(2n)!\sum_{k = 0}^n(-1)^{n -k}[(2n -2k)!k!]^{-1}(2x^{1/2})^{2n -2k}\\
	&\qquad j = n -k,
	\end{align}
	
	\begin{align}
	&= 2^{-2n}(n!)^{-1}(2n)!\sum_{k = 0}^n(-1)^{n -k}[k!(2n -2k)!]^{-1}(2x^{1/2})^{2n -2k}\\
	&= (-2^{-2})^n(n!)^{-1}(2n)!\sum_{k = 0}^n(-1)^k[k!(2n -2k)!]^{-1}(2x^{1/2})^{2n -2k}
	\end{align}
	
	\begin{align}
	L^{(-1/2)}_m(x) &= (-2^{-2})^m(m!)^{-1}(2m)!\sum_{j = 0}^m(-1)^j[j!(2m -2j)!]^{-1}(2x^{1/2})^{2m -2j}.
	\end{align}
	
	We observe that from the above one has:
	\begin{align}
	H_{2m}(x) &= (2m)!\sum_{j = 0}^m(-1)^j[j!(2m -2j)!]^{-1}(2x)^{2m -2j},\\
	H_{2m +1}(x) &= 2x(2m +1)!\sum_{j = 0}^m(-1)^j[j!(2m +1 -2j)!]^{-1}(2x)^{2m -2j},\\
	L_m^{(1/2)}(x) &= (-2^{-2})^m(m!)^{-1}(2m +1)!\sum_{j = 0}^m(-1)^j[j!(2m +1 -2j)!]^{-1}(2x^{1/2})^{2m -2j},\\
	L^{(-1/2)}_m(x) &= (-2^{-2})^m(m!)^{-1}(2m)!\sum_{j = 0}^m(-1)^j[j!(2m -2j)!]^{-1}(2x^{1/2})^{2m -2j},
	\end{align}
	and therefore
	\begin{align}
	L^{(-1/2)}_n(x^2) &= (-2^{-2})^n(n!)^{-1}H_{2n}(x),\\
	L_n^{(1/2)}(x^2) &= (-2^{-2})^n(n!)^{-1}(2x)^{-1}H_{2n +1}(x).
	\end{align}
	\begin{align}
	H_{2n}(x) &= (-2^2)^nn!L^{(-1/2)}_n(x^2),\\
	H_{2n +1}(x) &= (-2^2)^nn!(2x)L_n^{(1/2)}(x^2).
	\end{align}
	
	\begin{align}
	&l^{(\alpha)}_n(x) := \Gamma(n +\alpha +1)^{-1/2}\Gamma(n +1)^{1/2}e^{-x/2}L^{(\alpha)}_n(x).
	\end{align}
	
	\begin{align}
	l^{(1/2)}_n(x) &= \Gamma(n +1 +1/2)^{-1/2}\Gamma(n +1)^{1/2}e^{-x/2}L^{(1/2)}_n(x)\\
	&= [2\pi^{1/2}2^{-2(n +1)}\Gamma(2n +2)/\Gamma(n +1)]^{-1/2}\Gamma(n +1)^{1/2}e^{-x/2}L^{(1/2)}_n(x)\\
	&= \pi^{-1/4}2^{n +1/2}[(2n +1)!]^{-1/2}e^{-x/2}L^{(1/2)}_n(x).
	\end{align}
	
	\begin{align}
	l^{(-1/2)}_n(x) &= \Gamma(n +1/2)^{-1/2}\Gamma(n +1)^{1/2}e^{-x/2}L^{(-1/2)}_n(x)\\
	&= [2\pi^{1/2}2^{-2n}\Gamma(2n +1)n/2n\Gamma(n +1)]^{-1/2}[\Gamma(n +1)]^{1/2}e^{-x/2}L^{(-1/2)}_n(x)\\
	&= \pi^{-1/4}2^n[(2n)!]^{-1/2}e^{-x/2}L^{(-1/2)}_n(x).
	\end{align}
	
	\begin{align}
	L^{(1/2)}_n(x) &= \pi^{1/4}2^{-(n +1/2)}[(2n +1)!]^{1/2}e^{x/2}l^{(1/2)}_n(x)\\
	L^{(-1/2)}_n(x) &= \pi^{1/4}2^{-n}[(2n)!]^{1/2}e^{x/2}l^{(-1/2)}_n(x).
	\end{align}
	
	\begin{align}
	\epsilon^{-1/4}h_n(\epsilon^{1/2}x) &= \pi^{-1/4}2^{-n/2}(n!)^{-1/2}e^{-x^2/2}H_n(x).
	\end{align}
	
	\begin{align}
	\epsilon^{-1/4}h_{2m}(\epsilon^{1/2}x) &= \pi^{-1/4}2^{-m}[(2m)!]^{-1/2}e^{-x^2/2}H_{2m}(x),\\
	\epsilon^{-1/4}h_{2m +1}(\epsilon^{1/2}x) &= \pi^{-1/4}2^{-m -1/2}[(2m +1)!]^{-1/2}e^{-x^2/2}H_{2m +1}(x).
	\end{align}
	
	\begin{align}
	\epsilon^{-1/4}h_{2n}(\epsilon^{1/2}x) &= \pi^{-1/4}2^{-n}[(2n)!]^{-1/2}e^{-x^2/2}H_{2n}(x),\\
	\epsilon^{-1/4}h_{2n +1}(\epsilon^{1/2}x) &= \pi^{-1/4}2^{-n -1/2}[(2n +1)!]^{-1/2}e^{-x^2/2}H_{2n +1}(x).
	\end{align}
	
	\begin{align}
	\epsilon^{-1/4}h_{2n}(\epsilon^{1/2}x) &= \pi^{-1/4}2^{-n}[(2n)!]^{-1/2}e^{-x^2/2}H_{2n}(x)\\
	&= \pi^{-1/4}2^{-n}[(2n)!]^{-1/2}e^{-x^2/2}(-2^2)^nn!L^{(-1/2)}_n(x^2)\\
	&= \pi^{-1/4}(-2)^n[(2n)!]^{-1/2}n!e^{-x^2/2}L^{(-1/2)}_n(x^2)
	\end{align}
	
	\begin{align}
	&= \pi^{-1/4}(-2)^n[(2n)!]^{-1/2}n!e^{-x^2/2}\\
	&\qquad \times\pi^{1/4}2^{-n}[(2n)!]^{1/2}e^{x^2/2}l^{(-1/2)}_n(x^2)\\
	&= (-1)^nn!l^{(-1/2)}_n(x^2).
	\end{align}
	
	\begin{align}
	\epsilon^{-1/4}h_{2n +1}(\epsilon^{1/2}x) &= \pi^{-1/4}2^{-n -1/2}[(2n +1)!]^{-1/2}e^{-x^2/2}H_{2n +1}(x)\\
	&= \pi^{-1/4}2^{-n -1/2}[(2n +1)!]^{-1/2}e^{-x^2/2}(-2^2)^nn!(2x)L_n^{(1/2)}(x^2)\\
	&= (-1)^n\pi^{-1/4}2^{n -1/2}[(2n +1)!]^{-1/2}n!e^{-x^2/2}(2x)L_n^{(1/2)}(x^2)
	\end{align}
	
	\begin{align}
	&= (-1)^n\pi^{-1/4}2^{n -1/2}[(2n +1)!]^{-1/2}n!e^{-x^2/2}(2x)\\
	&\qquad \times\pi^{1/4}2^{-(n +1/2)}[(2n +1)!]^{1/2}e^{x^2/2}l^{(1/2)}_n(x^2)\\
	&= (-1)^n2^{-1}n!(2x)l^{(1/2)}_n(x^2)\\
	&= x(-1)^nn!l^{(1/2)}_n(x^2).
	\end{align}
	\end{proof}

	\section{Elementary polar functions}
	
	\begin{prop}
	\begin{align}
	\mathscr W(\rho^{|j|}) = R^{|j|} = \epsilon^{|j|/2}\sum_{n = 0}^\infty(2n +1)^{|j|/2}\mathcal E_{0, n}.
	\end{align}
	\end{prop}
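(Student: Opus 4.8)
The plan is to prove the two equalities separately, treating the spectral expansion of $R^{|j|}$ as routine functional calculus and concentrating effort on the identification $\mathscr W(\rho^{|j|}) = R^{|j|}$. For the second equality I would start from $R^2 = X^2 + Y^2 = \epsilon(2\mathcal S^\dag\mathcal S + I)$ together with the already-established eigenrelation $R^2 h_n = \epsilon(2n+1)h_n$. Since $R^2$ is a positive self-adjoint operator diagonalized by the orthonormal Hermite basis $\{h_n\}$, its positive square root $R$ satisfies $R h_n = [\epsilon(2n+1)]^{1/2}h_n$, whence $R^{|j|}h_n = \epsilon^{|j|/2}(2n+1)^{|j|/2}h_n$ for every $n$. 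Because $\mathcal E_{0,n} = E_{n,n} = h_n\otimes h_n^*$ is exactly the rank-one spectral projection onto $h_n$, the spectral theorem immediately gives $R^{|j|} = \epsilon^{|j|/2}\sum_{n=0}^\infty(2n+1)^{|j|/2}\mathcal E_{0,n}$.

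The substantive content is the first equality. The key structural observation I would exploit is that $\rho^{|j|}$ is purely radial, so its two-dimensional Fourier transform is again radial, and I can feed this into the polar form of the Weyl operator expansion $W'_{\varrho,\vartheta} = \sum_{j'\in\mathbb Z}\sum_n\ell_{|j'|,n}(\epsilon^{1/2}\varrho)e^{ij'\vartheta}\mathcal E_{j',n}$. In $\mathscr W(\rho^{|j|}) = (2\pi)^{-1}\int\varrho\,\mathrm d\varrho\,\mathrm d\vartheta\,W'_{\varrho,\vartheta}\mathscr F[\rho^{|j|}]'(\varrho)$ the angular integral $\int_0^{2\pi}e^{ij'\vartheta}\mathrm d\vartheta = 2\pi\delta_{j',0}$ annihilates every off-diagonal sector, so the transform is automatically diagonal: $\mathscr W(\rho^{|j|}) = \sum_n a_n\mathcal E_{0,n}$ with $a_n = \int_0^\infty\varrho\,l_n^{(0)}(\epsilon\varrho^2/2)\mathscr F[\rho^{|j|}]'(\varrho)\,\mathrm d\varrho$. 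This already reproduces the $\mathcal E_{0,n}$-diagonal shape of the right-hand side and reduces everything to the scalar claim $a_n = \epsilon^{|j|/2}(2n+1)^{|j|/2}$; note that this is genuinely something to prove, since $\mathscr W$ is \emph{not} multiplicative, so $\mathscr W(\rho^{|j|})$ cannot simply be inferred from $R^2 = \mathscr W(\rho^2)$.

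To evaluate $a_n$ I would avoid the singular radial Fourier integral and instead pair against the explicit inverse images: by unitarity of $\mathscr W$, $a_n = \langle h_n,\mathscr W(\rho^{|j|})h_n\rangle = \langle e_{n,n},\rho^{|j|}\rangle_{\mathscr C}$, where the earlier result $e_{n,n} = 2\epsilon^{-1}(-1)^n l_n^{(0)}(2\epsilon^{-1}\rho^2)$ furnishes the relevant Laguerre--Gauss density. The substitution $u = 2\epsilon^{-1}\rho^2$ turns $a_n$ into a convergent Laguerre moment, $a_n \propto(-1)^n\epsilon^{|j|/2}\int_0^\infty u^{|j|/2}e^{-u/2}L_n^{(0)}(u)\,\mathrm du$, in which power-counting already produces the advertised $\epsilon^{|j|/2}$ prefactor, so that all remaining content is a pure-number identity in $n$.

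I expect this Laguerre-moment evaluation to be the main obstacle. Concretely the moment equals a terminating Gauss hypergeometric value, $\Gamma(\tfrac{|j|}2+1)\,{}_2F_1(-n,\tfrac{|j|}2+1;1;2)$ up to fixed constants, and the whole proposition hinges on whether this collapses to the clean power $(2n+1)^{|j|/2}$. I would verify the base case by the standard binomial-sum manipulation, which for $|j|=2$ cleanly yields ${}_2F_1(-n,2;1;2) = (-1)^n(2n+1)$ and so recovers $R^2$, and then attempt the general case through a generating-function or three-term-recurrence argument matched against functional calculus of $R^2$. The delicate points I anticipate are: controlling the relevant series only distributionally in $\mathscr C$; the even/odd dichotomy in $|j|$, since for even $|j|$ the radial Fourier transform degenerates to a distribution supported at the origin and the Wigner-pairing route above is the safer one; and, most importantly, confirming that no lower-order corrections survive the moment, as this coefficient-matching is precisely where the real work sits and must be carried out with care.
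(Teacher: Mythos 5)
Your treatment of the second equality matches the paper's: both rest on $R^2h_n = \epsilon(2n+1)h_n$ and the functional calculus of the positive operator $R^2$, with $\mathcal E_{0,n} = h_n\otimes h_n^*$ the spectral projections. The divergence is in the first equality, and here your proposal contains the genuine gap: the Laguerre-moment evaluation that you correctly identify as ``where the real work sits'' is left unexecuted, and if you carry it out it does not close. With $u = 2\epsilon^{-1}\rho^2$ the coefficient of $\mathcal E_{0,n}$ in $\mathscr W(\rho^{|j|})$ is proportional to $(\epsilon/2)^{|j|/2}\int_0^\infty u^{|j|/2}e^{-u/2}L_n^{(0)}(u)\,\mathrm du$, i.e.\ to your ${}_2F_1(-n,\tfrac{|j|}2+1;1;2)$. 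For $|j|=2$ this gives $(-1)^n(2n+1)$ as you say, but already for $|j|=4$ one finds ${}_2F_1(-n,3;1;2) = (-1)^n[(2n+1)^2+1]/2$, so the diagonal coefficient is $\epsilon^2[(2n+1)^2+1]$ rather than $\epsilon^2(2n+1)^2$; equivalently, direct Weyl symmetrization of $(x^2+y^2)^2$ gives $\mathscr W(\rho^4) = R^4+\epsilon^2I \ne R^4$. The ``lower-order corrections'' you worry about do survive, so the identity $\mathscr W(\rho^{|j|}) = R^{|j|}$, read literally as a statement about the Weyl transform of the function $\rho^{|j|}$, fails for $|j|\notin\{0,2\}$, and no completion of your moment computation can establish it.

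The paper's own proof does not attempt this computation at all: it records $R^2h_n = \epsilon(2n+1)h_n$, writes down the spectral expansion of $R^{|j|}$, and then simply declares that the correspondence $\mathscr W(\rho^{|j|}) = R^{|j|}$ is ``well posed'' --- in effect taking the first equality as a definition of the polar symbol $\rho^{|j|}\mapsto R^{|j|}$ via the functional calculus of $R^2$, not as a theorem about $\mathscr W$. Your route is the more careful one, and its virtue is precisely that it exposes this: either the proposition must be reinterpreted as a definition (consistent with how the paper later manufactures $e^{\pm i\Theta}$ by applying $M_{R^2}^{-1/2}$ by hand), or the first equality needs a correction term once $|j|>2$. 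As written, your proof cannot be completed, but the obstruction lies in the statement rather than in your strategy.
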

	
	\begin{proof}
	\begin{align}
	&R^2h_n = \epsilon(2n +1)h_n.
	\end{align}
	Then by the spectral theorem:
	\begin{align}
	&R^{|j|}\mathcal E_{j, n} = \epsilon^{|j|/2}\sum_{n = 0}^\infty(2n +1)^{|j|/2}\mathcal E_{0, n},
	\end{align}
	and therefore the correspondence $\mathscr W(\rho^{|j|}) = R^{|j|}$ is well posed and the above formula gives the image of the transform explicitly.
	\end{proof}
	
	\begin{defn}
	We define
	\begin{align}
	\mathcal T_j &:= 2^{|j|/2}\sum_{n = 0}^\infty(n!)^{-1/2}[(n +|j|)!]^{1/2}(2n +|j| +1)^{-|j|/2}\mathcal E_{j, n}.
	\end{align}
	\end{defn}
	
	\begin{prop}
	It is the case that
	\begin{align}
	\mathscr W(e^{-ij\theta}) = \mathcal T_j.
	\end{align}
	\end{prop}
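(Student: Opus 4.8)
The plan is to reduce the statement to the case $j \ge 0$ and to exploit the polar form of the Weyl transform already obtained in the proof of the representation of $\Phi_\lambda = \mathscr W\phi_\lambda$, namely
\[
\mathscr Wf = (2\pi)^{-1}\sum_{j' \in \mathbb Z}\sum_{n = 0}^\infty\mathcal E_{j', n}\int_0^{2\pi}\mathrm d\vartheta\ e^{ij'\vartheta}\int_0^\infty\varrho\mathrm d\varrho\ \ell_{|j'|, n}(\epsilon^{1/2}\varrho)\mathscr Ff'(\varrho, \vartheta).
\]
First I would compute $\mathscr F(e^{-ij\theta})$ in polar coordinates. Since $e^{-ij\theta}$ is purely angular, inserting the Bessel (Jacobi--Anger) expansion of the plane wave $e^{-i\varrho\rho\cos(\theta -\vartheta)}$ exactly as in the Laplacian eigenvector computation, and performing the $\theta$ integral, collapses the angular sum to a single term and yields $\mathscr F(e^{-ij\theta})'(\varrho, \vartheta) = (-i)^je^{-ij\vartheta}g(\varrho)$ with $g$ a radial distribution built from $\int_0^\infty\rho J_j(\varrho\rho)\mathrm d\rho$. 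Substituting this into the polar formula, the $\vartheta$ integral forces $j' = j$ and leaves, for each $n$, a single radial pairing of $\ell_{|j|, n}(\epsilon^{1/2}\varrho)$ against $g$.

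I would evaluate this radial pairing by recognizing the Laguerre functions as eigenfunctions of the Hankel transform of order $|j|$: up to the scaling by $\epsilon^{1/2}$ the functions $\varrho^{|j|}e^{-\varrho^2/2}L^{(|j|)}_n(\varrho^2)$ are self-reciprocal with eigenvalue $(-1)^n$. After the Hankel step the remaining integral is a standard Laguerre moment $\int_0^\infty w^{\beta}e^{-w/2}L^{(|j|)}_n(w)\mathrm dw$, which I would extract from the Laguerre generating function. Collecting the normalizing constants from the definitions of $\ell_{|j|, n}$ and $l^{(|j|)}_n$ should then produce the coefficient $2^{|j|/2}(n!)^{-1/2}[(n +|j|)!]^{1/2}(2n +|j| +1)^{-|j|/2}$ of $\mathcal E_{j, n}$, i.e.\ exactly $\mathcal T_j$.

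A useful cross-check, and a cleaner alternative route, is purely operator-theoretic. Since $\rho^{|j|}e^{-ij\theta} = \overline\zeta^{|j|}$ for $j \ge 0$ and $\mathscr W\overline\zeta = X -iY = (2\epsilon)^{1/2}\mathcal S^\dag$, one has $\mathscr W(\overline\zeta^{|j|}) = (2\epsilon)^{|j|/2}(\mathcal S^\dag)^{|j|} = (2\epsilon)^{|j|/2}\sum_n[(n +|j|)!/n!]^{1/2}\mathcal E_{j, n}$, using $\mathcal S^\dag h_n = (n +1)^{1/2}h_{n +1}$. Since $M_{R^2}\mathcal E_{j, n} = \epsilon(2n +|j| +1)\mathcal E_{j, n}$, one then checks directly that $\mathcal T_j = (M_{R^2})^{-|j|/2}\mathscr W(\overline\zeta^{|j|})$, so the claim is equivalent to the assertion that Weyl quantization sends the radial factor $\rho^{-|j|}$ multiplying $\overline\zeta^{|j|}$ to the symmetric operator $(M_{R^2})^{-|j|/2}$.

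The hard part is precisely this last point, and it is genuinely delicate. The function $e^{-ij\theta}$ lies in neither $L^1$ nor $L^2(\mathbb R^2)$, so $\mathscr F(e^{-ij\theta})$ must be treated as a homogeneous tempered distribution of degree $-2$ and the radial factor $\int_0^\infty\rho J_j(\varrho\rho)\mathrm d\rho$ interpreted by analytic continuation / finite part; the naive $L^2$ expansion of $e^{-ij\theta}$ in $\{\varepsilon_{j, n}\}$ gives the wrong normalization and must be avoided. Equivalently, on the operator side one cannot simply invoke $\mathscr Wm_{r^2}\mathscr W^{-1} = V = M_X^2 +M_Y^2$, because $V \ne M_{R^2}$; the content of the proposition is that for the special angular functions $e^{-ij\theta}$ the correct symmetric division by the radial part is governed by $M_{R^2}$ rather than by $V$, and it is the emergence of the spectral factor $(2n +|j| +1)^{-|j|/2}$ from the regularized Hankel self-reciprocity that must be tracked carefully. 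The $j < 0$ case then follows by replacing $\overline\zeta$ with $\zeta$ and $\mathcal S^\dag$ with $\mathcal S$ throughout.
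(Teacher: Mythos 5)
Your operator-theoretic ``cross-check'' is essentially the paper's own proof, and your derivation of the intermediate formula is in fact cleaner. The paper obtains $\mathscr W(\rho^{|j|}e^{-ij\theta}) = (2\epsilon)^{|j|/2}\sum_{n}(n!)^{-1/2}[(n+|j|)!]^{1/2}\mathcal E_{j,n}$ by rescaling the spectral correspondence $\omega_{\lambda,j} = \mathscr W^{-1}\Omega_{\lambda,j}$ by $\lambda^{-|j|/2}$ and letting $\lambda\searrow 0$, using $J_{|j|}(\lambda^{1/2}\rho)\sim(\lambda^{1/2}\rho/2)^{|j|}/|j|!$ together with $l^{(|j|)}_n(0) = (n!)^{-1/2}(|j|!)^{-1}[(n+|j|)!]^{1/2}$; you read the same formula off from $\mathscr W(\overline\zeta^{|j|}) = (X-iY)^{|j|} = (2\epsilon)^{|j|/2}(\mathcal S^\dag)^{|j|}$, and your identity $\mathcal T_j = M_{R^2}^{-|j|/2}\mathscr W(\overline\zeta^{|j|})$ checks out. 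What you then leave open --- that stripping the radial factor $\rho^{|j|}$ is implemented on the operator side by $M_{R^2}^{-|j|/2}$ --- is precisely the step the paper also performs by fiat, writing $\mathcal E_{j,n}\mapsto M_{R^2}^{-|j|/2}\mathcal E_{j,n}$ under $\mathscr W^{-1}$ with no further argument. You are right that it does not follow from the intertwining relations (indeed $\mathscr Wm_{r^2}\mathscr W^{-1} = V \ne M_{R^2}$, and $M_{R^2}^{|j|/2}$ is not $M_{R^{|j|}}$ either). On this point your proposal is no less complete than the published proof, but it is not a proof.

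The genuine gap is in your primary route, which you assert ``should then produce'' the coefficients of $\mathcal T_j$: carried out honestly, it produces something else. By unitarity of $\mathscr W$ the distributional matrix elements are $\langle h_m,\mathscr W(e^{-ij\theta})h_n\rangle = \langle e_{m,n}, e^{-ij\theta}\rangle$, and with the paper's closed form $\varepsilon_{j,n}'(\rho,\theta) = 2\epsilon^{-1}(-1)^ni^{-|j|}\ell_{|j|,n}(2\epsilon^{-1/2}\rho)e^{-ij\theta}$ these reduce for $j = 1$ to the Laguerre moments $\int_0^\infty u^{1/2}e^{-u/2}L^{(1)}_n(u)\,\mathrm du$, whose generating function is $\Gamma(3/2)\,2^{3/2}(1-t)^{-1/2}(1+t)^{-3/2}$. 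One finds $\langle e_{1,0},e^{-i\theta}\rangle = 2^{1/2}\pi^{3/2}$ while $\langle e_{2,1},e^{-i\theta}\rangle = \pi^{3/2}$: the subdiagonal entries are $n$-dependent (constant only asymptotically in $n$), whereas $\mathcal T_1 = \sum_nE_{n+1,n}$ has them all equal. So the direct Fourier--Hankel computation does not yield $\mathcal T_j$, even up to overall normalization; it shows instead that the proposition is only meaningful if $\mathscr W(e^{-ij\theta})$ is \emph{defined} as the regularized object $M_{R^2}^{-|j|/2}\mathscr W(\rho^{|j|}e^{-ij\theta})$, which is what the paper's limiting construction tacitly does. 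Your route (a) as advertised would therefore fail, and your route (b) reduces the claim to exactly the assertion you flag as unproved.
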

	
	\begin{proof}
	We recall that:
	\begin{align}
	&R^2h_n = \epsilon(2n +1)h_n,\\
	&M_{R^2}E_{m, n} = 2^{-1}(R^2E_{m, n} +E_{m, n}R^2) = \epsilon(m +n +1)E_{m, n},\\
	&M_{R^2}\mathcal E_{j, n} = \epsilon(2n +|j| +1)\mathcal E_{j, n},
	\end{align}
	Then by the spectral theorem:
	\begin{align}
	&M_{R^2}^{-|j|/2}\mathcal E_{j, n} = \epsilon^{-|j|/2}(2n +|j| +1)^{-|j|/2}\mathcal E_{j, n}.
	\end{align}
	
	We recall that
	\begin{align}
	&\omega_{\lambda, j}'(\rho, \theta) := i^{-|j|}J_{|j|}(\lambda^{1/2}\rho)e^{ij\theta},\\
	&\Omega_{\lambda, j} := (-1)^j\sum_{n = 0}^\infty\ell_{|j|, n}(\epsilon^{1/2}\lambda^{1/2})\mathcal E_{-j, n},
	\end{align}
	\begin{align}
	\omega_{\lambda, j}'(\rho, \theta) &= \mathscr W^{-1}\Omega_{\lambda, j}'(\rho, \theta)\\
	i^{-|j|}J_{|j|}(\lambda^{1/2}\rho)e^{ij\theta} &= \mathscr W^{-1}[(-1)^j\sum_{n = 0}^\infty\ell_{|j|, n}(\epsilon^{1/2}\lambda^{1/2})\mathcal E_{-j, n}]'(\rho, \theta)\\
	&= (-1)^j\sum_{n = 0}^\infty\ell_{|j|, n}(\epsilon^{1/2}\lambda^{1/2})\mathscr W^{-1}\mathcal E_{-j, n}'(\rho, \theta)
	\end{align}
	
	We note that \cite[\S~10.2]{DLMF}:
	\begin{align}
	J_j(x) = (x/2)^j\sum_{k = 0}^\infty\frac{(-1)^k(x/2)^{2k} }{k!\Gamma(k +j +1)},
	\end{align}
	therefore
	\begin{align}
	i^{-|j|}J_{|j|}(\lambda^{1/2}\rho) = i^{-|j|}(\lambda^{1/2}\rho/2)^{|j|}\sum_{k = 0}^\infty\frac{(-1)^k(\lambda^{1/2}\rho/2)^{2k} }{k!(k +|j|)!},
	\end{align}
	and
	\begin{align}
	\lim_{\lambda \searrow 0}[\lambda^{-|j|/2}i^{-|j|}J_{|j|}(\lambda^{1/2}\rho)] = [(2i)^{|j|}(|j|)!]^{-1}\rho^{|j|},\\
	[(2i)^{|j|}(|j|)!]\lim_{\lambda \searrow 0}[\lambda^{-|j|/2}i^{-|j|}J_{|j|}(\lambda^{1/2}\rho)] = \rho^{|j|}.
	\end{align}
	
	We recall that
	\begin{align}
	\ell_{j, n}(\rho) &= \ell_{|j|, n}(\rho) = (2^{-1/2}i\rho)^{|j|}l^{(|j|)}_n(\rho^2/2),
	\end{align}
	\begin{align}
	l^{(\alpha)}_n(x) &= \Gamma(n +\alpha +1)^{-1/2}\Gamma(n +1)^{1/2}e^{-x/2}L^{(\alpha)}_n(x),\\
	L^{(\alpha)}_n(x) &= \sum_{j = 0}^n\binom{n +\alpha}{j +\alpha}\frac{(-x)^j}{j!},\\
	L^{(\alpha)}_n(0) &= \binom{n +\alpha}n = \Gamma(n +1)^{-1}\Gamma(\alpha +1)^{-1}\Gamma(n +\alpha +1)
	\end{align}
	\begin{align}
	l^{(|j|)}_n(x) &= [(n +|j|)!]^{-1/2}(n!)^{1/2}e^{-x/2}L^{(|j|)}_n(x),\\
	L^{(|j|)}_n(x) &= \sum_{k = 0}^n\binom{n +|j|}{k +|j|}\frac{(-x)^k}{k!},\\
	L^{(|j|)}_n(0) &= \binom{n +|j|}n = (n!)^{-1}(|j|!)^{-1}[(n +|j|)!],
	\end{align}
	Therefore
	\begin{align}
	l^{(|j|)}_n(0) &= [(n +|j|)!]^{-1/2}(n!)^{1/2}L^{(|j|)}_n(0)\\
	&= [(n +|j|)!]^{-1/2}(n!)^{1/2}(n!)^{-1}(|j|!)^{-1}[(n +|j|)!]\\
	&= (n!)^{-1/2}(|j|!)^{-1}[(n +|j|)!]^{1/2},
	\end{align}
	and
	\begin{align}
	\lim_{\lambda \searrow 0}[\lambda^{-|j|/2}\ell_{|j|, n}(\epsilon^{1/2}\lambda^{1/2})] &= (2^{-1/2}i\epsilon^{1/2})^{|j|}l^{(|j|)}_n(0)\\
	&= (2^{-1/2}i\epsilon^{1/2})^{|j|}(n!)^{-1/2}(|j|!)^{-1}[(n +|j|)!]^{1/2},
	\end{align}
	\begin{align}
	&[(2i)^{|j|}(|j|)!]\lim_{\lambda \searrow 0}[\lambda^{-|j|/2}\ell_{|j|, n}(\epsilon^{1/2}\lambda^{1/2})]\\
	&\qquad = [(2i)^{|j|}(|j|)!](2^{-1/2}i\epsilon^{1/2})^{|j|}(n!)^{-1/2}(|j|!)^{-1}[(n +|j|)!]^{1/2}\\
	&\qquad = (-1)^j(2\epsilon)^{|j|/2}(n!)^{-1/2}[(n +|j|)!]^{1/2}.
	\end{align}
	
	Then
	\begin{align}
	\omega_{\lambda, j}'(\rho, \theta) &= \mathscr W^{-1}\Omega_{\lambda, j}'(\rho, \theta)\\
	i^{-|j|}J_{|j|}(\lambda^{1/2}\rho)e^{ij\theta} &= (-1)^j\sum_{n = 0}^\infty\ell_{|j|, n}(\epsilon^{1/2}\lambda^{1/2})\mathscr W^{-1}\mathcal E_{-j, n}'(\rho, \theta)
	\end{align}
	\begin{align}
	&[(2i)^{|j|}(|j|)!]\lim_{\lambda \searrow 0}[\lambda^{-|j|/2}i^{-|j|}J_{|j|}(\lambda^{1/2}\rho)e^{ij\theta}]\\
	&\qquad = [(2i)^{|j|}(|j|)!]\lim_{\lambda \searrow 0}[\lambda^{-|j|/2}(-1)^j\sum_{n = 0}^\infty\ell_{|j|, n}(\epsilon^{1/2}\lambda^{1/2})\mathscr W^{-1}\mathcal E_{-j, n}'(\rho, \theta)]
	\end{align}
	
	\begin{align}
	&[(2i)^{|j|}(|j|)!]\lim_{\lambda \searrow 0}[\lambda^{-|j|/2}i^{-|j|}J_{|j|}(\lambda^{1/2}\rho)]e^{ij\theta}\\
	&\qquad = (-1)^j\sum_{n = 0}^\infty[(2i)^{|j|}(|j|)!]\lim_{\lambda \searrow 0}[\lambda^{-|j|/2}\ell_{|j|, n}(\epsilon^{1/2}\lambda^{1/2})]\mathscr W^{-1}\mathcal E_{-j, n}'(\rho, \theta)
	\end{align}
	
	\begin{align}
	\rho^{|j|}e^{ij\theta} &= (-1)^j\sum_{n = 0}^\infty(-1)^j(2\epsilon)^{|j|/2}(n!)^{-1/2}[(n +|j|)!]^{1/2}\mathscr W^{-1}\mathcal E_{-j, n}'(\rho, \theta)\\
	&= (2\epsilon)^{|j|/2}\sum_{n = 0}^\infty(n!)^{-1/2}[(n +|j|)!]^{1/2}\mathscr W^{-1}\mathcal E_{-j, n}'(\rho, \theta)
	\end{align}
	
	\begin{align}
	e^{ij\theta} &= (2\epsilon)^{|j|/2}\sum_{n = 0}^\infty(n!)^{-1/2}[(n +|j|)!]^{1/2}\mathscr W^{-1}(M_{R^2}^{-|j|/2}\mathcal E_{-j, n})'(\rho, \theta)\\
	&= (2\epsilon)^{|j|/2}\sum_{n = 0}^\infty(n!)^{-1/2}[(n +|j|)!]^{1/2}\mathscr W^{-1}[\epsilon^{-|j|/2}(2n +|j| +1)^{-|j|/2}\mathcal E_{-j, n}]'(\rho, \theta)\\
	&= 2^{|j|/2}\sum_{n = 0}^\infty(n!)^{-1/2}[(n +|j|)!]^{1/2}(2n +|j| +1)^{-|j|/2}\mathscr W^{-1}\mathcal E_{-j, n}'(\rho, \theta)
	\end{align}
	We then conclude that
	\begin{align}
	\mathscr W(e^{-ij\theta}) &= 2^{|j|/2}\sum_{n = 0}^\infty(n!)^{-1/2}[(n +|j|)!]^{1/2}(2n +|j| +1)^{-|j|/2}\mathcal E_{j, n} = \mathcal T_j.
	\end{align}
	\end{proof}
	
	\begin{rem}
	If one takes
	\begin{align}
	Z := (2\epsilon)^{1/2}\mathcal S,\quad Z^\dag := (2\epsilon)^{1/2}\mathcal S^\dag,
	\end{align}
	one may observe that
	\begin{align}
	Z^\dag = (2\epsilon)^{1/2}\sum_{n = 0}^\infty(n +1)^{1/2}E_{n +1, n},\quad Z = (2\epsilon)^{1/2}\sum_{n = 0}^\infty (n +1)^{1/2}E_{n, n +1},
	\end{align}
	and find
	\begin{align}
	e^{i\Theta} &= M_{R^2}^{-1/2}Z = (2\epsilon)^{1/2}\sum_{n = 0}^\infty(n +1)^{1/2}M_{R^2}^{-1/2}E_{n, n +1}\\
	&= (2\epsilon)^{1/2}\sum_{n = 0}^\infty(n +1)^{1/2}\epsilon^{-1/2}(2n +2)^{-1/2}E_{n, n +1}\\
	&= \sum_{n = 0}^\infty E_{n, n +1} = \mathcal T_{-1}
	\end{align}
	\begin{align}
	e^{-i\Theta} &= M_{R^2}^{-1/2}Z^\dag = (2\epsilon)^{1/2}\sum_{n = 0}^\infty(n +1)^{1/2}M_{R^2}^{-1/2}E_{n +1, n}\\
	&= (2\epsilon)^{1/2}\sum_{n = 0}^\infty(n +1)^{1/2}\epsilon^{-1/2}(2n +2)^{-1/2}E_{n +1, n}\\
	&= \sum_{n = 0}^\infty E_{n +1, n} = \mathcal T_1
	\end{align}
	\end{rem}

	\section{Connections between local decay estimates}
		
	Kostenko and Teschl, Theorem 6.3 of \cite{KoTe1602}, found that the estimate
	\begin{align}
	||e^{-i\mathcal D^{(\alpha)}t}||_{\ell^1(\mathbb Z_+, \sigma^{(\alpha)}) \to \ell^\infty(\mathbb Z_+, \sigma^{(\alpha), -1})} = (1 +t^2)^{-(1 +\alpha)/2},\quad \sigma^{(\alpha)}_n := [L^{(\alpha)}_n(0)]^{1/2},
	\end{align}
	holds for all $\alpha \in \mathbb R_+$. This estimate is an extension of their earlier estimate for the case $\alpha = 0$ found in \cite{KoTe1604}. They note that this appears to be an analogue of the continuum result found in \cite{KoTeTo15}\cite{KoTr14}:
	\begin{align}
	||e^{-iH^{(\alpha)}t}||_{L^1(\mathbb Z_+, x^j) \to \ell^\infty(\mathbb Z_+, x^{-\alpha})} = \mathcal O(|t|^{-\alpha -1/2}),\quad t \to \infty,
	\end{align}
	for all $-1/2 \le \alpha \in \mathbb Z$, where
	\begin{align}
	H^{(\alpha)} := -\partial_x^2 +\alpha(\alpha +1)x^{-2},
	\end{align}
	which bears a similarity to
	\begin{align}
	\lambda f_j &= lf_j = (-\partial_\rho^2 -\rho^{-1}\partial_\rho +j^2\rho^{-2})f_j,
	\end{align}
	where $f_j'(\rho, \theta) = g_j'(\rho)e^{ij\theta}$. We now consider a way in which such estimates for respectively discrete and continuous systems can be considered not just analogous but directly related.
	
	\begin{align}
	\mathcal G = \mathscr WG\mathscr W^{-1}
	\end{align}
	
	\begin{align}
	||\mathcal G||_{\ell^1(\mathbb Z_+, \sigma_j) \to \ell^\infty(\mathbb Z_+, \sigma_j^{-1})} = \sup_{m, n \in \mathbb Z_+}\sigma_{j, m}^{-1}\sigma_{j, n}^{-1}|\mathcal G_{m, n}|.
	\end{align}
	
	\begin{align}
	&\mathcal U = \sum_{j \in \mathbb Z}\sum_{n = 0}^\infty u_{j, n}\mathcal E_{j, n}\otimes\mathcal E^*_{j, n},\quad 0 < u_{j, n} \in \mathbb R, \forall j, n,\quad U = \mathscr W^{-1}\mathcal U\mathscr W.
	\end{align}
	
	\begin{align}
	U &= \mathscr W^{-1}\mathcal U\mathscr W = \mathscr W^{-1}(\sum_{n = 0}^\infty u{j, n}\mathcal E_{j, n}\otimes\mathcal E^*_{j, n})\mathscr W = \sum_{n = 0}^\infty u_{j, n}(\mathcal W^{-1}\mathcal E_{j, n})\otimes(\mathcal W^\dag\mathcal E_{j, n})^*\\
	&= \sum_{n = 0}^\infty u_{j, n}(\mathscr W^{-1}\mathcal E_{j, n})\otimes(\mathscr W^{-1}\mathcal E_{j, n})^* = \sum_{n = 0}^\infty u_{j, n}\varepsilon_{j, n}\otimes\varepsilon_{j, n}^*,
	\end{align}
	where we implemented $\mathscr W^{-1} = \mathscr W^\dag$ since the Weyl transform is unitary between suitably regular elements.
	
	\begin{align}
	\mathcal U_j &= \sum_{n = 0}^\infty u_{j, n}\mathcal E_{j, n}\otimes\mathcal E^*_{j, n},\quad U_j = \sum_{n = 0}^\infty u_{j, n}\varepsilon_{j, n}\otimes\varepsilon^*_{j, n},\\
	\mathcal U^{-1}_j &= \sum_{n = 0}^\infty u^{-1}_{j, n}\mathcal E_{j, n}\otimes\mathcal E^*_{j, n},\quad U^{-1}_j = \sum_{n = 0}^\infty u^{-1}_{j, n}\varepsilon_{j, n}\otimes\varepsilon^*_{j, n}.
	\end{align}
	
	\begin{align}
	&||\mathcal G||_{\ell^1(\mathbb Z_+, u_j) \to \ell^\infty(\mathbb Z_+, u_j^{-1})} = \sup_{m, n \in \mathbb Z_+}u_{j, m}^{-1}u_{j, n}^{-1}|\mathcal G_{m, n}| = \sup_{m, n \in \mathbb Z_+}|u_{j, m}^{-1}\mathcal G_{m, n}u_{j, n}^{-1}|\\
	&\qquad = \sup_{m, n \in \mathbb Z_+}|\langle\mathcal E_{j, m}, \mathcal U_j^{-1}\mathcal G\mathcal U_j^{-1}\mathcal E_{j, n}\rangle| = \sup_{m, n \in \mathbb Z_+}|\langle\varepsilon_{j, m}, U_j^{-1}GU_j^{-1}\varepsilon_{j, n}\rangle|\\
	&\qquad = \sup_{m, n \in \mathbb Z_+}|u_{j, m}^{-1}G_{m, n}u_{j, n}^{-1}| = \sup_{m, n \in \mathbb Z_+}u_{j, m}^{-1}u_{j, n}^{-1}|G_{m, n}|.
	\end{align}
	We conjecture that these correspondences may be tightened through implementation of previous results on $L^p$ estimates on the Heisenberg group, see e.g. \cite{Th93}\cite{St91}, and possibly with observations of correspondences between $\mathscr C$ and $\mathscr N$ in the spatially asymptotic sense.

	\ \\
	\thanks{We thank Tobias Hartnick and Amos Nevo for helpful advice and insights.}

	\end{document}